\documentclass[acmsmall,screen]{acmart}
% \AtBeginDocument{%
%   \providecommand\BibTeX{{%
%     Bib\TeX}}}

\PassOptionsToPackage{authoryear}{natbib}
\PassOptionsToPackage{sort}{natbib}
\RequirePackage{natbib}
%\settopmatter{printacmref=true}
%\bibliographystyle{plainnat}

%\renewcommand\footnotetextcopyrightpermission[1]{} % removes footnote with conference information in first column

%%% The following is specific to PLDI '24 and the paper
%%% 'Probabilistic Programming with Programmable Variational Inference'
%%% by McCoy Becker, Alexander K. Lew, Xiaoyan Wang, Matin Ghavami, Mathieu Huot, Martin C. Rinard, and Vikash K. Mansinghka.
%%%
\setcopyright{rightsretained}
\acmDOI{10.1145/3656463}
\acmYear{2024}
\copyrightyear{2024}
\acmSubmissionID{pldi24main-p861-p}
\acmJournal{PACMPL}
\acmVolume{8}
\acmNumber{PLDI}
\acmArticle{233}
\acmMonth{6}
\received{2023-11-16}
\received[accepted]{2024-03-31}

\usepackage{natbib}
\bibliographystyle{ACM-Reference-Format}
%\usepackage[backend=biber,style=alphabetic,sorting=ynt]{biblatex}
%\citestyle{authoryear}
%\addbibresource{references.bib}
%\addbibresource{morerefs.bib}

\usepackage[framemethod=tikz]{mdframed}
\usepackage{mathpartir}
\usepackage{lipsum}
\usepackage{wrapfig}
\usepackage{mathtools}
\usepackage{array}
\usepackage{makecell}
\usepackage{arydshln}
\usepackage{multirow}
\usepackage{enumitem}
\usepackage{tabularx}
\usepackage{subcaption}
\mdfdefinestyle{linesStyle}{%
  outerlinewidth=0.05mm,
  leftline=false,
  rightline=false,
  topline=true,
  bottomline=true,
  skipabove=5pt,
  skipbelow=5pt,
}
% Redefine \paragraph
\makeatletter
\renewcommand{\paragraph}{%
  \@startsection{paragraph}{4}%
  {\z@}{1.25ex \@plus 0.5ex \@minus .2ex}{-1em}%
  {\normalfont\normalsize\bfseries}%
}
\makeatother

\usepackage{tikz}
\usetikzlibrary{calc}
\usepackage{moresize}
\usepackage[frozencache,cachedir=.]{minted}
% \setminted[]{frame=none,linenos=true,autogobble=true}
\usemintedstyle{friendly}
\usepackage{esint}
\usepackage{textcomp}
\usepackage{pifont}% http://ctan.org/pkg/pifont
\newcommand{\RR}{\mathbb{R}}

\newcommand\doubleplus{+\kern-1.3ex+\kern0.8ex}
\newcommand\mdoubleplus{\ensuremath{\mathbin{+\mkern-5mu+}}}

\usepackage{tabularx}

\newcommand{\trace}[0]{\text{Trace}}

\newcommand{\ter}[0]{t}

\newcommand{\sem}[1]{\llbracket #1\rrbracket}
\newcommand{\type}[0]{\tau}
\newcommand{\var}[0]{x}

\newcommand{\return}[0]{\textbf{return}}

\newcommand{\marginal}[0]{\textbf{marginal}}
\newcommand{\normalize}[0]{\textbf{normalize}}

\newcommand{\op}[0]{op}

\newcommand{\pop}[1]{\ensuremath{\partial_i}\op}

\newcommand{\ad}[1]{\ensuremath{\mathcal{D}\{#1\}}}
\newcommand{\dens}[0]{{D}}

\newcommand{\eRR}[0]{\widetilde{\RR}}
\newcommand{\lin}[0]{\textbf{in}}
\newcommand{\llet}[0]{\textbf{let}}

\newcommand{\str}[0]{\text{Str}}

\newcommand{\sample}[0]{\textbf{sample}}

\newcommand{\haskdo}[0]{\textbf{do}}
\newcommand{\normal}[0]{\textbf{normal}}
\newcommand{\flip}[0]{\textbf{flip}}

\newcommand{\observe}[0]{\textbf{observe}}

\newcommand{\score}[0]{\textbf{score}}

\usepackage[export]{adjustbox}

\begin{document}

\title{Probabilistic Programming with Programmable\texorpdfstring{\newline}{} Variational Inference}

\author{McCoy R. Becker}
\orcid{0009-0000-1930-8150}
\affiliation{%
  \institution{MIT}
  \city{Cambridge}
  \country{USA}
}
\email{mccoyb@mit.edu}
\authornote{Equal contribution.}

\author{Alexander K. Lew}
\orcid{0000-0002-9262-4392}
\affiliation{%
  \institution{MIT}
  \city{Cambridge}
  \country{USA}
}
\email{alexlew@mit.edu}
\authornotemark[1]

\author{Xiaoyan Wang}
\orcid{0000-0001-7058-4679}
\affiliation{%
  \institution{MIT}
  \city{Cambridge}
  \country{USA}
}
\email{xyw@mit.edu}

\author{Matin Ghavami}
\orcid{0000-0003-3052-7412}
\affiliation{%
  \institution{MIT}
  \city{Cambridge}
  \country{USA}
}
\email{mghavami@mit.edu}

\author{Mathieu Huot}
\orcid{0000-0002-5294-9088}
\affiliation{%
  \institution{MIT}
  \city{Cambridge}
  \country{USA}
}
\email{mhuot@mit.edu}

\author{Martin C. Rinard}
\orcid{0000-0001-8095-8523}
\affiliation{%
  \institution{MIT}
  \city{Cambridge}
  \country{USA}
}
\email{rinard@mit.edu}

\author{Vikash K. Mansinghka}
\orcid{0000-0003-2507-0833}
\affiliation{%
  \institution{MIT}
  \city{Cambridge}
  \country{USA}
}
\email{vkm@mit.edu}

\renewcommand{\shortauthors}{M. R. Becker, A. K. Lew, X. Wang, M. Ghavami, M. Huot, M. C. Rinard, V. K. Mansingkha}
\renewcommand{\shorttitle}{Probabilistic Programming with Programmable Variational Inference}

\begin{abstract}
    
Compared to the wide array of advanced Monte Carlo methods supported by modern probabilistic programming languages (PPLs), PPL support for \textit{variational inference} (VI) is less developed: users are typically limited to a predefined selection of variational objectives and gradient estimators, which are implemented monolithically (and without
formal correctness arguments) in PPL backends.
In this paper, we propose a more modular approach to supporting variational inference in PPLs, based on compositional program transformation.
%We present (1) a probabilistic programming language for defining models, variational families, and compositional strategies for propagating gradients, (2) a differentiable programming language for defining variational objectives and (3) an automatic differentiation algorithm that differentiate these variational objectives, yielding provably unbiased gradient estimators for use during optimization. Models and variational families from the first language are automatically compiled into new differentiable functions that can be called from the second language, for estimating densities and expectations.
In our approach, variational objectives are expressed as programs, that may employ first-class constructs for computing \textit{densities of} and \textit{expected values under} user-defined models and variational families. We then transform these programs systematically into unbiased gradient estimators for optimizing the objectives they define. %We then extend the automatic differentiation algorithm of~\citet{lew_adev_2023} to compute provably unbiased gradient estimates for optimizing these objectives.
Our design enables modular reasoning about many interacting concerns, including automatic differentiation, density accumulation, tracing, and the application of unbiased gradient estimation strategies.
Additionally, relative to existing support for VI in PPLs, our design increases expressiveness along three axes: (1) it supports an open-ended set of user-defined variational objectives, rather than a fixed menu of options; (2) it supports a combinatorial space of gradient estimation strategies, many not automated by today's PPLs; and (3) it supports a broader class of models and variational families, because it supports constructs for approximate marginalization and normalization (previously introduced only for Monte Carlo inference). %This makes it possible to concisely express many models, variational families, objectives, and gradient estimators from the machine learning literature, including importance-weighted autoencoders (IWAE), hierarchical variational inference (HVI), and reweighted wake-sleep (RWS). 
% We establish the correctness of our gradient automation 
% using logical relations over the semantics of $\lambda \nabla_\text{SP}$, a new calculus synthesized by combining a calculus for probabilistic programming with density estimation ($\lambda_{\text{SP}}$) and a calculus for unbiased gradient estimation of expected values (ADEV). 
We implement our approach in an extension to the Gen probabilistic programming system (\href{https://gen.dev/genjax/vi}{\texttt{genjax.vi}}, implemented in JAX), and evaluate our automation on several deep generative modeling tasks, showing minimal performance overhead vs. hand-coded implementations and performance competitive with well-established open-source PPLs.

\end{abstract}

%%
%% The code below is generated by the tool at http://dl.acm.org/ccs.cfm.
%% Please copy and paste the code instead of the example below.
%%
\begin{CCSXML}
<ccs2012>
   <concept>
       <concept_id>10011007.10011006.10011039.10011311</concept_id>
       <concept_desc>Software and its engineering~Semantics</concept_desc>
       <concept_significance>500</concept_significance>
       </concept>
   <concept>
       <concept_id>10002950.10003648.10003662.10003664</concept_id>
       <concept_desc>Mathematics of computing~Bayesian computation</concept_desc>
       <concept_significance>500</concept_significance>
       </concept>
   <concept>
       <concept_id>10002950.10003705.10003708</concept_id>
       <concept_desc>Mathematics of computing~Statistical software</concept_desc>
       <concept_significance>300</concept_significance>
       </concept>
   <concept>
       <concept_id>10002950.10003648.10003670.10003675</concept_id>
       <concept_desc>Mathematics of computing~Variational methods</concept_desc>
       <concept_significance>500</concept_significance>
       </concept>
 </ccs2012>
\end{CCSXML}

\ccsdesc[500]{Software and its engineering~Semantics}
\ccsdesc[500]{Mathematics of computing~Variational methods}
\ccsdesc[500]{Mathematics of computing~Bayesian computation}
\ccsdesc[300]{Mathematics of computing~Statistical software}

\keywords{probabilistic programming, automatic differentiation, variational inference}

\maketitle

\section{Introduction}
\label{sec:intro}

\textit{Variational inference} (VI) is a popular approach to two fundamental probabilistic modeling tasks:
\begin{itemize}[leftmargin=*]
\item \textbf{Fitting probabilistic models to data.} Given a family of joint probability distributions $\mathcal{P} = \{P_\theta(x, y) \mid \theta \in \mathbb{R}^n\}$ defined over \textit{latent variables} $x$ and \textit{observed variables} $y$, find the one that best explains an observed dataset $\mathbf{y}$. For example, writing $p_\theta$ for the probability density function of $P_\theta$, we may be interested in finding $\theta \in \mathbb{R}^n$ that maximizes the \textit{marginal likelihood} \begin{equation}
    p_\theta(\mathbf{y}) = \int_X p_\theta(x, \mathbf{y}) dx.\label{eqn:marginal_likelihood}
\end{equation}
\item \textbf{Approximating intractable posterior distributions.} For a particular probabilistic model $P_\theta(x, y)$, find the best approximation to the (usually intractable) posterior distribution $P_\theta(x \mid \mathbf{y})$, from a class $\mathcal{Q} = \{Q_\phi(x) \mid \phi \in \mathbb{R}^m\}$ of tractable approximations (the \textit{variational family}). For example, again using lower-case letters for probability density functions, we may be interested in finding $\phi$ that minimizes the \textit{reverse KL divergence} \begin{equation}D_{KL}(Q_\phi(x) \,||\, P_\theta(x \mid \mathbf{y})) = -\mathbb{E}_{x \sim Q_\phi}\!\left[\log \frac{p_\theta(x \mid \mathbf{y})}{q_\phi(x)}\right].\label{eqn:kl_divergence}\end{equation}
\end{itemize}

Practitioners often aim to solve both these tasks at once, simultaneously fitting a probabilistic model and approximating its posterior distribution. To do so, one defines a \textit{variational objective} $\mathcal{F} : \mathcal{P} \times \mathcal{Q} \to \mathbb{R}$, mapping particular distributions $P_\theta$ and $Q_\phi$ to a scalar \textit{loss} (or \textit{reward}). For example, one common choice is the \textit{evidence lower bound}, or ELBO: \begin{equation}\mathbf{ELBO}(P, Q) := \mathbb{E}_{x\sim Q}[\log p(x, \mathbf{y}) - \log q(x)] = \log p(\mathbf{y}) - D_{KL}(Q(x) || P(x \mid \mathbf{y}))\label{eqn:elbo}\end{equation}
As the decomposition on the right-hand side suggests, maximizing the ELBO simultaneously maximizes the (log) marginal likelihood of the data $\mathbf{y}$ and minimizes the KL divergence of the posterior approximation $Q$ to the posterior. Besides the ELBO, researchers have also proposed many alternative objectives~\cite{dempster_maximum_1977,hinton_wake-sleep_1995,agakov_auxiliary_2004,bornschein_reweighted_2015,burda_importance_2016,ranganath_hierarchical_2016,rainforth_tighter_2018,sobolev_importance_2019,malkin2022gflownets}, which formalize the two goals of \textit{fitting models} and \textit{approximating posteriors} differently (e.g., by using divergences other than the KL).
Once a variational objective has been defined, practitioners aim to find parameters $(\theta, \phi)$ that maximize (or minimize) $\mathcal{F}(P_\theta, Q_\phi)$. There are many possible approaches to performing this optimization. The most popular methods rely on \textit{gradients} $\nabla_{(\theta, \phi)} \mathcal{F}(P_\theta, Q_\phi)$ of the objective\textemdash or more often, unbiased stochastic estimates of these gradients. Designing and implementing algorithms for estimating these gradients, with sufficiently low variance and computational expense, is the key roadblock on the path from  \textit{defining} a variational inference problem to solving it.

Indeed, although variational inference algorithms have found widespread adoption in Bayesian statistics~\citep{fox2012tutorial,kucukelbir2017automatic,blei2017variational,hoffman2013stochastic,blei2006variational} and in probabilistic deep learning~\cite{kingma_auto-encoding_2022,pu2016variational,kingma2021variational,vahdat2020nvae,malkin2022gflownets}, implementing variational inference algorithms by hand remains a tedious and error-prone endeavor. The key mathematical ingredients specifying a variational inference problem\textemdash $\mathcal{P}$, $\mathcal{Q}$, and $\mathcal{F}$\textemdash are typically not represented directly in code; rather, the practitioner must:

\begin{enumerate}[leftmargin=*]
% \item Define a \textit{variational family}: a parametric class $\{Q_\theta \mid \theta \in \mathbb{R}^n\}$ of distributions to search over.
% \item Define a real-valued objective function $Q \mapsto \mathcal{L}(Q)$ to optimize. For example, $\mathcal{L}$ might be the KL divergence of $Q$ to another distribution $P$ ($\mathcal{L}(Q) = \mathbb{E}_{x \sim Q}[\log \frac{p(x)}{q(x)}]$).
\item use algebra, probability theory, and calculus to derive a \textit{gradient estimator}: a way to rewrite the gradient $\nabla_{(\theta, \phi)} \mathcal{F}(P_\theta, Q_\phi)$ as an expectation $\mathbb{E}_{y \sim M_{(\theta,\phi)}}[f_{(\theta, \phi)}(y)]$, for some family of distributions $M$ and some family of functions $f$; and then
\item write code to sample $y \sim M_{(\theta,\phi)}$ and evaluate $f_{(\theta, \phi)}(y)$\textemdash an unbiased estimate of $\nabla_{(\theta,\phi)} \mathcal{F}(P_\theta, Q_\phi)$.
%\item Run stochastic gradient descent, using unbiased gradient estimates obtained by sampling $x \sim M_\theta$ and evaluating $f_\theta(x)$. 
\end{enumerate}
It is often non-trivial to ensure that $M$ and $f$ are faithfully implemented, and that the math used to derive them in the first place is error-free. Small changes to $\mathcal{F}$, $\mathcal{P}$, or $\mathcal{Q}$, or to the gradient estimation strategy employed in step (1), can require large, non-local changes to $M$ and $f$, and in implementing these changes, it is easy to introduce hard-to-detect bugs. When optimization fails, it is often unclear whether the problem is with the math, the code, or just the hyperparameters.
%\vspace{-2mm}

\paragraph{Automation via Probabilistic Programming.} Reflecting the importance of variational inference, many probabilistic programming languages (PPLs) (especially ``deep'' PPLs, such as Pyro~\cite{bingham_pyro_2018}, Edward~\cite{tran_2017_deep,tran2018simple}, and ProbTorch \cite{stites_learning_2021}), feature varying degrees of automation for VI workflows. In these languages, users can express both models $\mathcal{P}$ and variational families $\mathcal{Q}$  as probabilistic programs; the system then automates the estimation of gradients for a pre-defined set of supported variational objectives $\mathcal{F}$. This design significantly lowers the cost of implementing and iterating on variational inference algorithms, but several pain points remain:  %selected from a pre-defined menu of options. %Early implementations relied on black-box variational inference~\cite{ranganath_black_2014}, But recent years have seen many improvements: for example, Pyro~\cite{bingham_pyro_2018} supports an impressive roster of roster of variational objectives and gradient estimation strategies. But they do so via a {menu-based} approach: users can freely specify $\mathcal{P}$ and $\mathcal{Q}$ as probabilistic programs, but then must choose from a pre-defined set of options for the variational objective $\mathcal{F}$. Each objective in turn supports only limited customization of the strategy for estimating its gradient. This approach has several drawbacks:
\begin{itemize}[leftmargin=*]
\item \textbf{Incomplete coverage.} Existing PPLs offer limited or no support for many variational objectives, including forward KL objectives~\cite{naesseth_markovian_2020}; hierarchical, nested, or recursive variational objectives~\cite{ranganath_hierarchical_2016,zimmermann_nested_2021,lew_recursive_2022}; symmetric divergences~\cite{domke_easy_2021}; trajectory-balance objectives~\cite{malkin2022gflownets}; SMC-based objectives~\cite{maddison_filtering_2017,gu_neural_2015,li_neural_2023,naesseth2018variational}; and others. Today's PPLs also do not automate many powerful gradient estimation strategies, for example those based on measure-valued differentiation~\citep{mohamed2020monte}.
%\item \textbf{Incomplete coverage (gradient estimators).} Even when the user's desired variational objective is supported, it may not implement the gradient estimation strategy that the user wants to apply. For example, Pyro supports importance-weighted objectives~\cite{burda_importance_2016} (\texttt{RenyiELBO}), but uses high-variance gradient estimators~\cite{williams_simple_1992} to handle discrete variables in the variational family. Variance reduction strategies like data-dependent baselines or enumeration, although implemented as parts of other menu options, cannot be applied with importance-weighted objectives.
\item \textbf{Duplicative engineering effort.} For PPL maintainers, supporting new gradient estimation strategies or language features requires separately introducing the same logic into the implementations of multiple variational objectives. Because this engineering effort is non-trivial, many capabilities are not uniformly supported. For example, as of this writing, Pyro's \texttt{ReweightedWakeSleep} objective~\cite{le_revisiting_2019} does not support minibatching, even though other objectives do. As another example, variance reduction strategies such as data-dependent baselines and enumeration of discrete latents are implemented for the ELBO in Pyro, but not, e.g., for the importance-weighted ELBO.
\item \textbf{Difficulty of reasoning.} The monolithic implementations of each variational objective's gradient estimation logic intertwine various concerns, including log density accumulation, automatic differentiation, gradient propagation through stochastic choices, and variance reduction logic. This can make it difficult to reason about correctness. Indeed, while the community has made tremendous progress in understanding the compositional correctness arguments of an increasingly broad class of Monte Carlo inference methods %\footnote{Monte Carlo methods are another approach to approximating intractable distributions, such as the posteriors of probabilistic inference problems.} 
for probabilistic programs~\cite{scibior_denotational_2017,scibior_functional_2018,lew_probabilistic_2023,lunden2021correctness,borgstrom2016lambda}, pioneering work on correctness for variational inference~\cite{lee_towards_2019,lee_smoothness_2023,li2023type} has generally focused on specific properties (e.g., smoothness and absolute continuity) in somewhat restricted languages, and not to end-to-end correctness of gradient estimation for variational inference. 
\end{itemize}

\paragraph{This Work.} In this paper, we present a highly modular, programmable approach to supporting variational inference in PPLs. In our approach, all three ingredients of the variational inference problem\textemdash $\mathcal{P}$, $\mathcal{Q}$, and $\mathcal{F}$\textemdash are encoded as programs in expressive probabilistic languages, which support compositional annotation for specifying the desired mix of gradient estimation strategies. We then use a sequence of modular program transformations\textemdash each of which we independently prove correct\textemdash to construct unbiased gradient estimators for the user's variational objective.%\\\vspace{-1mm}%\footnote{A critical element to making this work is the recently introduced ADEV framework for modular gradient estimation~\cite{lew_adev_2023}: we specialize and extend ADEV to the variational inference setting. Our extensions include new constructs for differentiably tracing and estimating the densities of models and variational families (crucial for expressing virtually all variational objectives), a \textit{reverse-mode} ADEV implementation that efficiently computes gradient estimates in one pass (instead of one pass per parameter), and the first GPU-accelerated implementation of ADEV-style AD.}  %A particularly important extension  \textit{densities} of probabilistic programs, required by virtually all variational objectives. ADEV works in \textit{forward-mode}, and thus is hopelessly inefficient for optimizing neural networks with many parameters. I, and variational objectives $\mathcal{F}$ are typically expressed as functions of both expected values and density functions. The ADEV algorithm is a \textit{forward-mode} 

\paragraph{Contributions.} This paper contributes:
\begin{itemize}[leftmargin=*]
    \item \textbf{Languages for models, variational families, and variational objectives}: We present an expressive language for models and variational families (\S\ref{sec:gen-lang}), similar to Gen, ProbTorch, or Pyro, along with an expressive differentiable language for variational objective functions (\S\ref{sec:adev-lang}).%, but with support for compositional annotation of gradient estimation strategies.
    %\item \textbf{Objective language}: In contrast to the menu-based approach of existing PPLs, we present an expressive differentiable language for variational objectives (\S\ref{sec:adev-lang}), by extending ADEV~\cite{lew_adev_2023} with constructs for tracing and estimating densities of probabilistic programs (\S\ref{sec:gen-transformations}).
    %\item \textbf{Modular design}: 
    \item \textbf{Flexible, modular automation}: We automate a broad class of unbiased gradient estimators for variational objectives (\S\ref{sec:adev}). New primitive gradient estimation strategies can be added modularly with just a few lines of code, without deeply understanding system internals (Appx.~\ref{appdx:extensibility}).
    \item \textbf{Formalization}: We formalize our approach as a sequence of composable program transformations (\S\ref{sec:gen-transformations}-\ref{sec:vi}) of simply-typed $\lambda$-calculi for probabilistic programs (\S\ref{sec:syntax}), and prove the unbiasedness of gradient estimation (under mild technical conditions) by logical relations (\S\ref{sec:correctness}). Ours is the first formal account of variational inference for PPLs that accounts for the interactions between tracing, density computation, gradient estimation strategies, and automatic differentiation.
    \item \textbf{System}: We contribute \texttt{genjax.vi}, a performant, GPU-accelerated implementation of our approach in JAX~\cite{frostig2018compiling}, which also extends our formal modeling language with constructs for marginalization and normalization (\S\ref{sec:sp})~\cite{lew_probabilistic_2023}. We also contribute concise, pedagogical Haskell and Julia versions. Our implementations are the first to feature \textit{reverse-mode} variants of the ADEV algorithm for modularly differentiating higher-order probabilistic programs~\cite{lew_adev_2023}.\footnote{System and code available at  \url{https://gen.dev/genjax/vi}}
    \item \textbf{Empirical evaluation}: We evaluate \texttt{genjax.vi} on several benchmark tasks, including the challenging Attend-Infer-Repeat model~\cite{eslami_attend_2016}. We find that \texttt{genjax.vi} makes it possible to encode new gradient estimators that converge faster and to better solutions than Pyro's estimators. We also show, for the first time, that a version of ADEV can be scalably and performantly implemented, to deliver competitive performance on realistic probabilistic deep learning workloads.\footnote{\citet{lew_adev_2023} present only a toy Haskell implementation of  forward-mode ADEV, and report no experiments.}
\end{itemize}

Programmability is sometimes seen as being at odds with automation. We emphasize that this paper \textit{expands} the automation provided by the system, relative to existing VI support in PPLs, by automating a combinatorial space of gradient estimators for {arbitrary objectives} specified as programs (rather than the handful of objectives and estimators supported by existing PPLs).%, and supporting a combinatorial space of gradient estimation strategies, specified using compositional annotations on generative programs. %In \S\ref{sec:evaluation}, we show that the expanded class of gradient estimators that our system automates can out-perform standard estimators supported by existing systems.

% \noindent \textbf{Contributions.} This paper contributes:
% \begin{itemize}[leftmargin=*]
%     \item The first formalization for variational inference in a higher order probabilistic language, including \textit{compositional correctness guarantees} covering usage of a wide variety of variational objectives and gradient estimator strategies. 
%     \item Our language design exposes automation for \textit{expressive} variational objectives, including those defined by utilizing \textbf{marginal} to construct marginal distributions.
%     \item An \textbf{evaluation of our approach} on a set of case studies designed to test the performance and expressivity implications of our design. We establish several properties of our design: the overhead which our approach incurs compared to hand coded estimators is negligible, and our performance comparisons with well-supported variational inference PPLs are highly competitive. Our design also supports variational objectives which are not currently supported by any PPL - without losing in runtime cost.
% \end{itemize}

\section{Overview}
\label{sec:overview}
Fig.~\ref{fig:pipeline-exact} illustrates the workflow of a typical user of our system for modular variational inference: %our overall approach to modular variational inference is illustrated in Fig.~\ref{fig:pipeline-exact}. Before walking through a more concrete example, we first briefly summarize the typical user workflow:
%In this section, we give an overview of our approach to modular variational inference, illustrated in Fig.~\ref{fig:pipeline-exact}. %The key technical challenge in variational inference is the computation of \textit{unbiased gradient estimates} for the \textit{variational objective} that the user wishes to optimize. 
%We describe our approach at a high level, and then work through a toy example in detail (Figs.~\ref{fig:example_transcript} and~\ref{fig:objectives}). Along the way, we will also review several key ideas in variational inference. %First, we describe the basic workflow for users: % based around a three-step workflow:

\begin{figure}
    \setlength{\belowcaptionskip}{-10pt}
    \includegraphics[width=0.99\linewidth]{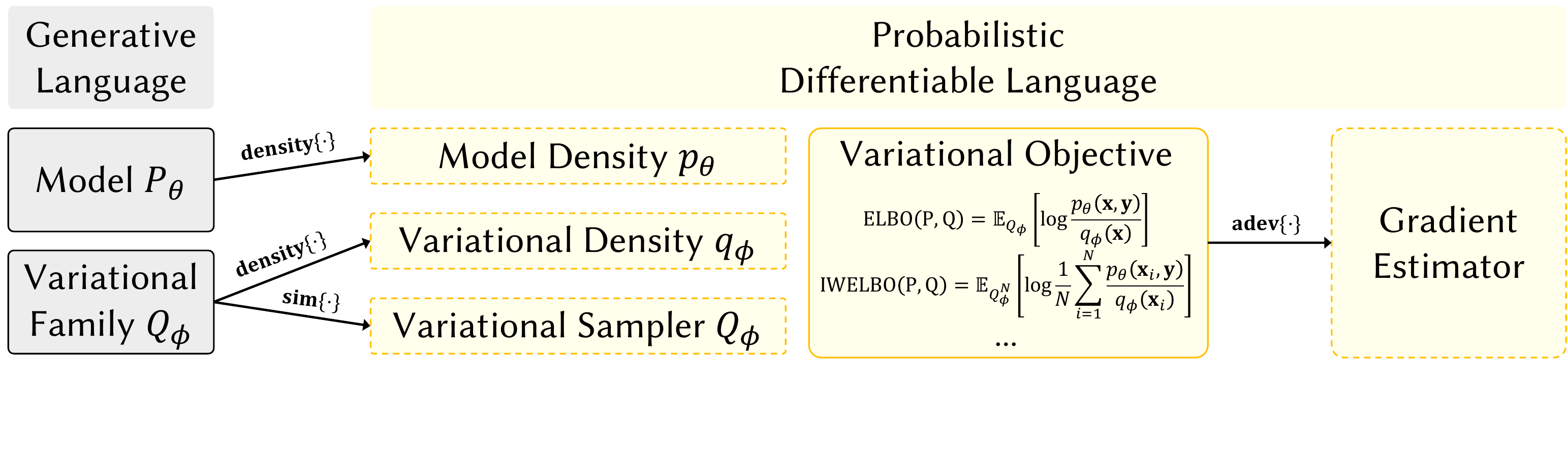}
    \vspace{-7mm}
    \caption{We compose multiple program transformations to automate the construction of unbiased gradient estimators for variational inference. The user begins by writing programs in the generative language %$\lambda_{\text{Gen}}$ 
    (gray) to encode a model and variational family. These programs are compiled into procedures for density evaluation and simulation in the differentiable language %$\lambda_{\text{ADEV}}$ 
    (yellow). These automated procedures can be used to concisely define a variational objective. %e.g., the ELBO $(\theta, \phi) \mapsto \mathbb{E}_{x \sim Q_\phi}\left[\log \frac{p_\theta(x)}{q_\phi(x)}\right]$, also in %$\lambda_{\text{ADEV}}$
    We can then apply the ADEV differentiation algorithm to automatically construct a \textit{gradient estimator}, which unbiasedly estimates gradients of the variational objective. Solid outlines indicate user-written programs, whereas dashed outlines indicate automatically constructed programs.}
    \Description{An illustration of our pipeline.}
    \label{fig:pipeline-exact}
\end{figure}

\begin{itemize}[leftmargin=*]
    \item \textbf{Model and variational programs.} The user begins by writing two probabilistic programs in our \textit{generative language}: a \textit{model program} and a \textit{variational program}. The generative language is a trace-based PPL that resembles Pyro~\cite{bingham_pyro_2018}, ProbTorch~\cite{stites_learning_2021}, and Gen~\cite{cusumano-towner_gen_2019}. The model program encodes a family of joint probability distributions $P_\theta({x}, {y})$, %The key tasks that variational inference can help users solve are: (1) approximating the posterior distribution $P_\theta(x \mid \mathbf{y})$, and (2) finding model parameters $\theta$ that maximize the marginal likelihood $P_\theta(\mathbf{y})$ of the observed data.
    and the variational program encodes a family of distributions $Q_\phi({x})$, possible approximations to the posterior $P_\theta({x} \mid \mathbf{y})$ for data $\mathbf{y}$.
 %   \item \textbf{Model and variational family.} The user implements a \textit{model program} and a \textit{variational family} (also known as a \textit{guide program}) using the \textit{generative language}. The generative language is a trace-based PPL that resembles Pyro~\cite{bingham_pyro_2018}, ProbTorch~\cite{stites_learning_2021}, and Gen~\cite{cusumano-towner_gen_2019}. %Programs in the language can be compositionally annotated with information about what \textit{gradient estimation strategy} should be applied. %but goes beyond them in two ways: it features new, first-class constructs for \textit{normalizing} and \textit{marginalizing} probabilistic programs, and allows programs to be compositionally annotated with information about what \textit{gradient estimation strategy} should be applied to different components. %Each primitive distribution comes in several varieties, each equipped with a different gradient estimation strategy for that primitive.

    \item \textbf{Objective function.} The user now seeks to find values of $(\theta, \phi)$ that simultaneously (1) fit $P_\theta$ to the data $\mathbf{y}$, and (2) make $Q_\phi$ close to the posterior $P_\theta({x} \mid \mathbf{y})$. To make these informal desiderata precise, the user defines an \textit{objective function}, using our \textit{differentiable  language}. %A standard choice is the \textit{evidence lower bound}, or ELBO: $$\mathbf{ELBO}(P, Q) = \mathbb{E}_{\mathbf{x} \sim Q}\left[\log P(\mathbf{x}, \mathbf{y}) - \log Q(\mathbf{x})\right] = \log P(\mathbf{y}) - KL(Q(\mathbf{x}) || P(\mathbf{x} \mid \mathbf{y})).$$ As the decomposition on the right-hand side suggests, maximizing this objective function encourages $P$ to assign high probability to the observed data $\mathbf{y}$, and $Q$ to be close to $P(\mathbf{x} \mid \mathbf{y})$ in KL divergence. The ELBO is not, however, the only choice of objective, and many alternatives have been proposed~\cite{dempster_maximum_1977,hinton_wake-sleep_1995,agakov_auxiliary_2004,bornschein_reweighted_2015,burda_importance_2016,ranganath_hierarchical_2016,rainforth_tighter_2018,sobolev_importance_2019,malkin2022gflownets}.
    %To help users explore a wide range of possible objective functions, we provide a \textit{differentiable probabilistic language} for expressing them. 
    This language features constructs for taking expectations with respect to, and evaluating densities of, generative language programs, making it easy to concisely express objectives like the ELBO (Eqn.~\ref{eqn:elbo}).
    
    \item \textbf{Gradient estimator.} The final step is to optimize the objective function via stochastic gradient ascent. %The key difficulty is that the objective function is usually an expected value with respect to some distribution, e.g. $Q_\phi$, that depends on the parameters being optimized. Practitioners manually employ some combination of \textit{gradient estimation strategies}~\cite{mohamed2020monte}, not all of which are applicable to every model and variational family, to compute \textit{unbiased estimates} of the gradient of the objective function. %\footnote{Systems like Pyro, ProbTorch, and Gen can automate particular gradient estimation strategies for particular variational objectives, but do not guarantee unbiasedness, and even when they are unbiased, the automated gradient estimators may have excessive variance, hindering training.} 
    We construct unbiased gradient estimators for the user's objective function with an extended version of the ADEV algorithm~\cite{lew_adev_2023}. Users can rapidly explore a combinatorial space of estimation strategies with compositional annotations on their generative programs, to navigate tradeoffs between the variance and the computational cost of the automated gradient estimator. 
\end{itemize}

%We begin by introducing the syntax and semantics of our two formal languages, $\lambda_{\text{Gen}}$ and $\lambda_{\text{ADEV}}$ (\S\ref{sec:syntax}). We then present our first program transformations, for converting $\lambda_{\text{Gen}}$ probabilistic programs into differentiable $\lambda_{\text{ADEV}}$ procedures for simulation and density evaluation (\S\ref{sec:gen-transformations}). We next show how those procedures can be concisely composed into larger $\lambda_{\text{ADEV}}$ programs encoding common objective functions for variational inference (\S\ref{sec:objectives}), and conclude by presenting the ADEV transformation that derives unbiased gradient estimators for those objectives (\S\ref{sec:adev}).

% \input{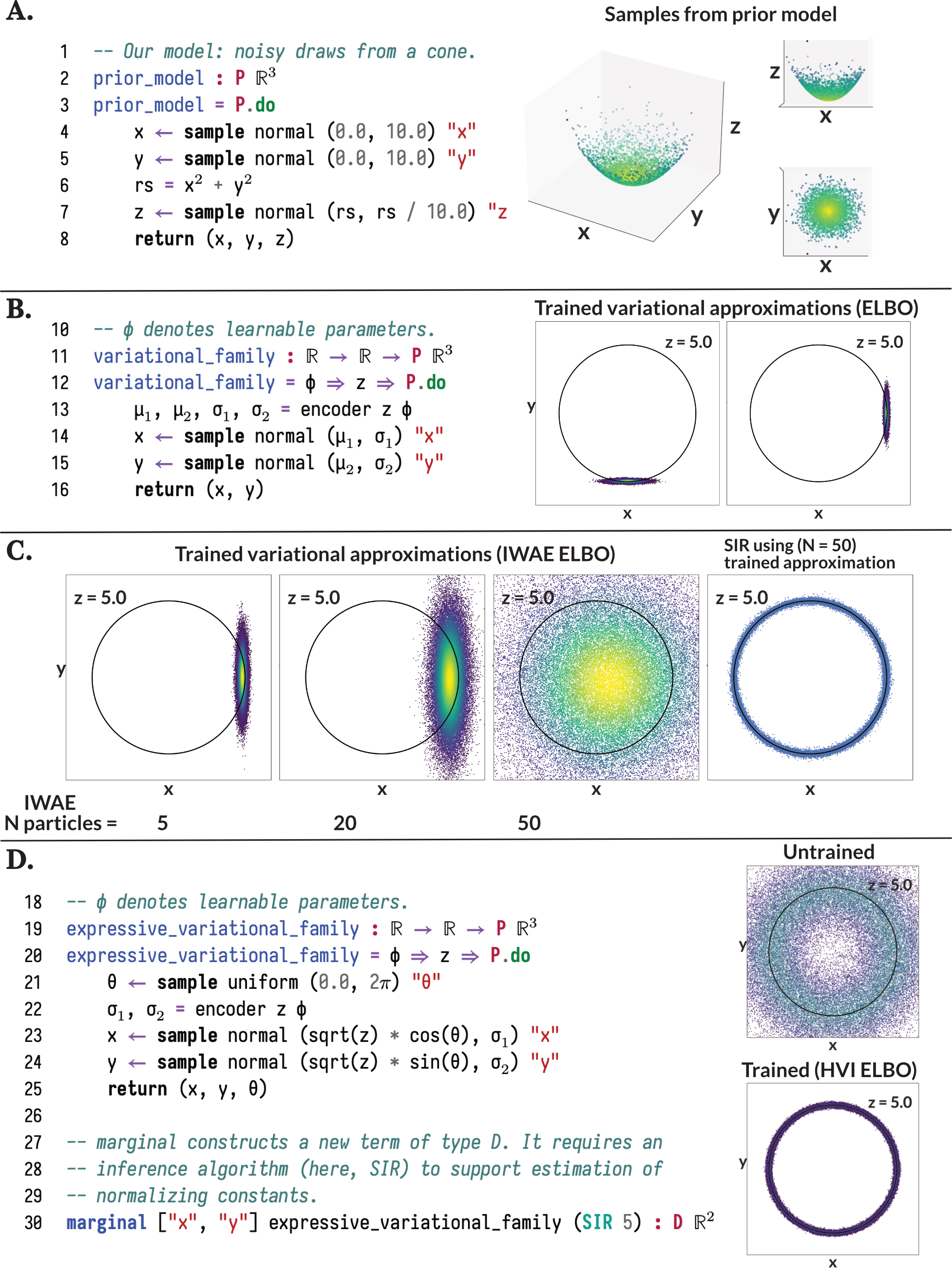}
\begin{figure}
    \includegraphics[width=\textwidth]{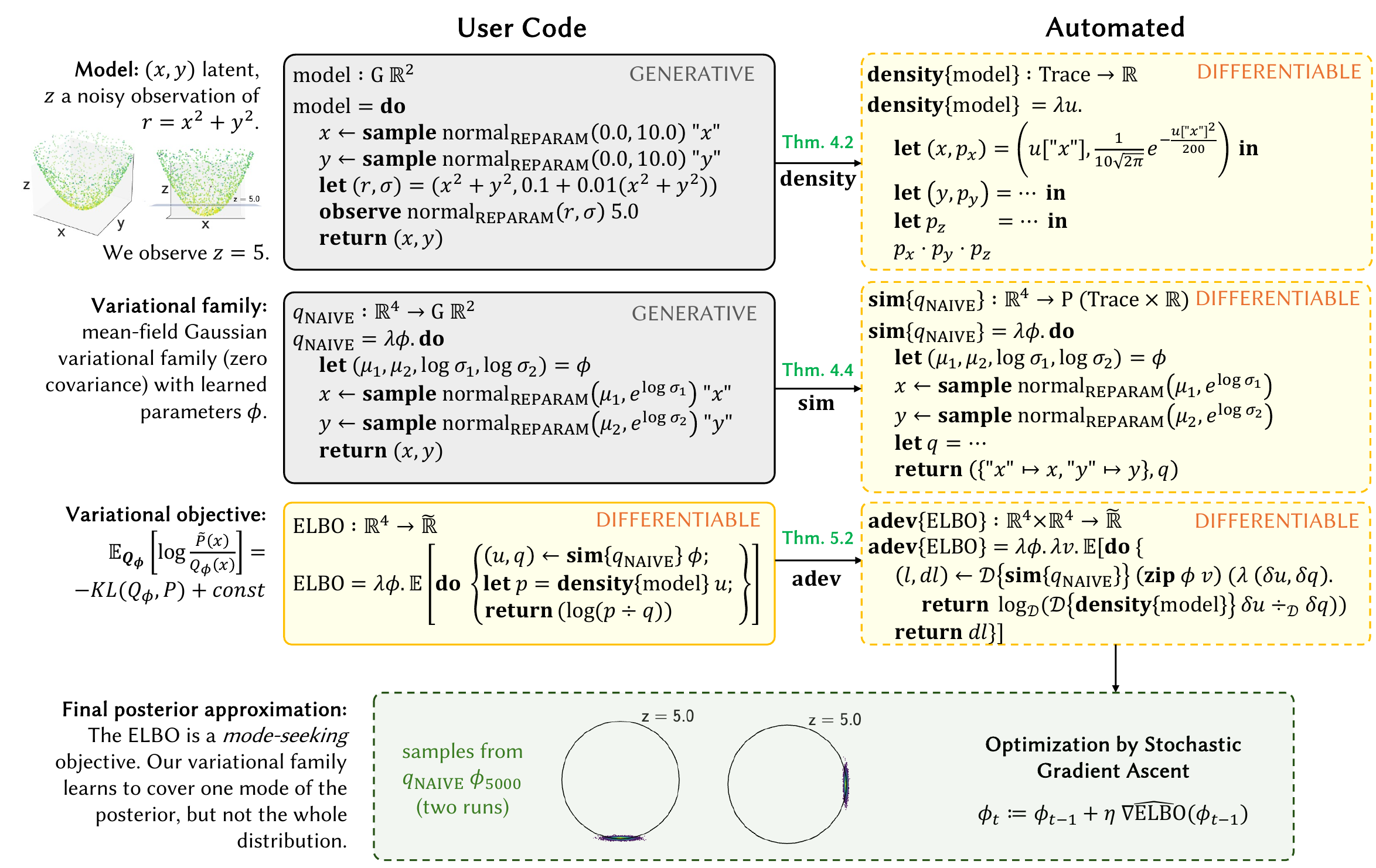}
    \caption{An illustration of our modular approach to automating variational inference, on a toy example. \textbf{(Top)} Users write \textit{generative} code to define a \textit{model} and a \textit{variational family}. Automated program transformations, formalized and proven correct in \S\ref{sec:gen-transformations}, compile \textit{differentiable} code for evaluating densities and simulating traces. \textbf{(Middle)} Users write a program in the differentiable language to define a variational objective, in this case the \textit{evidence lower bound} (ELBO). This code may invoke compiled simulators and density evaluators for generative programs. The \textbf{adev} transformation automates an unbiased gradient estimator for the objective. \textbf{(Bottom)} The gradients are used for optimization, training the variational family to approximate the posterior.}
    \label{fig:example_transcript}
    \vspace{-3mm}
\end{figure}

\paragraph{Example.} To make this concrete, consider the toy problem illustrated in Fig. ~\ref{fig:example_transcript}, which we seek to solve by training a variational approximation to the posterior. We go through the following steps:
\begin{itemize}[leftmargin=*]
\item \textbf{Define a model}. Our model encodes a generative process for points $(x, y, z)$ around a 3D cone. We use $\sample$ to sample latents $x$ and $y$ with string-valued \textit{names}, and $\observe$ to condition on the observation that $z=5.0$. Our goal is to infer $(x, y)$ consistent with this observation. 

\item \textbf{Define a variational family}. In the second panel of Fig.~\ref{fig:example_transcript}, we construct a \textit{variational family}, a parametric family of possible approximations to the posterior distribution. Our variational inference task will be to learn parameters that maximize the quality of the approximation. Our $\texttt{q}_\texttt{NAIVE}$ is a \textit{mean-field} variational approximation, i.e., it generates $x$ and $y$ independently. Note that primitive distributions (here, \texttt{normal}) are annotated with gradient estimation strategies (here, \texttt{REPARAM}) for propagating derivative information through the corresponding primitive.\footnote{For a primitive distribution $\mu_\theta$ over values of type $X$, parameterized by arguments $\theta \in \RR^n$, a gradient estimation strategy for $\mu_\theta$ is an approach to unbiasedly estimating $\nabla_\theta\mathbb{E}_{x \sim \mu_\theta}[f(\theta, x)]$ for functions $f : \RR^n \times X \to \RR$. ADEV composes these primitive estimation strategies into composite strategies for estimating gradients of variational objectives. Supported strategies vary by primitive; for the Normal distribution, for instance, they include \texttt{REPARAM}, \texttt{MEASURE-VALUED}, and \texttt{REINFORCE}, corresponding to different approaches to gradient estimation from the literature. See~\S\ref{sec:adev}.}

% \item \textbf{Compile the model and guide programs}
% Given model and guide programs in $\lambda_\text{Gen}$, our program transformation automation compiles procedures for \textit{density evaluation} ($\textbf{density}\{\text{model}\}$) and \textit{trace simulation} ($\textbf{sim}\{\text{guide}\}$) (with density evaluation) (Section \ref{sec:sp}). The compiled implementation of these procedures is in a new language $\lambda_\text{ADEV}$, a language which supports automatic differentiation.

\item \textbf{Define a variational objective}. We now use the differentiable language to define the objective function we wish to optimize. Three constructs are especially useful: (1) \textbf{density}, which computes or estimates densities of probabilistic programs; (2) \textbf{sim}, which generates a pair $(x, w)$ of a sample and its density from a probabilistic program; and (3) $\mathbb{E}$, which takes the expected value of a stochastic procedure. We use them together to implement the ELBO objective from Eqn.~\ref{eqn:elbo}.
% With a differentiable density evaluation procedure $\textbf{density}\{P\}$ for the model program (we shorten to $P$) and a trace simulation procedure $\textbf{sim}\{Q\}$ for the guide program (shortened to $Q$), we can write a variational objective. Using these procedures and $\lambda_\text{ADEV}$, we can readily construct the \textit{evidence lower bound} (ELBO) objective common to variational inference:

% \noindent By virtue of the fact that the ELBO program is expressed in $\lambda_\text{ADEV}$, and is a term of type $\mathbb{R}^{N + M} \rightarrow \widetilde{\mathbb{R}}$, we can apply ADEV's automatic differentiation transformation to produce an unbiased gradient estimator for the program.

\item \textbf{Perform stochastic optimization.} Our objective is compiled into an \textit{unbiased estimator} of its \textit{gradient} with respect to the input parameters $\phi$. We can then apply stochastic optimization algorithms, such as stochastic gradient ascent and ADAM. The bottom of Fig.~\ref{fig:example_transcript} illustrates samples from $\texttt{q}_\texttt{NAIVE}$ after training. Because the ELBO minimizes the \textit{reverse} (or \textit{mode-seeking}) KL divergence, our variational approximation learns to hug one edge of the circle-shaped posterior.
\end{itemize}

\begin{figure}
    \includegraphics[width=\textwidth]{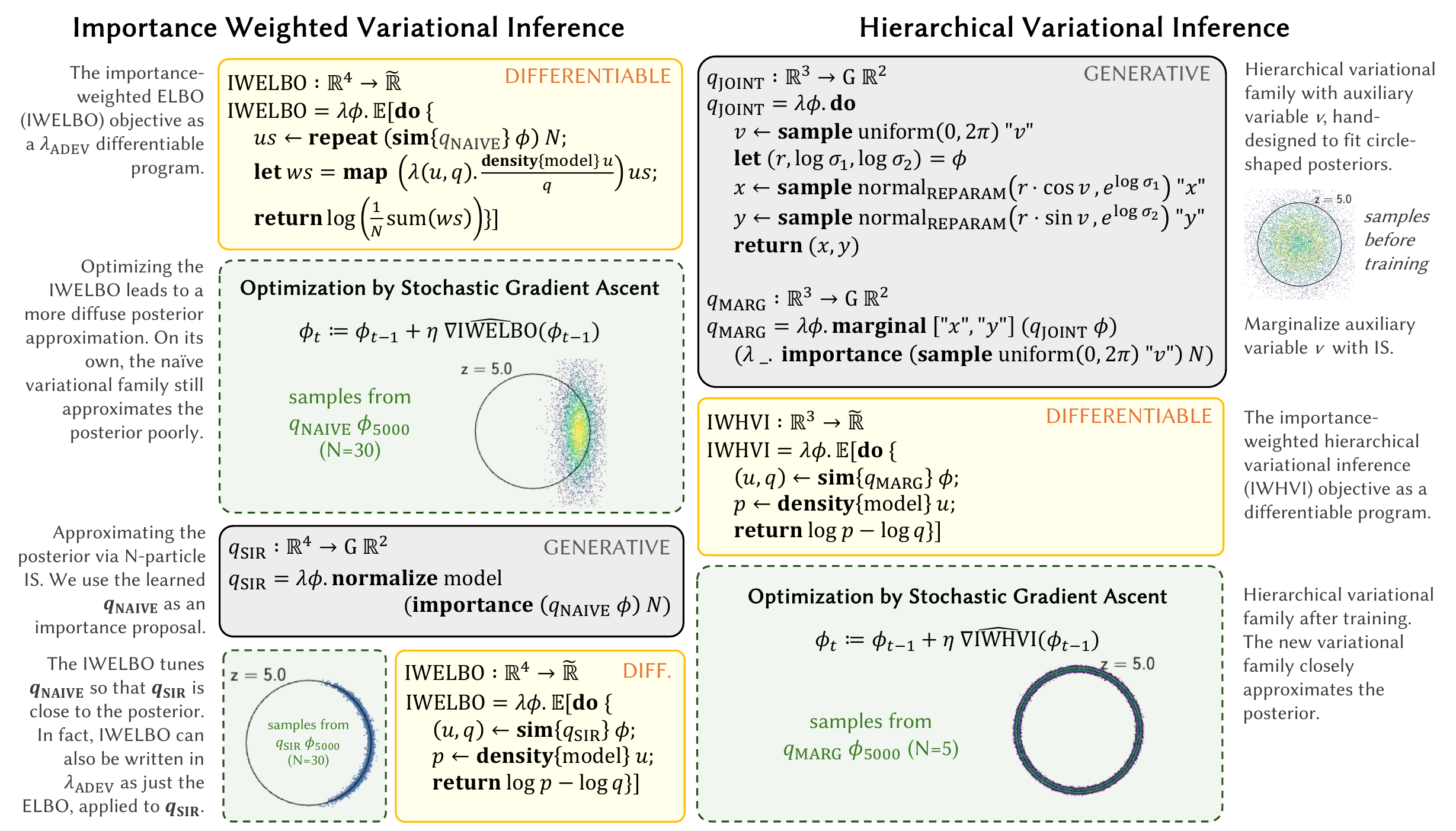}
    \caption{With programmable VI, users can define their own variational objectives, and use new modeling language features to program more expressive models and variational families. Here, we apply importance-weighted VI~\citep{burda_importance_2016} and hierarchical VI~\citep{ranganath_hierarchical_2016,sobolev_importance_2019} to the toy problem from Fig.~\ref{fig:example_transcript}.}
    \label{fig:objectives}
\end{figure}
\paragraph{Better Inference with Programmable Objectives.} To better fit the whole posterior, we will need either a new objective function or a more expressive variational family. Our system's programmability allows users to quickly iterate within a broad space of VI algorithms. In Fig.~\ref{fig:objectives}, we illustrate two possible strategies for improving the posterior approximation:

% For example, the importance-weighted ELBO (IWELBO) objective in Fig.~\ref{fig:objectives} no longer directly minimizes the KL divergence between $Q$ and the posterior, and can therefore avoid the ``mode-seeking'' behavior we just observed. Alternatively, we can stick with the ELBO and change the variational family. Although not part of our formal calculus, our full system (\S\ref{sec:sp}) extends the generative language with constructs not supported by other PPLs that automate VI, constructs that make it easy to add expressive power to our variational family:
\begin{itemize}[leftmargin=*]
    \item \textbf{Importance-weighted variational inference.} On the left side of Fig.~\ref{fig:objectives}, we define the \texttt{IWELBO} variational objective~\citep{burda_importance_2016}, as the expected value of the process that generates $N$ \textit{particles} from the variational family $\texttt{q}_\texttt{NAIVE}$, and computes a log mean importance weight.  
    %$\texttt{q}_\texttt{SIR}$ using the new $\normalize$ construct.\footnote{Both $\marginal$ and $\normalize$ lead to guides with intractable densities; in these cases, the weights computed by \textbf{density} and \textbf{sim} are stochastic estimates (as in \cite{lew_probabilistic_2023}). We discuss the implications of this for VI in \S\ref{sec:margnorm} and \S\ref{sec:diffsp}.\label{fn:estimated}} 
    This objective does not directly encourage $\texttt{q}_\texttt{NAIVE}$ to approximate the posterior well; rather, it encourages $\texttt{q}_\texttt{NAIVE}$ to be a good \textit{proposal distribution} for $N$-particle importance sampling, targeting the posterior. To illustrate this, we define a new probabilistic program $\texttt{q}_\texttt{SIR}$, which uses \textbf{normalize} to construct a \textit{sampling importance resampling} (SIR) approximation to the posterior: it generates $N$ samples from $\texttt{q}_\texttt{NAIVE}$, computes \textit{importance weights} for each sample, and randomly selects a sample to return according to the weights. 
    %In the first three plots on the right of $\texttt{q}_\texttt{SIR}$, we show samples from $\texttt{q}_\texttt{NAIVE}~\phi_N$, where $\phi_N$ are the parameters found by optimizing the ELBO objective on $\texttt{q}_\texttt{SIR}$ with $N$ particles. As we add more particles, it becomes safer for $\texttt{q}_\texttt{NAIVE}~\phi_N$ to spread its mass, because it will get $N$ attempts to propose a good posterior sample. 
    Drawing samples from $\texttt{q}_\texttt{SIR}$, with $N=30$ and $\phi$ set to the parameters that optimized the IWELBO objective, yields a close approximation to a larger portion of the posterior. (In fact, the IWELBO objective can be equivalently expressed in our framework as the ordinary ELBO, but applied directly to $\texttt{q}_\texttt{SIR}$ rather than to $\texttt{q}_\texttt{NAIVE}$.)

    \item \textbf{Hierarchical variational families.} On the right side of Fig.~\ref{fig:objectives}, we define a better variational family: $\texttt{q}_\texttt{MARG}$. First, we define a hand-designed posterior approximation $\texttt{q}_\texttt{JOINT}$, which extends $\texttt{q}_\texttt{NAIVE}$ with an \textit{auxiliary variable} $v$, to allow it to better fit circle-shaped posteriors. We cannot directly use $\texttt{q}_\texttt{JOINT}$ with the ELBO objective, however, because it is defined over the space of triples $(v, x, y)$, not the space of pairs $(x, y)$ as in our model. The $\marginal$ construct shrinks the sample space back down to just $(x, y)$, approximating the marginal density $\texttt{q}_\texttt{MARG}(x, y)$ using importance sampling. The resulting algorithm is an instance of importance weighted HVI (IWHVI) \cite{sobolev_importance_2019}.
\end{itemize}

\section{Syntax and Semantics}
\label{sec:syntax}
\begin{figure}
\centering
\footnotesize{
    \textcolor{blue}{\textit{\textbf{Shared Core}}}
    \begin{align*}
    \text{Ground types }\sigma \, ::=\,& 
        1 \mid 
        \mathbb{B} \mid
        %{\RR}_{\geq 0} \mid 
        \mathbb{I} \mid \RR_{> 0} \mid \RR_{\geq 0} \mid \RR \mid \RR^* \mid
        %\NN \mid
        \str \mid \sigma_1 \times \sigma_2 %\\ %\mid 
        %\trace \\
        %\sigma_1 + \sigma_2 
        %\mid \sigma_1 \times \sigma_2
        \quad\quad\quad\quad 
    \text{Types }\type \, ::= \,%&
        \sigma \mid D~\sigma \mid
        \tau_1 \to \tau_2 \mid 
       % \tau_1 + \tau_2 \mid 
        \tau_1 \times \tau_2\\
    \text{Terms }\ter \, ::= \,&
        () \mid 
        r \mid
        c \mid 
        \var \mid 
        \lambda x. t \mid
        t_1 \, t_2 \mid 
        (\ter_1, \ter_2) \mid 
        \pi_1~\ter \mid 
        \pi_2~\ter \mid 
        \mathbf{T} \mid \mathbf{F} \mid 
       % \inl~\ter \mid 
       % \inr~\ter \mid
       \textbf{if }t\textbf{ then }t_1\textbf{ else }t_2 \\ %\mid
      %  \llet~\var=\ter_1~\lin~\ter_2 \mid\\
    %\case~\ter~\of~\{\inl~\var_1\to \ter_1, \inr~\var_2\to\ter_2\} \mid 
    % \{\} \mid 
    %  \{ t_1 \mapsto t_2 \} \mid 
    %  t_1 \mdoubleplus t_2 \mid
    %  t_1[t_2]
    % &\;\;\;\;\;\; \sample(\ter_1, \ter_2) \mid \observe(\ter_1, \ter_2) \mid \score(\ter)
    \text{Primitives } c \, ::= \,& + \mid +^* \mid <~ \mid \normal_{\text{REPARAM}} \mid
            \normal_{\text{REINFORCE}} \mid
            \flip_{\text{REINFORCE}} \mid
            \flip_{\text{ENUM}} \mid
            \flip_{\text{MVD}} \mid 
            \dots
    \end{align*}

    \vspace{2mm}

    \begin{mdframed}
        \centering
        \begin{tabular}{c}
           $t : \RR^*$ 
           \\\hline
           $t : \RR$
        \end{tabular}
        \quad
        \begin{tabular}{c}
           $ $ 
           \\\hline
           $+ : \RR \times \RR \to \RR$
        \end{tabular}
        \quad
        \begin{tabular}{c}
           $ $ 
           \\\hline
           $+^* : \RR^* \times \RR^* \to \RR^*$
        \end{tabular}
        \quad
        \begin{tabular}{c}
           $ $ 
           \\\hline
           $<~: \RR^* \times \RR^* \to \mathbb{B}$
        \end{tabular}
        \quad
        \begin{tabular}{c}
           $ $ 
           \\\hline
           $\normal_{\text{REPARAM}} : \RR \times \RR_{>0} \to D\,\RR$ 
        \end{tabular}
        \quad
        \begin{tabular}{c}
           $ $ 
           \\\hline
           $\normal_{\text{REINFORCE}} : \RR \times \RR_{>0} \to D\,\RR^*$
        \end{tabular}
        \quad
        \begin{tabular}{c}
           $ $ 
           \\\hline
           $\flip_{\text{MVD}} : \mathbb{I} \to D\,\mathbb{B}$
        \end{tabular}
    \end{mdframed}

    \vspace{2mm}

    {\scriptsize{
    \centering
    $\sem{1} = \{()\}$ \quad $\sem{\mathbb{I}} = [0, 1]$ \quad $\sem{\RR}=\sem{\RR^*}=\RR$ \quad $\sem{D~\sigma} = \text{Prob}_{\ll \mathcal{B}_\sigma}~\sem{\sigma}$
    \quad
    $\sem{\mathbf{normal}_{\text{REPARAM}}} = \sem{\normal_{\text{REINFORCE}}} = \lambda (\mu, \sigma). \mathcal{N}(\mu, \sigma)$
    }}

    \vspace{2mm}
    
\begin{minipage}[t]{0.48\textwidth}
    \centering
    \textcolor{blue}{\textit{\textbf{Generative Probabilistic Programming}} ($\lambda_{\text{Gen}}$)}
    \begin{align*}
        \text{Types }\tau \, ::=\, & 
            G\,\tau \text{ (generative programs)}\\
        \text{Terms }\ter \, ::=\, & 
            \return\, \ter \mid
            \sample\, \ter_1\, \ter_2 \mid\\
            & \observe\, \ter_1\, \ter_2 \mid
             \haskdo \{ m \}\\
        m \, ::= \, &
            t \mid x \gets t; m
    \end{align*}

    \vspace{2mm}

    \begin{mdframed}
        \centering
        \begin{tabular}{c}
           $t : \tau$ 
           \\\hline
           $\return\,t : G\,\tau$ 
        \end{tabular}
        \quad
        \begin{tabular}{c}
            $t_1 : \dens~\sigma$\quad
            $t_2 : \str$
             \\\hline
            $\sample\, t_1\, t_2 : G\,\sigma$ 
        \end{tabular}
        \quad
        \begin{tabular}{c}
            $t_1 : \dens~\sigma$\quad
            $t_2 : \sigma$
             \\\hline
            $\observe\, t_1\, t_2 : G\,1$ 
        \end{tabular}
        \quad
        % \begin{tabular}{c}
        %    $t_p : G\,\tau$
        %     \\\hline
        %    $t_p : M\,\tau$ 
        % \end{tabular}
        % \quad
        \begin{tabular}{c}
            $t : G\,\tau$
             \\\hline
            $\haskdo \{ t \} : G\,\tau$ 
        \end{tabular}
        % \quad
        % \begin{tabular}{c}
        %     $t_p : M\,\tau$
        %      \\\hline
        %     $M.\haskdo \{ t_p \} : M\,\tau$ 
        % \end{tabular}
        %\quad
        \\\vspace{0.4mm}
        \begin{tabular}{c}
            $t : G\,\tau_1$ \quad
            $x : \tau_1 \vdash \haskdo\, \{ m \} : G\,\tau_2$
             \\\hline
            $\haskdo \{ x \gets t; m \} : G\,\tau_2$ 
        \end{tabular}
    \end{mdframed}
    \vspace{2.4mm}    
    \scriptsize{
    \centering
        $\sem{G~\tau} = \text{Meas}_{\ll \mathcal{B}_\mathbb{T}}^{DS}~\mathbb{T} \times (\mathbb{T} \Rightarrow \sem{\tau})$
        \begin{align*}
        \sem{\return~t}_1(\gamma,U) &= \delta_{\{\}}(U)\\
        \sem{\return~t}_2(\gamma,u) &= \sem{t}(\gamma)\\
        \sem{\sample~t_1~t_2}_1(\gamma,U) &= \int \sem{t_1}(\gamma, dv) \delta_{\{\sem{t_2}(\gamma) \mapsto v\}}(U)\\
        \sem{\sample~t_1~t_2}_2(\gamma,u) &= u[\sem{t_2}(\gamma)]\\
        \sem{\observe~t_1~t_2}_1(\gamma,U) &= \frac{d(\sem{t_1}(\gamma))}{d\mathcal{B}_\sigma}(\sem{t_2}(\gamma))  \delta_{\{\}}(U)\\
        \sem{\observe~t_1~t_2}_2(\gamma,u) &= ()\\
        \sem{\haskdo\{x \gets t; m\}}_1(\gamma,U) &=\int \sem{t}_1(\gamma, du_1)\\
        &\quad \int \sem{\haskdo\{m\}}_1(\gamma', du_2)\\ 
        &\quad\quad \delta_{u_1\mdoubleplus u_2}(U) \cdot \textit{disj}(u_1, u_2)\\
        \text{where }\gamma'&=\gamma[x \mapsto \sem{t}_2(\gamma, u_1)]\\
        \sem{\haskdo\{x \gets t; m\}}_2(\gamma,u) &= \sem{\haskdo\{m\}}_2(\gamma', u')\\
        \text{where }\gamma'&=\gamma[x \mapsto \sem{t}_2(\gamma, u)]\\
        \text{and }u'&=\pi_2(\textit{split}_{\sem{t}_1(\gamma)}(u))
        \end{align*}
    }
\end{minipage}\hfill\vline\hfill%
%    \vspace{4mm}
\begin{minipage}[t]{0.48\textwidth}
    \centering
    \textcolor{blue}{\textit{\textbf{Differentiable Probabilistic Programming}} ($\lambda_{\text{ADEV}}$)}
    \begin{align*}
        \text{Types }\tau \, ::=\, & 
            P~\tau \mid \eRR \mid \trace \\
        \text{Terms }\ter\, ::=\,& \mathbb{E}~t \mid \return~t \mid \sample~t \mid \haskdo\{m\} \mid\\
        & \mathbf{score}~t\mid \{\} \mid \{ t_1 \mapsto t_2 \} \mid t_1 \mdoubleplus t_2 \mid t_1[t_2]\\
 m \, ::= \, &
            t \mid x \gets t; m
    \end{align*}
    
    \vspace{2mm}

    \begin{mdframed}
        \begin{tabular}{c}
           $t : \tau$ 
           \\\hline
           $\return\,t : P\,\tau$ 
        \end{tabular}
        \quad
        \begin{tabular}{c}
            $t : \dens~\sigma$
             \\\hline
            $\sample\, t : P\,\sigma$ 
        \end{tabular}
        % \begin{tabular}{c}
        %     $t_d : \dens~\sigma$\quad
        %     $t : \sigma$
        %      \\\hline
        %     $\observe\, t_d\, t : M\,1$ 
        % \end{tabular}
        \\\centering
        % \begin{tabular}{c}
        %    $t_p : G\,\tau$
        %     \\\hline
        %    $t_p : M\,\tau$ 
        % \end{tabular}
        % \quad
        \begin{tabular}{c}
            $t : \RR_{\geq 0}$
             \\\hline
            $\mathbf{score}~t : P\,1$ 
        \end{tabular}
        \quad
        \begin{tabular}{c}
            $t : P\,\tau$
             \\\hline
            $\haskdo \{ t \} : P\,\tau$ 
        \end{tabular}
        \quad
        \begin{tabular}{c}
            $t : P\,\RR$
             \\\hline
             \\[-1.1em]
            $\mathbb{E}~t : \eRR$ 
        \end{tabular}
        \quad
        % \begin{tabular}{c}
        %     $t_p : M\,\tau$
        %      \\\hline
        %     $M.\haskdo \{ t_p \} : M\,\tau$ 
        % \end{tabular}
        %\quad
        \begin{tabular}{c}
            $t : P\,\tau_1$ \quad
            $x : \tau_1 \vdash \haskdo\, \{ m \} : P\,\tau_2$
             \\\hline
            $\haskdo \{ x \gets t; m \} : P\,\tau_2$ 
        \end{tabular}
        % \quad
        % \begin{tabular}{c}
        %     $t_p : M\,\tau_1$ \quad
        %     $x : \tau_1 \vdash M.\haskdo\, \{ m \} : M\,\tau_2$
        %      \\\hline
        %     $M.\haskdo \{ x \gets t_p; m \} : M\,\tau_2$ 
        % \end{tabular}
    \end{mdframed}
    \vspace{2mm}
    \scriptsize{
    \centering
        $\sem{P~\tau} = \text{Meas}~\sem{\tau}\;\mid\;\sem{\eRR} = \text{Meas}~\RR\;\mid\;\sem{\trace}=\mathbb{T}$
        \begin{align*}
        \sem{\return~t}(\gamma,U) &= \delta_{\sem{t}(\gamma)}(U)\\
        \sem{\sample~t}(\gamma,U) &= \sem{t}(\gamma, U)\\
        \sem{\score~t}(\gamma, U) &= \sem{t}(\gamma) \cdot \delta_{()}(U)\\
        \sem{\haskdo\{x \gets t; m\}}(\gamma,U) &=\int \sem{t}(\gamma, du_1)\\
        &\quad\quad\; \sem{\haskdo\{m\}}(\gamma[x \mapsto u_1], U)\\
        \sem{\mathbb{E}~t}(\gamma, U) &= \sem{t}(\gamma, U)\\
        \sem{\{\}}(\gamma) &= \{\}\\
        \sem{\{t_1 \mapsto t_2\}}(\gamma) &= \{\sem{t_1}(\gamma) \mapsto \sem{t_2}(\gamma)\}\\
        \sem{t_1 \mdoubleplus t_2}(\gamma) &= \begin{cases}
            u_1 \mdoubleplus u_2 &\text{ if } \textit{disj}(u_1, u_2)\\
            \{\} &\text{ otherwise}
        \end{cases}\\
        \text{where } u_i &= \sem{t_i}(\gamma)\\
        \sem{t_1[t_2]}(\gamma) &= \begin{cases}
            u[v] &\text{ if } v \in u\\
            \text{default}_\sigma &\text{ otherwise}
        \end{cases}\\
        \text{where } (u,v) &= (\sem{t_1}(\gamma), \sem{t_2}(\gamma))
        \end{align*}
    }
\end{minipage}
}
% \vspace{3mm}

% {\it Target Language}
%     \begin{align*}
%     \text{Types }\tau \, ::=\,& 
%         P_{\text{Cont}}~\tau \mid \trace_D %\\ %\mid 
%         %\trace \\
%         %\sigma_1 + \sigma_2 
%         %\mid \sigma_1 \times \sigma_2
%         \quad\quad\quad\quad\quad\quad 
%     \text{Types }\type \, ::= \,%&
%         \sigma \mid D~\sigma \mid
%         \tau_1 \to \tau_2 \mid 
%        % \tau_1 + \tau_2 \mid 
%         \tau_1 \times \tau_2\\
%     \text{Terms }\ter \, ::= \,&
%         () \mid 
%         c \mid 
%         \var \mid 
%         \lambda x. t \mid
%         t_1 \, t_2 \mid 
%         (\ter_1, \ter_2) \mid 
%         \pi_1~\ter \mid 
%         \pi_2~\ter \mid 
%        % \inl~\ter \mid 
%        % \inr~\ter \mid
%        \textbf{if }t\textbf{ then }t_1\textbf{ else }t_2\\ %\mid
%       %  \llet~\var=\ter_1~\lin~\ter_2 \mid\\
%     %\case~\ter~\of~\{\inl~\var_1\to \ter_1, \inr~\var_2\to\ter_2\} \mid 
%     % \{\} \mid 
%     %  \{ t_1 \mapsto t_2 \} \mid 
%     %  t_1 \mdoubleplus t_2 \mid
%     %  t_1[t_2]
%     % &\;\;\;\;\;\; \sample(\ter_1, \ter_2) \mid \observe(\ter_1, \ter_2) \mid \score(\ter)
%     \text{Primitive distributions }\ter_d \, ::=\, &
%             \normal_{\text{REPARAM}} \mid
%             \normal_{\text{REINFORCE}} \mid
%             \bernoulli_{\text{REINFORCE}} \mid
%             \bernoulli_{\text{MVD}} \mid 
%             \dots
    % \end{align*}
    \Description{Grammar and typing rules for our languages.}
    \caption{Grammars, selected typing rules, and selected denotations for our core languages $\lambda_{\text{Gen}}$ and $\lambda_{ADEV}$.}
    \label{fig:syntax}
    \vspace{-4mm}
\end{figure}

We now formalize the core of our approach. Although our formal model lacks several features of our full language (see \S\ref{sec:sp} and Appx. \ref{appdx:full-system}), it is still expressive, featuring higher-order functions, continuous and discrete sampling, stochastic control flow, and discontinuous branches. Ultimately, our formalization aims to give an account of how user programs representing models, variational families, and variational objectives are transformed into compiled programs representing unbiased gradient estimators. We begin in this section by introducing two calculi for generative and differentiable probabilistic programming (Fig.~\ref{fig:syntax}).

\subsection{Shared Core}
\subsubsection{Syntax} 
The top of Fig.~\ref{fig:syntax} presents the \textit{shared core}, the $\lambda$-calculus on which both our languages build. It is largely standard, with functions, tuples, if statements, and ground types for Booleans, strings, and numbers, but two aspects merit further discussion:

\begin{itemize}[leftmargin=*]
\item\textit{Smooth and non-smooth reals.} First, following~\citet{lew_adev_2023}, we have two types for real numbers, $\RR$ and $\RR^*$. Intuitively, the type $\RR$ is the type of ``real numbers that must be used differentiably,'' whereas the type $\RR^*$ is the type of ``real numbers that may be manipulated in any (measurable) way.'' These constraints are enforced by the types of primitive functions: non-smooth primitives like $< ~: \RR^* \times \RR^* \to \mathbb{B}$ only accept $\RR^*$ inputs. Smooth primitives, by contrast, come in two varieties, smooth versions (e.g. $+ : \RR \times \RR \to \RR$) and non-smooth versions (e.g. $+^* : \RR^* \times \RR^* \to \RR^*$).\footnote{The reader may wonder if we also need primitives that accept some smooth and some non-smooth inputs, but this turns out to be unnecessary, because non-smooth inputs can always be safely promoted to smooth inputs. The output will then also be smooth, but this is by design: if \textit{any} input to a primitive has smooth type, then the primitive's output must also have smooth type, to ensure that future computation does not introduce non-differentiability.} We allow implicit promotion of terms of type $\RR^*$ into terms of type $\RR$.

\item\textit{Primitive distributions.} The shared core includes a type $D~\sigma$ of primitive probability distributions over ground types $\sigma$. Again following~\citet{lew_adev_2023}, we expose multiple versions of each primitive, e.g. $\normal_{\text{REPARAM}}$ and $\normal_{\text{REINFORCE}}$. All versions denote the same distribution, and so our correctness results (which are phrased in terms of our denotational semantics) ensure that gradients will target the same objective no matter which version a program uses (Thm.~\ref{thm:unbiased-adev}). But different versions of the same primitive employ different \textit{estimation strategies} for propagating derivatives, striking different trade-offs between variance and cost. Furthermore, because different  estimation strategies may place different requirements on the user's program, the typing rules for different versions of the same primitive may differ. For example, $\normal_{\text{REINFORCE}}$ constructs a distribution of type $D~\RR^*$, meaning that probabilistic programs that draw samples from it can freely manipulate those samples. By contrast, $\normal_{\text{REPARAM}}$ constructs a distribution of type $D~\RR$, so samples from $\normal_{\text{REPARAM}}$ must be used smoothly.
\end{itemize}

\subsubsection{Denotational Semantics.} %For a term $t$, we use the standard notation
%$\Gamma \vdash t : \tau$ for the judgment that, in context $\Gamma$ (mapping a finite list of free variable names $x$ to their types $\Gamma(x)$), the term $t$ has type $\tau$. 
We assign to each type $\tau$ a mathematical space $\sem{\tau}$, and interpret an open term $\Gamma \vdash t : \tau$ as a map from $\sem{\Gamma} \coloneqq \prod_{x \in \Gamma} \sem{\Gamma(x)}$, the space of \textit{environments} for context $\Gamma$, to $\sem{\tau}$, the space of results to which $t$ can evaluate. Formally, we work in the category of \textit{quasi-Borel spaces}~\cite{heunen_convenient_2017}, but to ease exposition, we present our semantics in terms of standard measure theory when possible. So, for example, we write that $\sem{\RR}$ is the measurable space $(\RR, \mathcal{B}(\RR))$, that $\sem{\sigma_1 \times \sigma_2}$ is the product of the measurable spaces $\sem{\sigma_1}$ and $\sem{\sigma_2}$, and so on, but with the implicit understanding that these can equivalently be viewed as quasi-Borel spaces. For semantics of higher-order types, we use quasi-Borel spaces explicitly (e.g., we set $\sem{\tau_1 \to \tau_2}$ to $\sem{\tau_1} \Rightarrow \sem{\tau_2}$, the quasi-Borel space of quasi-Borel maps from $\sem{\tau_1}$ to $\sem{\tau_2}$). %For example, for quasi-Borel spaces that happen to be standard Borel, we work with them as measurable spaces; and for quasi-Borel measures defined on those spaces, we work them as ordinary measures. %The quasi-Borel spaces are drop-in replacements for measurable spaces, with key differences appearing only at higher-order. Our presentation exploits this: at ground types, we work with ordinary measures and measurable spaces.

To give a semantics to our primitive distributions, we need to first assign to each ground type $\sigma$ a \textit{base measure} $\mathcal{B}_\sigma$ over $\sem{\sigma}$: for discrete types $\sigma \in \{1, \mathbb{B}, \str\}$, we set $\mathcal{B}_\sigma(U) = |U|$, the counting measure, and for continuous types $\sigma \in \{\RR, \RR^*\}$, we set $\mathcal{B}_\sigma(U) = \int_\RR 1_U(u) du$, the Lebesgue measure. Given two ground types $\sigma_1$ and $\sigma_2$, the base measure of the product, $\mathcal{B}_{\sigma_1 \times \sigma_2}$, is $\mathcal{B}_{\sigma_1} \otimes \mathcal{B}_{\sigma_2}$, the product of the base measures for each type. 
With base measures in hand, we can define $\sem{D~\sigma} \coloneqq \text{Prob}_{\ll \mathcal{B}_\sigma} \sem{\sigma}$, the space of probability measures on $\sem{\sigma}$ that are \textit{absolutely continuous} with respect to $\mathcal{B}_\sigma$. Absolute continuity ensures that these distributions have \textit{density functions}.% For example, discrete distributions on (countable subsets of) $\RR$ cannot be introduced as primitives. 

\subsection{Generative Probabilistic Programming with \texorpdfstring{$\lambda_{\text{Gen}}$}{Lambda Gen}}
\label{sec:gen-lang}
Users write probabilistic models and variational families in $\lambda_{\text{Gen}}$, which extends the shared core with a monadic type $G~\tau$ of \textit{generative programs} (Fig.~\ref{fig:syntax}, left). Examples of programs in $\lambda_{\text{Gen}}$ include the model and variational programs (\texttt{model} and $\texttt{q}_\texttt{NAIVE}$) in Fig.~\ref{fig:example_transcript}.

\subsubsection{Syntax} Syntactically, programs of type $G~\tau$ interleave standard functional programming logic (from the shared core) with two new kinds of statements: $\sample$ and $\observe$. The $\sample$ statement takes as input a probability distribution to sample (of type $D~\sigma$) and a unique \textit{name} for the random variable being sampled (of type $\str$). The $\observe$ statement takes as input a probability distribution representing a likelihood (of type $D~\sigma$), and a value representing an observation (of type $\sigma$). A generative program can include many calls to $\sample$ and $\observe$, ultimately inducing an \textit{unnormalized joint distribution} over \textit{traces}: finite dictionaries mapping the names of random variables to sampled values. Intuitively, ignoring the $\observe$ statements in a program, we can read off a sampling distribution over traces\textemdash the \textit{prior}. The $\observe$ statements then reweight each possible execution's trace by the likelihoods accumulated during that execution, yielding an unnormalized \textit{posterior} over traces.

\subsubsection{Denotational Semantics} Formally, we write $\mathbb{T}$ for the space of possible traces. It arises as a countable disjoint union, indexed by possible \textit{trace shapes} (finite partial maps from string-valued \textit{names} $k$ to corresponding ground types $\sigma_k$), of product spaces $\prod_{k} \sem{\sigma_k}$. We can also define a base measure $\mathcal{B}_\mathbb{T}$ over $\mathbb{T}$, by summing product measures for each possible trace shape: $$\mathcal{B}_\mathbb{T}(U) \coloneqq \sum_{s \subseteq \str \times \Sigma} \left(\bigotimes_{(k, \sigma_k) \in s} \mathcal{B}_{\sigma_k}\right)(\{\textit{values}(u) \mid u \in U \wedge \textit{shape}(u) = s\}).$$ 

Generative programs (of type $G~\tau$) induce measures $\mu$ on $\mathbb{T}$ that satisfy two properties: they are absolutely continuous with respect to $\mathcal{B}_\mathbb{T}$ (and therefore have a well-defined notion of \textit{trace density} $\frac{d\mu}{d\mathcal{B}_\mathbb{T}}$), and they are \textit{discrete-structured}: for $(\mu \otimes \mu)$-almost-all pairs of traces $(u_1, u_2)$, either $u_1 = u_2$, or there exists a string $s$ present in both $u_1$ and $u_2$ such that $u_1[s] \neq u_2[s]$. In words, if two distinct traces are both in the support of a generative program, then in the two executions that those traces represent, there must have been some $\sample$ statement at which different choices were made for the same random variable. We write $\text{Meas}_{\ll \mathcal{B}_\mathbb{T}}^{DS}~\mathbb{T}$ for the space of measures on $\mathbb{T}$ satisfying these two properties. The semantics then assigns $\sem{G~\tau} \coloneqq \text{Meas}_{\ll \mathcal{B}_\mathbb{T}}^{DS}~\mathbb{T} \times (\mathbb{T} \Rightarrow \sem{\tau})$: a generative program denotes both a (well-behaved) measure on traces, \textit{and} a \textit{return-value function} that given a trace, computes the program's $\sem{\tau}$-valued result when its random choices are as in the trace.

The semantics for terms of type $G~\tau$ (Fig.~\ref{fig:syntax}, bottom left) give a formal account of how a generative program's source code yields a particular measure on traces and return-value function. We write $\sem{\cdot}_i$ for $\pi_i \circ \sem{\cdot}$, so that $\sem{\cdot}_1$ computes the measure on traces and $\sem{\cdot}_2$ computes the return-value function. In Appx.~\ref{appx:proof-details}, we show that our term semantics really does map every program of type $G~\tau$ a well-behaved measure over traces, i.e., the absolute continuity and discrete-structure requirements are satisfied by our definitions. The semantics of each probabilistic programming construct can be understood in terms of its trace distribution and return value function: 
\begin{itemize}[leftmargin=*]
\item $\return~t$: The simplest generative programs deterministically compute a return value $t$. These programs denote deterministic (Dirac delta) distributions on the \textit{empty trace} $\{\}$, because they make no random choices. The return value function $\sem{\return~t}_2(\gamma)$ then maps any trace $u$ to the program's return value, $\sem{t}(\gamma)$. 
\item $\sample~t_1~t_2$: Slightly more complicated, the $\sample~t_1~t_2$ command denotes a measure over \textit{singleton} traces $\{ \textit{name} \mapsto \textit{value} \}$, namely the \textit{pushforward} of the measure $\sem{t_1}(\gamma)$ by the function $\lambda v. \{\sem{t_2}(\gamma) \mapsto v\}$. The return value function for $\sample~t_1~t_2$ statements accepts a trace $u$ and looks up the value associated with the name $\sem{t_2}(\gamma)$, if it exists. We define $u[\textit{name}]$ to return the value associated with \textit{name} in $u$, if \textit{name} is a key in $u$, and otherwise to return a default value of the appropriate type.\footnote{All ground types $\sigma$ are inhabited, and we can choose the default values arbitrarily.} 
\item $\observe~t_1~t_2$: Like $\return$, the statement $\observe~t_1~t_2$ makes no random choices, and thus has an empty trace. But it denotes a \textit{scaled} Dirac delta measure: the measure it assigns to the empty trace is equal to the density of the value $\sem{t_2}(\gamma)$ under the measure $\sem{t_1}(\gamma)$. 
\item $\haskdo \{x \gets t; m\}$: Sequencing two generative programs concatenates their traces (which we write using the $\mdoubleplus$  concatenation operator). The helper $\textit{disj} : \mathbb{T} \times \mathbb{T} \to \{0, 1\}$ checks whether the names used by each of two traces are distinct, returning $1$ if so and $0$ otherwise. We use it in our definition of $\sem{\haskdo \{x \gets t; m\}}$ to model that when a program uses the same name twice in a single execution, a runtime error is raised: the semantics assigns measure 0 to those executions, leaving measure $< 1$ for the remaining, valid executions.\footnote{The semantics of a runtime error are equivalent to the semantics of $\mathbf{observe}$-ing an impossible outcome. Because of this, if the user's model contains such errors, variational inference can be seen as training a guide program to approximate the model posterior \textit{given} that no errors are encountered. If the variational program itself contains either $\observe$ statements or runtime errors, it can also denote an unnormalized measure, in which case an objective like the ELBO ($\int Q(dx) \log \frac{Zp(x)}{q(x)}$) can no longer be interpreted as a lower bound on the model's log normalizing constant $\log Z$. Our system will still produce unbiased gradient estimates for the objective, but it is likely not an objective that the user intended to optimize. This suggests that better static checks for whether a program is normalized could be useful, helping users to avoid silent optimization failures if they accidentally encode an unnormalized variational family.}
\end{itemize}

\subsection{Differentiable Probabilistic Programming with \texorpdfstring{$\lambda_{\text{ADEV}}$}{Lambda ADEV}}
\label{sec:adev-lang}
\subsubsection{Syntax} The right panel of Fig.~\ref{fig:syntax} presents a separate extension of the shared core, $\lambda_{\text{ADEV}}$, a lower-level language for differentiable probabilistic programming~\cite{lew_adev_2023}. Like $\lambda_{\text{Gen}}$, $\lambda_{\text{ADEV}}$ adds a monadic type for probabilistic computations, $P~\tau$, which supports the sampling of primitive distributions (with $\sample$), deterministic computation (with $\return$), scoring by multiplicative density factors ($\score$), and sequencing (with $\haskdo$). But $\lambda_{\text{ADEV}}$ probabilistic programs do not denote distributions on \textit{traces}, and the $\sample$ statements do not specify names for random variables. Rather, programs directly denote (quasi-Borel) measures on output types $\tau$.

Even so, we do include syntax for constructing and manipulating traces \textit{as data}. This is because, in $\S\ref{sec:gen-transformations}$, we will develop program transformations that turn $\lambda_{\text{Gen}}$ programs into $\lambda_{\text{ADEV}}$ programs that (differentiably) simulate and evaluate densities of reified trace data structures. 

The language also features an \textit{expected value operator} $\mathbb{E}$,
of type $P~\RR \to \eRR$. Terms of type $\eRR$ intuitively represent ``losses (i.e., objectives) that can be unbiasedly estimated.'' 
The ultimate goal of a user in $\lambda_{\text{ADEV}}$ is to write an objective function of type $\RR^n \to \eRR$ (where $\RR^n = \RR \times \cdots \times \RR$), and then use automatic differentiation of expected values (\S\ref{sec:adev})
to obtain a gradient estimator for the loss function it denotes. 
These loss functions can be constructed by taking expectations of probabilistic programs (using $\mathbb{E}$) or by composing existing losses using new 
primitives (e.g., $+_{\eRR}$, 
$\times_{\eRR}$, and $\text{exp}_{\eRR}$). Example $\lambda_\text{ADEV}$ programs include the objectives defined in \S\ref{sec:overview} (ELBO, IWELBO, and IWHVI), as well as the automatically compiled programs on the right-hand side of Fig.~\ref{fig:example_transcript}.

\subsubsection{Denotational Semantics} Semantically, $P~\tau$ is the space of quasi-Borel measures on $\sem{\tau}$. The type $\eRR$ has the same denotation as $P~\RR$, but cannot be used monadically (i.e., it composes in a more restricted way than $P~\RR$). This is because it is intended to represent an unbiased estimator of a particular real number. For example, suppose $p : P~\RR$ denotes a probability measure over the reals with expected value $0$. Then $\mathbb{E}~p : \eRR$ denotes the same probability measure, with expected value $0$. But $p$ can be used within larger probabilistic programs; for example, we can write $p_* = \haskdo \{x \gets p; \return \exp(p)\}$, which draws a sample from $p$ and exponentiates it. By Jensen's inequality, the expected value of $\sem{p_*}$ will generally be \textit{greater than} $e^0 = 1$. By contrast, the term $\mathbb{E}~p$ cannot be freely sampled within probabilistic programs, but can be composed with certain special arithmetic operators, so we can write (for example) $p^* = \exp_{\eRR} (\mathbb{E}~p)$.
The primitive $\exp_{\eRR}$ uses special logic to construct an unbiased estimator of $e^x$ given an unbiased estimator of $x$, so the program $p^*$ denotes a probability distribution that \textit{does} have expectation $e^0 = 1$.

\section{Compiling Differentiable Simulators and Density Evaluators}
\label{sec:gen-transformations}
\begin{figure}
\scriptsize{
\begin{tabular}{ll}
    % Overall spec
   {Spec}  & \framebox{
    %\begin{tabularx}{0.9\textwidth}{@{}>{\raggedright\arraybackslash}X@{\hfill}>{\raggedleft\arraybackslash}X@{}}
    \begin{tabular*}{0.87\textwidth}{@{\extracolsep{\fill}}lr@{}}
    \textcolor{blue}{\textbf{\textit{Syntax}}}\;\;$\vdash  t : \sigma \to G~\tau \implies \vdash \mathbf{density}\{t\} : \sigma \to \trace \to \RR$ & \textcolor{blue}{\textbf{\textit{Semantics}}}\;\;$\sem{\mathbf{density}\{t\}}(\theta) = \frac{d\sem{t}_1(\theta)}{d\mathcal{B}_\mathbb{T}}$
   \end{tabular*}} \\
   % Wrapper implementation
   \rule{0pt}{4ex}{Wrapper}  & $\mathbf{density}\{t\} \coloneqq \lambda \theta. \lambda u. \llet~(x, w, u') = \xi\{t\}(\theta)(u)~\lin~\mathbf{if}~\textit{isempty}(u')~\mathbf{then}~w~\mathbf{else}~0$\\
   % Spec for helper
   \rule{0pt}{6ex}{Helper} $\xi$ & \framebox{

    \begin{tabular*}{0.87\textwidth}{@{\extracolsep{\fill}}lr@{}}
        \textcolor{blue}{\textbf{\textit{Syntax}}}\;\;$\Gamma \vdash t : \tau \implies \xi\{\Gamma\} \vdash \xi\{t\} : \xi\{\tau\}$ & \textcolor{blue}{\textbf{\textit{Semantics}}}\;\;$\forall (\gamma, \gamma') \in R^\xi_{\Gamma}, (\sem{t}(\gamma), \sem{\xi\{t\}}(\gamma')) \in R^{\xi}_\tau$
    \end{tabular*}
   }\\
   % Implementation on types
    \rule{0pt}{4ex}\;\textit{on types} & \begin{tabular}[t]{@{}l@{}l|l@{}l}
        $\xi\{\sigma\}$ &$\coloneqq \sigma$ & $R^\xi_\sigma$ &$\coloneqq \{(x, x) \mid x \in \sem{\sigma}\}$ \\
        $\xi\{D~\sigma\}$&$\coloneqq \sigma \to \RR$ & $R^\xi_{D~\sigma}$ &$\coloneqq \{(\mu, \rho) \mid \rho = \frac{d\mu}{d\mathcal{B}_\sigma}\}$\\
        $\xi\{\tau_1 \times \tau_2\}$&$\coloneqq \xi\{\tau_1\} \times \xi\{\tau_2\}$ & $R^\xi_{\tau_1 \times \tau_2}$ &$\coloneqq \{((x,y), (x',y')) \mid (x,x') \in R^\xi_{\tau_1} \wedge (y,y') \in R^\xi_{\tau_2}\}$\\
        $\xi\{\tau_1 \to \tau_2\}$&$\coloneqq \xi\{\tau_1\} \to \xi\{\tau_2\}$ & $R^\xi_{\tau_1 \to \tau_2}$ &$\coloneqq \{(f,g) \mid \forall (x,y) \in R^\xi_{\tau_1}, (f(x),g(y)) \in R^\xi_{\tau_2}\}$\\
        $\xi\{G~\tau\}$&$\coloneqq \trace \to \xi\{\tau\} \times \RR \times \trace$ & $R^\xi_{G~\tau}$ &\begin{tabular}[t]{@{}l}
        $\coloneqq \{((\mu,f),g) \mid \exists h. \forall u \in \mathbb{T}. (f(u), h(u)) \in R^\xi_{\tau} \; \wedge$ \\ 
        $\quad\quad\quad g = \lambda u. \llet~(u_1, u_2) = \textit{split}_\mu(u)~\lin (h(u), \frac{d\mu}{dB_\mathbb{T}}(u_1), u_2)\}$
        \end{tabular}\\
        % & & \text{where} $\mathbb{T}_\mu$&$\coloneqq \{u \in \mathbb{T} \mid \exists u' \in \mathbb{T}. \frac{d\mu}{d\mathcal{B}_\mathbb{T}}(u\mdoubleplus u') > 0\}, \mathbb{T}_\mu^*\coloneqq \lambda u. \text{argmax}_{\{u' \in \mathbb{T}_\mu \mid u' \subseteq u\}} |u'|$\\ % \forall_{\mu \otimes \mu} (u_1, u_2), u_1 = u_2 \vee \exists s \in u_1. s \in u_2 \wedge u_1[s] \neq u_2[s]$\\
        %& & \text{and} $\mathbb{T}_\mu^*$& $\coloneqq \lambda u. \text{argmax}_{\{u' \in \mathbb{T}_\mu \mid u' \subseteq u\}} |u'|$\\
        % & & \text{and} $\text{eat}_\mu$ & $\coloneqq \lambda u. u \setminus \mathbb{T}_\mu^*(u)$
    \end{tabular} \\
   % \rule{0pt}{4ex}\quad\textit{on types} & $\xi\{\sigma\} \coloneqq \sigma \;\mid\; \xi\{D~\sigma\} \coloneqq \sigma \to \RR \;\mid\; \xi\{\tau_1 \to \tau_2\} \coloneqq \xi\{\tau_1\} \to \xi\{\tau_2\}$ \\
   % Implementation on terms
   \rule{0pt}{4ex}\;\textit{on terms} & \begin{tabular}[t]{@{}l@{}l@{}l@{}l}
        $\xi\{()\}$ &$\coloneqq ()$ &
        %$\xi\{c\}$ &$\coloneqq c_\xi$\\
        $\xi\{x\}$ &$\coloneqq x$ \\
        $\xi\{\lambda x. t\}$ &$\coloneqq \lambda x. \xi\{t\}$ &
        $\xi\{t_1~t_2\}$ &$\coloneqq \xi\{t_1\}~\xi\{t_2\}$\\
        $\xi\{(t_1, t_2)\}$ &$\coloneqq (\xi\{t_1\}, \xi\{t_2\})$ &
        $\xi\{\pi_i~t\}$ &$\coloneqq \pi_i~\xi\{t\}$\\
        $\xi\{b\} (b \in \{\mathbf{T}, \mathbf{F}\})$ &$\coloneqq b$ &
        $\xi\{\mathbf{if}~t~\mathbf{then}~t_1~\mathbf{else}~t_2\}$ &$\coloneqq \mathbf{if}~\xi\{t\}~\mathbf{then}~\xi\{t_1\}~\mathbf{else}~\xi\{t_2\}$\\
        $\xi\{\normal_{\textit{strat}}\}$ &$\coloneqq \lambda \mu. \lambda \sigma. \lambda x. \frac{1}{\sigma\sqrt{2\pi}}e^{-\frac{1}{2}\left(\frac{x-\mu}{\sigma}\right)^2}$ &
        $\xi\{\flip_{\textit{strat}}\}$ &$\coloneqq \lambda p. \lambda b. \mathbf{if}~b~\mathbf{then}~p~\mathbf{else}~1-p$
        % $\xi\{\normal_{a}\}$ 
        % &$\coloneqq \lambda \mu. \lambda \sigma. \lambda x. \frac{1}{\sigma\sqrt{2\pi}}e^{-\frac{1}{2}\left(\frac{x-\mu}{\sigma}\right)^2}$
        \\
        % $\xi\{\flip_{\text{ENUM}}\}$ &$\coloneqq \lambda p. \lambda b. \mathbf{if}~b~\mathbf{then}~p~\mathbf{else}~1-p$ &
        % $\xi\{\flip_{\text{REINFORCE}}\}$ 
        % &$\coloneqq \lambda p. \lambda b. \mathbf{if}~b~\mathbf{then}~p~\mathbf{else}~1-p$\\
        $\xi\{\return~t\}$ &$\coloneqq \lambda u. (\xi\{t\}, 1, u)$ &
        $\xi\{\observe~t_1~t_2\}$ &$\coloneqq \lambda u. ((), \xi\{t_1\}(\xi\{t_2\}), u)$ \\
        $\xi\{\sample~t_1~t_2\}$ &\begin{tabular}[t]{@{}l}
        $\coloneqq \lambda u. \llet~(v,w,u')=\textit{pop}~u~\xi\{t_2\}$\\
        $\quad\quad\;\;\,\lin~(v,w\cdot\xi\{t_1\}(v), u')$
        \end{tabular} &
        $\xi\{\haskdo~\{x \gets t; m\}\}$ &\begin{tabular}[t]{@{}l}
        $\coloneqq \lambda u. \llet~(x,w,u')=\xi\{t\}(u)~\lin$\\
        $\quad\quad\;\;\,\llet~(y,v,u'') = \xi\{\haskdo\{m\}\}(u')~\lin$\\
        $\quad\quad\;\;\,(y,w\cdot v, u'')$
        \end{tabular}
   \end{tabular}\\
\end{tabular}
\vspace{1mm}
\hrule
\vspace{2mm}
\begin{tabular}{ll}
    % Overall spec
   {Spec}  & \framebox{
   \begin{tabular*}{0.87\textwidth}{@{\extracolsep{\fill}}lr@{}}
   \textcolor{blue}{\textbf{\textit{Syntax}}}\;\;$\vdash t : \sigma \to G~\tau \implies \vdash \mathbf{sim}\{t\} : \sigma \to P~(\trace \times \RR)$ & \textcolor{blue}{\textbf{\textit{Semantics}}}\;\;$\sem{\mathbf{sim}\{t\}}(\theta) = (\text{id} \otimes  \frac{d\sem{t}_1(\theta)}{d\mathcal{B}_\mathbb{T}})_*\sem{t}_1(\theta)$
   \end{tabular*}}\\ %\int \sem{t}_1(\theta, du) 1_U\!\left(u, \frac{d\sem{t}_1(\theta)}{d\mathcal{B}_\mathbb{T}}(u)\right)$} \\
   % Wrapper implementation
   \rule{0pt}{4ex}{Wrapper}  & $\mathbf{sim}\{t\} \coloneqq \lambda \theta. \haskdo \{ (x, w, u) \gets \chi\{t\}(\theta); \return~(u, w)\}$\\
   % Spec for helper
   \rule{0pt}{6ex}{Helper} $\chi$ & \framebox{
    \begin{tabular*}{0.87\textwidth}{@{\extracolsep{\fill}}lr@{}}

   \textcolor{blue}{\textbf{\textit{Syntax}}}\;\;$\Gamma \vdash t : \tau \implies \chi\{\Gamma\} \vdash \chi\{t\} : \chi\{\tau\}$ & \textcolor{blue}{\textbf{\textit{Semantics}}}\;\; $\forall (\gamma, \gamma') \in R^\chi_{\Gamma}, (\sem{t}(\gamma), \sem{\chi\{t\}}(\gamma')) \in R^{\chi}_\tau$
   \end{tabular*}}\\
   % Implementation on types
    \rule{0pt}{4ex}\;\textit{on types} & \begin{tabular}[t]{@{}ll|ll}
        $\chi\{\sigma\}$ &$\coloneqq \sigma$ & $R^\chi_\sigma$ &$\coloneqq \{(x, x) \mid x \in \sem{\sigma}\}$ \\
        $\chi\{D~\sigma\}$&$\coloneqq P~(\sigma \times \RR) \times (\sigma \to \RR)$ & $R^\chi_{D~\sigma}$ &$\coloneqq \{(\mu, (\nu, \rho)) \mid \nu(U) = \int \mu(du) \delta_{\left(u,\frac{d\mu}{d\mathcal{B}_\sigma}(u)\right)}(U) \wedge \rho = \frac{d\mu}{d\mathcal{B}_\sigma}\}$\\
        $\chi\{\tau_1 \times \tau_2\}$&$\coloneqq \chi\{\tau_1\} \times \chi\{\tau_2\}$ & $R^\chi_{\tau_1 \times \tau_2}$ &$\coloneqq \{((x,y), (x',y')) \mid (x,x') \in R^\chi_{\tau_1} \wedge (y,y') \in R^\chi_{\tau_2}\}$\\
        $\chi\{\tau_1 \to \tau_2\}$&$\coloneqq \chi\{\tau_1\} \to \chi\{\tau_2\}$ & $R^\chi_{\tau_1 \to \tau_2}$ &$\coloneqq \{(f,g) \mid \forall (x,y) \in R^\chi_{\tau_1}, (f(x),g(y)) \in R^\chi_{\tau_2}\}$\\
        $\chi\{G~\tau\}$&$\coloneqq P~(\chi\{\tau\} \times \RR \times \trace)$ & $R^\chi_{G~\tau}$ &\begin{tabular}[t]{@{}l}$\coloneqq \{((\mu,f),\nu) \mid \exists g. \forall u. (f(u),g(u)) \in R_\tau^\xi \wedge$\\ %
        %\mu(\mathbb{T}) < 1 \vee 
        \quad\quad\quad\quad$\nu=\lambda U.\int\mu(du) 1_U(g(u), \frac{d\mu}{d\mathcal{B}_\mathbb{T}}(u), u)\}$\end{tabular}
    \end{tabular} \\
   % \rule{0pt}{4ex}\quad\textit{on types} & $\xi\{\sigma\} \coloneqq \sigma \;\mid\; \xi\{D~\sigma\} \coloneqq \sigma \to \RR \;\mid\; \xi\{\tau_1 \to \tau_2\} \coloneqq \xi\{\tau_1\} \to \xi\{\tau_2\}$ \\
   % Implementation on terms
   \rule{0pt}{4ex}\;\textit{on terms} & \begin{tabular}[t]{@{}l@{}l@{}l@{}l}
        $\chi\{()\}$ &$\coloneqq ()$ &
        %$\xi\{c\}$ &$\coloneqq c_\xi$\\
        $\chi\{x\}$ &$\coloneqq x$ \\
        $\chi\{\lambda x. t\}$ &$\coloneqq \lambda x. \chi\{t\}$ &
        $\chi\{t_1~t_2\}$ &$\coloneqq \chi\{t_1\}~\chi\{t_2\}$\\
        $\chi\{(t_1, t_2)\}$ &$\coloneqq (\chi\{t_1\}, \chi\{t_2\})$ &
        $\chi\{\pi_i~t\}$ &$\coloneqq \pi_i~\chi\{t\}$\\
        $\chi\{b\} (b \in \{\mathbf{T}, \mathbf{F}\})$ &$\coloneqq b$ &
        $\chi\{\mathbf{if}~t~\mathbf{then}~t_1~\mathbf{else}~t_2\}$ &$\coloneqq \mathbf{if}~\chi\{t\}~\mathbf{then}~\chi\{t_1\}~\mathbf{else}~\chi\{t_2\}$\\
        $\chi\{\normal_{\textit{strat}}\}$ &\begin{tabular}[t]{@{}l}
        $\coloneqq \lambda (\mu, \sigma). (\haskdo \{$\\
        $\quad\quad x \gets \sample~\normal_{\textit{strat}}(\mu, \sigma);$\\
        $\quad\quad \llet~\rho = \frac{1}{\sigma\sqrt{2\pi}}e^{-\frac{1}{2}\left(\frac{x-\mu}{\sigma}\right)^2};$\\
        $\quad\quad\return (x, \rho)$\\
        $\quad\}, \lambda x. \frac{1}{\sigma\sqrt{2\pi}}e^{-\frac{1}{2}\left(\frac{x-\mu}{\sigma}\right)^2})$
        \end{tabular}&
        $\chi\{\flip_{\textit{strat}}\}$ &\begin{tabular}[t]{@{}l}
        $\coloneqq \lambda p. (\haskdo \{$\\
        $\quad\quad b \gets \sample~\flip_{\textit{strat}}(p);$\\
        $\quad\quad \llet~\rho = \mathbf{if}~b~\mathbf{then}~p~\mathbf{else}~1-p;$\\
        $\quad\quad\return (b, \rho)$\\
        $\quad\}, \lambda b. \mathbf{if}~b~\mathbf{then}~p~\mathbf{else}~1-p)$
        \end{tabular}\\
        % $\chi\{\normal_{\text{REINFORCE}}\}$ 
        % &\begin{tabular}[t]{@{}l}
        % $\coloneqq \lambda (\mu, \sigma). (\haskdo \{$\\
        % $\quad\quad x \gets \sample~(\normal_{\text{REINFORCE}}~(\mu, \sigma));$\\
        % $\quad\quad \llet~\rho = \frac{1}{\sigma\sqrt{2\pi}}e^{-\frac{1}{2}\left(\frac{x-\mu}{\sigma}\right)^2};$\\
        % $\quad\quad\return (x, \rho)$\\
        % $\quad\}, \lambda x. \frac{1}{\sigma\sqrt{2\pi}}e^{-\frac{1}{2}\left(\frac{x-\mu}{\sigma}\right)^2})$
        % \end{tabular}
        %  &
        % $\chi\{\flip_{\text{REINFORCE}}\}$ 
        % &\begin{tabular}[t]{@{}l}
        % $\coloneqq \lambda p. (\haskdo \{$\\
        % $\quad\quad b \gets \sample~(\flip_{\text{REINFORCE}}~p);$\\
        % $\quad\quad \llet~\rho = \mathbf{if}~b~\mathbf{then}~p~\mathbf{else}~1-p;$\\
        % $\quad\quad\return (b, \rho)$\\
        % $\quad\}, \lambda b. \mathbf{if}~b~\mathbf{then}~p~\mathbf{else}~1-p)$
        % \end{tabular}\\
        $\chi\{\return~t\}$ &$\coloneqq \return (\chi\{t\}, 1, \{\})$ &
        $\chi\{\observe~t_1~t_2\}$ &\begin{tabular}[t]{@{}l}
        $\coloneqq \haskdo \{$ \\
        $\quad\quad\llet w = \pi_2(\chi\{t_1\})(\chi\{t_2\});$\\ 
        $\quad\quad\score~w; \return~((), w, \{\})$ \\
        $\quad\}$
        \end{tabular}\\
        $\chi\{\sample~t_1~t_2\}$ &\begin{tabular}[t]{@{}l}
        $\coloneqq \haskdo \{$\\ 
        $\quad\quad (x,w) \gets \pi_1(\chi\{t_1\});$\\
        $\quad\quad \return (x, w, \{\chi\{t_2\} \mapsto x\})$\\
        $\quad \}$
        \end{tabular} &
        $\chi\{\haskdo~\{x \gets t; m\}\}$ &\begin{tabular}[t]{@{}l}
        $\coloneqq \haskdo \{$\\
        $\quad\quad (x, w, u_1) \gets \chi\{t\};$\\
        $\quad\quad (y, v, u_2) \gets \chi\{\haskdo\{m\}\};$\\
        $\quad\quad \mathbf{if} \textit{disj}(u_1, u_2)~\mathbf{then}$\\
        $\quad\quad\quad \return~(y,w \cdot v,u_1\mdoubleplus u_2)$\\
        $\quad\quad \mathbf{else}~\haskdo \{\score~0; \return~(y,0,\{\}))\}$\\ % TODO: score needs to be able to take 0 as an argument for this to work
        $\quad\}$
        \end{tabular}
   \end{tabular}\\
\end{tabular}
}
    \Description{Definitions of our program transformations for density evaluation and traced simulation.}
    \caption{Traced simulation and density evaluation as program transformations from $\lambda_{\text{Gen}}$ to $\lambda_{\text{ADEV}}$.}
    \label{fig:gen-transformations}
    %\vspace{-7mm}
\end{figure}

We now show how to take $\lambda_{\text{Gen}}$ programs and automatically compile them into $\lambda_{\text{ADEV}}$ programs that simulate traces or compute density functions. We introduce two program transformations, $\textbf{sim}$ and $\textbf{density}$, which take $\lambda_\text{Gen}$ terms to $\lambda_\text{ADEV}$ terms that implement the desired functionality. The intuition for these transformations is as follows:
\begin{itemize}[leftmargin=*]
    \item ($\textbf{sim}\{\cdot\}$) At a high level, to simulate traces, we just run the generative program and record the value of every sample we take into a growing trace data structure.\footnote{In the full specification for $\textbf{sim}$, along with the simulated trace, the transformed term also returns (with probability 1) the density of the term evaluated at the simulated trace.}
    \item ($\textbf{density}\{\cdot\}$) To compute the density of a trace, we execute the program, but fix the value of every primitive $\sample$ to equal the recorded value from the given trace, and multiply a running joint density by the density for the primitive. For $\observe$, we multiply the running joint density by the density of the primitive evaluated at the value provided to the $\observe$ statement. 
\end{itemize}    
For example, simulating from the unnormalized model in Fig.~\ref{fig:example_transcript} and then evaluating the density of the sampled trace --- using $\textbf{sim}$ and $\textbf{density}$ as introduced in Fig.~\ref{fig:gen-transformations} --- might produce:
{\small\begin{align*}
    \textbf{sim}\{\texttt{model}\} \rightsquigarrow (\{``x": 0.75, ``y": -2.2 \}, 1.3 \times 10^{-4}), \quad \textbf{density}\{\texttt{model}\}(\{``x": 0.75, ``y": -2.2\}) = 1.3\times 10^{-4}
\end{align*}}\noindent Note that although they are not usually formalized as program transformations or rigorously proven correct, the techniques we describe here are well-known and widely used in the implementations of PPLs, e.g. in Gen, ProbTorch, and Pyro.

%\vspace{0.1in}
\paragraph{Differentiability Properties of Densities.} In the context of variational inference, density functions must satisfy certain differentiability properties\textemdash with respect to the parameters being learned, and possibly with respect to the location at which the density is being queried, depending on the gradient estimators one wishes to apply. Previous work has developed specialized static analyses to determine smoothness properties of different parts of programs, in order to reason about gradient estimation for variational inference~\cite{lee_smoothness_2023}. A benefit of our approach is that it greatly simplifies this reasoning: the overall differentiability requirements for gradient estimation are enforced by the type system of the target-language ($\lambda_{\text{ADEV}}$), as in~\cite{lew_adev_2023}. We translate well-typed source-language programs into well-typed target-language density evaluators and trace simulators, which can be composed into well-typed variational objectives. This implies a non-trivial result: 
%Thus, we obtain a modular way of guaranteeing that 
the restrictions that $\lambda_{\text{Gen}}$ enforces on models and variational families are sufficient to ensure the necessary differentiability properties for unbiased estimation of variational objective gradients. Furthermore, these guarantees can be modularly extended to new variational objectives, gradient estimation strategies, and modeling language features. We discuss this further at the end of \S\ref{sec:adev}.

% Our $\lambda_{\text{Gen}}$ language enforces certain restrictions on model and guide programs via the distinction between the $\RR$ and $\RR^*$ types. This section's program transformations produce correct density functions and simulators that are well-typed $\lambda_{\text{ADEV}}$ programs. 
% What is novel in this section is that our compiler produces well-typed $\lambda_{\text{ADEV}}$ programs, and as such, the compiled programs are guaranteed to satisfy certain properties necessary for their sound within $\lambda_{\text{ADEV}}$ variational objectives. This is only possible because $\lambda_{\text{Gen}}$ enforces certain restrictions (primarily via the $\RR$ and $\RR^*$ types) 
% \begin{itemize}[leftmargin=*]
% \item \textit{Density functions are differentiable w.r.t. their parameters.} Our type system (namely, the distinction between $\RR$ and $\RR^*$ in the shared core) introduces certain restrictions on the $\lambda_{\text{Gen}}$ probabilistic programs that users can write. By formalizing density computation as a typed program transformation, we can guarantee that our source-language restrictions on generative models and guides yield density functions that adhere to our target-language restrictions.
% \item \textit{Trace simulators carry gradient estimator information.} Unlike programs in many PPLs, $\lambda_{\text{Gen}}$ programs are annotated with information about how gradients should be estimated
% \end{itemize}

\subsection{Compiling Differentiable Density Evaluators}

The density program transformation $\textbf{density}$ is given in Fig.~\ref{fig:gen-transformations} (top). As input, it processes a $\lambda_{\text{Gen}}$ term $t$ of type $\sigma \to G~\tau$: a generative program that has some (ground type) parameter or input. The result of the transformation is a $\lambda_{\text{ADEV}}$ term of type $\sigma \to \trace \to \RR$, which, given a parameter $\theta$ and a trace $u$, computes the density of $\sem{t}_1(\theta)$ at $u$.

The transformation is only defined for source programs of type $\sigma \to G~\tau$, but it is implemented using a helper transformation $\xi$ that is defined at all source-language types. The intended behavior of $\xi$ applied to a source-language term depends on the type of the term; we encode this type-dependent specification into a family of \textit{relations} $R_\tau^\xi$ indexed by types $\tau$ (see Fig.~\ref{fig:gen-transformations}). Each $R_\tau^\xi$ is a subset of $\sem{\tau} \times \sem{\xi\{\tau\}}$; if $(x, y) \in R_\tau^\xi$, it means that it is permissible for $\xi$ to translate a term denoting $x$ into a term denoting $y$. For example, $R_{D~\sigma}^\xi$ relates measures on $\sem{\sigma}$ to their density functions with respect to $\mathcal{B}_\sigma$, to encode that $\xi$ should transform primitive distribution terms into terms that compute primitive distribution densities. More interesting is the specification for $\xi$ on full probabilistic programs, of type $G~\tau$. Because the term under consideration may be only part of a larger program, $\xi$ must compute not just a density, but also a return value (for use later in the program) and a \textit{remainder} of its input trace, containing choices not consumed while processing the current term. The relation $R^\xi_{G~\tau}$ encodes this intuition using the function $\textit{split}_\mu$, which splits a trace $u$ into two parts: the largest subtrace of $u$ in the support of $\mu$ and the remaining subtrace, or, if no such subtrace exists, all of $u$ and the empty trace (see Appx.~\ref{appx:proof-details} for a formal definition).

To prove that $\mathbf{density}$ works correctly, we first prove that $\xi$ satisfies its intended specifications:

\begin{lemma}\label{lem:fundamental-xi}
    Let $\Gamma \vdash t : \tau$ be an open term of $\lambda_{\text{Gen}}$. Then $\xi\{\Gamma\} \vdash \xi\{t\} : \xi\{\tau\}$ is a well-typed open term of $\lambda_{\text{ADEV}}$, and $\forall (\gamma, \gamma') \in R^\xi_\Gamma, (\sem{\tau}(\gamma), \sem{\xi\{\tau\}}(\gamma')) \in R^\xi_\tau$.
\end{lemma}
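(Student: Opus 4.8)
The plan is to prove Lemma~\ref{lem:fundamental-xi} as the \emph{fundamental lemma} of the logical relation $R^\xi$, by structural induction on the typing derivation of $\Gamma \vdash t : \tau$, establishing the syntactic claim (well-typedness of $\xi\{t\}$ at $\xi\{\tau\}$ in context $\xi\{\Gamma\}$) and the semantic claim (membership in $R^\xi_\tau$) \emph{simultaneously}, since the latter only typechecks once the former is known. Here $R^\xi_\Gamma$ is the pointwise extension of the type-indexed relations to environments: $(\gamma, \gamma') \in R^\xi_\Gamma$ iff $(\gamma(x), \gamma'(x)) \in R^\xi_{\Gamma(x)}$ for every $x \in \Gamma$. (The stated conclusion should read $(\sem{t}(\gamma), \sem{\xi\{t\}}(\gamma')) \in R^\xi_\tau$.)

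The shared-core cases follow the standard logical-relations script: the relations at product and arrow types are defined exactly so that pairing/projection and abstraction/application preserve relatedness, while variables and $()$ are immediate from the definitions of $R^\xi_\Gamma$ and $R^\xi_1$. For primitive distributions ($\normal_{\textit{strat}}$, $\flip_{\textit{strat}}$) and the arithmetic/comparison primitives, the obligation reduces to a closed-form identity: one verifies that the function $\lambda\mu.\lambda\sigma.\lambda x.\frac{1}{\sigma\sqrt{2\pi}}e^{-\frac12((x-\mu)/\sigma)^2}$ produced by $\xi\{\normal_{\textit{strat}}\}$ is precisely the Lebesgue density $\frac{d\mathcal N(\mu,\sigma)}{d\mathcal B_\sigma}$ demanded by $R^\xi_{D~\sigma}$ (and analogously for $\flip$ against the counting measure).

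The substance lies in the four generative constructs, where $R^\xi_{G~\tau}$ requires exhibiting a witness $h : \mathbb{T} \to \sem{\xi\{\tau\}}$ and checking that $\xi\{t\}$ equals the ``split-and-report-density'' function specified in Fig.~\ref{fig:gen-transformations}. For $\return~t$ the consumed measure is $\delta_{\{\}}$, so $\textit{split}$ returns $(\{\}, u)$, the reported weight is $1$, and the witness $h = \lambda u.\sem{\xi\{t\}}(\gamma')$ is related to the return-value function by the IH on $t$. For $\observe$ one takes $h = \lambda u.()$ and uses that the measure is a Dirac at the empty trace scaled by $\frac{d\sem{t_1}(\gamma)}{d\mathcal B_\sigma}(\sem{t_2}(\gamma))$, which is exactly $\xi\{t_1\}(\xi\{t_2\})$ by IH. For $\sample~t_1~t_2$ one must match the operational $\textit{pop}$ of the key $\xi\{t_2\}$ against $\textit{split}_{\mu}$ for the singleton-supported pushforward measure $\mu$, again using that $\xi\{t_1\}$ computes the density of $\sem{t_1}(\gamma)$.

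I expect the main obstacle to be the sequencing case $\haskdo\{x\gets t; m\}$, which needs an auxiliary lemma that $\textit{split}$ for the bind measure decomposes \emph{sequentially}: peeling a subtrace $u_1$ (with remainder $u'$) off the whole input $u$ via $\xi\{t\}$ and then running $\xi\{\haskdo\{m\}\}$ on $u'$ reproduces $\textit{split}$ by the full sequenced measure, and the accumulated weight $w\cdot v$ equals its Radon--Nikodym derivative at the concatenated consumed subtrace. This is where the \emph{absolute continuity} and \emph{discrete-structure} properties of $\lambda_{\text{Gen}}$ measures, together with the $\textit{disj}$ bookkeeping in the transform, do the work: discrete-structuredness makes the ``largest in-support subtrace'' well defined and unique, so the two-stage splitting is unambiguous, while absolute continuity lets the derivative of the pushforward-of-concatenation factor as a product of the two stage densities. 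It then remains to build the bind witness from $h_t$ and the continuation's witness, and to check that the extended source environment $\gamma' = \gamma[x\mapsto \sem{t}_2(\gamma,u_1)]$ and its $\lambda_{\text{ADEV}}$ counterpart lie in $R^\xi_{\Gamma,\,x:\tau_1}$, which is what licenses the inductive hypothesis on $\haskdo\{m\}$.
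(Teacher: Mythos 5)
Your proposal is correct and follows essentially the same route as the paper's proof: a logical-relations induction with routine shared-core cases, direct verification of primitive densities, and the real work concentrated in the $G~\tau$ constructs, where the sequencing case is handled by showing that two-stage splitting (via $\xi\{t\}$ then $\xi\{\haskdo\{m\}\}$) agrees with $\textit{split}_\mu$ for the bind measure and that the weights multiply to the correct Radon--Nikodym derivative. The paper organizes the sequencing argument as a three-way case analysis (no in-support subtrace; first stage succeeds but second fails; both succeed) rather than a separately stated splitting lemma, but the mathematical content, including the appeal to discrete-structuredness and absolute continuity, is the same.
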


The proof (in Appx.~\ref{appx:proof-details}) is by induction, but because the inductive hypothesis is different at each type $\tau$ (depending on our definition of $R_\tau^\xi$), it is an example of what is often called a \textit{logical relations} proof. Once it is proven, we are ready to prove the main correctness theorem for densities:

\noindent\begin{theorem}
    Let $\vdash t : \sigma \to G~\tau$ be a closed $\lambda_{\text{Gen}}$ term for some ground type $\sigma$. Then $\vdash {\mathbf{density}}\{t\} : \sigma \to \trace \to \RR$ is a well-typed $\lambda_{\text{ADEV}}$ term and for all $\theta \in \sem{\sigma}$, $\sem{\mathbf{density}\{t\}}(\theta)$ is a density function for $\pi_1(\sem{t})(\theta)$ with respect to $\mathcal{B}_\mathbb{T}$.
\end{theorem}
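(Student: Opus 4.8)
The plan is to deduce the theorem from Lemma~\ref{lem:fundamental-xi} by specializing its logical-relations conclusion to the closed term $t$ and then unfolding the definitions of $\mathbf{density}$ and of $R^\xi_{G~\tau}$. Well-typedness is immediate: the syntactic half of Lemma~\ref{lem:fundamental-xi} gives $\vdash \xi\{t\} : \xi\{\sigma \to G~\tau\} = \sigma \to (\trace \to \xi\{\tau\} \times \RR \times \trace)$, and the wrapper binds the triple $\xi\{t\}(\theta)(u) = (x,w,u')$ and returns $w : \RR$ or $0 : \RR$ under a Boolean guard, so $\mathbf{density}\{t\}$ typechecks at $\sigma \to \trace \to \RR$. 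The substance of the theorem is therefore the semantic claim, and it suffices to show that for each $\theta \in \sem{\sigma}$ the function $\sem{\mathbf{density}\{t\}}(\theta)$ agrees $\mathcal{B}_\mathbb{T}$-almost everywhere with a Radon--Nikodym derivative of $\mu := \sem{t}_1(\theta)$, which exists since every $G~\tau$-denotation is absolutely continuous with respect to $\mathcal{B}_\mathbb{T}$.

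Next I would extract the semantic behaviour of the helper. Since $t$ is closed, Lemma~\ref{lem:fundamental-xi} applied at the empty context yields $(\sem{t}, \sem{\xi\{t\}}) \in R^\xi_{\sigma \to G~\tau}$. Unfolding $R^\xi_{\tau_1 \to \tau_2}$ and using $R^\xi_\sigma = \{(x,x)\}$, I instantiate at the pair $(\theta,\theta)$ to obtain $(\sem{t}(\theta), \sem{\xi\{t\}}(\theta)) \in R^\xi_{G~\tau}$. Writing $\sem{t}(\theta) = (\mu, f)$ and $g := \sem{\xi\{t\}}(\theta)$, the definition of $R^\xi_{G~\tau}$ then supplies some $h$ with
\[
g = \lambda u.\ \llet\ (u_1,u_2) = \textit{split}_\mu(u)\ \lin\ \Bigl(h(u),\ \tfrac{d\mu}{d\mathcal{B}_\mathbb{T}}(u_1),\ u_2\Bigr).
\]
Here the return-value component $h$ plays no role; only the middle density component and the trace remainder $u_2$ matter.

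I would then compute the wrapper directly. Substituting this closed form of $g$ into the denotation of $\mathbf{density}\{t\}$ and reducing the $\llet$/$\mathbf{if}$ gives, for every trace $u$ with $(u_1,u_2) = \textit{split}_\mu(u)$,
\[
\sem{\mathbf{density}\{t\}}(\theta)(u) \;=\; \begin{cases} \tfrac{d\mu}{d\mathcal{B}_\mathbb{T}}(u_1) & \textit{isempty}(u_2),\\ 0 & \text{otherwise.}\end{cases}
\]
The argument now splits on whether $u_2$ is empty. If $u_2 = \{\}$ then, because $\textit{split}_\mu$ partitions the keys of $u$ as $u = u_1 \mdoubleplus u_2$, we have $u_1 = u$, so the value computed is exactly $\tfrac{d\mu}{d\mathcal{B}_\mathbb{T}}(u)$. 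If $u_2 \neq \{\}$ then $\textit{split}_\mu$ selected a \emph{proper} supported subtrace of $u$, which forces $u \notin \mathrm{supp}(\mu)$ (had $u$ itself lain in the support, $\textit{split}_\mu$ would have returned $(u,\{\})$); hence $\tfrac{d\mu}{d\mathcal{B}_\mathbb{T}}(u) = 0$ for $\mathcal{B}_\mathbb{T}$-almost all such $u$, matching the returned value $0$. Combining the two cases yields $\sem{\mathbf{density}\{t\}}(\theta) = \tfrac{d\mu}{d\mathcal{B}_\mathbb{T}}$ $\mathcal{B}_\mathbb{T}$-almost everywhere, which is the desired conclusion.

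The only delicate points, and where I expect the real work to live, are the two facts about $\textit{split}_\mu$ used in the last step: that $\textit{split}_\mu$ is even well-defined (a largest supported subtrace of $u$ exists and is essentially unique), and that ``$u_2 \neq \{\}$'' genuinely certifies $u \notin \mathrm{supp}(\mu)$. Both rest on the structural guarantees carried by every $G~\tau$-denotation: \emph{discrete-structuredness} makes the supported subtrace of a given $u$ unique (two restrictions of the same $u$ that both lie in the support agree on all shared names, hence cannot differ on one, hence coincide), while \emph{absolute continuity} ensures $\tfrac{d\mu}{d\mathcal{B}_\mathbb{T}}$ exists and vanishes off the support up to a $\mathcal{B}_\mathbb{T}$-null set, which is exactly what turns the pointwise case analysis into an almost-everywhere equality. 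Pinning down $\textit{split}_\mu$ precisely (deferred to Appx.~\ref{appx:proof-details}) and tracking the a.e.\ qualifier through the case split is the main obstacle; everything else is routine unfolding.
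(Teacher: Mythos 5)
Your proposal is correct and follows essentially the same route as the paper's own proof: invoke Lemma~\ref{lem:fundamental-xi} on the closed term, instantiate the logical relation at $(\theta,\theta)$ to pin down the middle and last components of $\xi\{t\}(\theta)(u)$ via $\textit{split}_\mu$, and then case-split on whether the remainder trace is empty, using discrete-structuredness to justify that a non-empty remainder certifies $u \notin \mathrm{supp}(\mu)$. Your extra care about the almost-everywhere qualifier and the well-definedness of $\textit{split}_\mu$ matches what the paper defers to Appendix~\ref{appx:proof-details}.
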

\begin{proof}
    Fix $\theta \in \sem{\sigma}$ and let $(\mu, f) = \sem{t}(\theta)$. By Lemma~\ref{lem:fundamental-xi}, we have that $(\sem{t}(\theta), \sem{\xi\{t\}}(\theta)) \in R^\xi_{G~\tau}$. Now consider a trace $u \in \mathbb{T}$. The $\mathbf{density}$ macro invokes $\xi\{t\}~\theta$ on $u$ to obtain a triple $(x, w, u')$. By the definition of $R^\xi_{G~\tau}$, we have that $w = \frac{d\mu}{d\mathcal{B}_\mathbb{T}}(u_1)$ and $u' = u_2$, for $(u_1, u_2) = \textit{split}_\mu(u)$. 
    Recall that if $u$ is in the support of $\mu$, \textit{or} if \textit{no subtrace} of $u$ is in the support of $\mu$, then $\textit{split}_\mu$ returns $(u, \{\})$, causing $\mathbf{density}$ to enter its $\mathbf{then}$ branch and return $w = \frac{d\mu}{d\mathcal{B}_\mathbb{T}}(u_1) = \frac{d\mu}{d\mathcal{B}_\mathbb{T}}(u)$ as desired. If $u$ is not in the support of $\mu$ but has a subtrace that is, then $\textit{split}_\mu$ returns that subtrace, along with a non-empty $u_2$. In this case the $\mathbf{else}$ branch of $\mathbf{density}$ correctly returns $0$ (because $u$ is not in the support).
\end{proof}

% split: if there exists a subtrace of u in the support, split out the rest 

\subsection{Compiling Differentiable Trace Simulators}
The simulation program transformation $\textbf{sim}$ is given in Fig.~\ref{fig:gen-transformations} (bottom). Like $\textbf{density}$, it processes as input a $\lambda_{\text{Gen}}$ term $t$ of type $\sigma \to G~\tau$, but it generates a term of type $P~(\trace \times \RR)$, satisfying the specification that the pushforward by $\pi_1$ is the original program's measure over traces, $\sem{t}_1(\theta)$, and that with probability 1, the second component is the density of $\sem{t}_1$ evaluated at the sampled trace. Like $\textbf{density}$, $\textbf{sim}$ is implemented using a helper macro $\chi$ defined at all types. Fig.~\ref{fig:gen-transformations} presents the logical relations $R^\chi_\tau$ specifying the helper's intended behavior on terms of type $\tau$. These relations are simpler than those for $\xi$; for example, on terms of type $G~\tau$, $\chi$ has almost the same specification as $\textbf{sim}$ itself, except that it must also compute a return value. 

One feature of $\chi$ that is worth noting is its translations of primitives. If a primitive used within a traced probabilistic program is annotated with a gradient estimation strategy, then the translated program uses the same annotated primitive, and then computes a density. This is only well-typed because the \textit{density functions} of primitives that return $\RR$ values (i.e., not $\RR^*$ values) are smooth. This is not a requirement of ADEV in general, but we require it in order to automate differentiable traced simulation.

To prove correctness, we again begin by showing the helper is sound:

\begin{lemma}\label{lem:fundamental-chi}
    Let $\Gamma \vdash t : \tau$ be an open term of $\lambda_{\text{Gen}}$. Then $\chi\{\Gamma\} \vdash \chi\{t\} : \chi\{\tau\}$ is a well-typed open term of $\lambda_{\text{ADEV}}$, and $\forall (\gamma, \gamma') \in R^\chi_\Gamma, (\sem{\tau}(\gamma), \sem{\chi\{\tau\}}(\gamma') \in R^\chi_\tau$.
\end{lemma}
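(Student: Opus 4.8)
The plan is to prove this \emph{fundamental lemma} in exactly the same style as its counterpart for $\xi$ (Lemma~\ref{lem:fundamental-xi}): by structural induction on the typing derivation of $\Gamma \vdash t : \tau$, discharging two obligations at every case — that $\chi\{t\}$ is well-typed at $\chi\{\tau\}$ in $\lambda_{\text{ADEV}}$, and that related environments are carried to related outputs, i.e. $(\sem{t}(\gamma), \sem{\chi\{t\}}(\gamma')) \in R^\chi_\tau$ whenever $(\gamma, \gamma') \in R^\chi_\Gamma$. The well-typedness obligation is a routine syntactic check that folds into the same induction. For the shared-core constructs (variables, $()$, tuples, projections, abstraction, application, booleans, $\mathbf{if}$) the transformation is homomorphic and $R^\chi$ is the \emph{standard} logical relation — equality on ground types, componentwise on products, related-inputs-to-related-outputs on arrows — so these cases reduce to the usual compatibility lemmas, using that equality of booleans forces both arms of an $\mathbf{if}$ down the same branch.

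The first genuinely new cases are the primitives and the non-sequencing generative constructs. For $\normal_{\mathit{strat}}$ and $\flip_{\mathit{strat}}$ I would verify the two conjuncts of $R^\chi_{D~\sigma}$ separately: that the second component is literally the Radon--Nikodym derivative $\frac{d\mu}{d\mathcal{B}_\sigma}$ (checking the closed-form Gaussian / Bernoulli density against the base measure), and that the first component — the $\haskdo$ block that samples and pairs the sample with its density — denotes the sample-and-weight pushforward $\nu(U) = \int \mu(du)\,\delta_{(u, \frac{d\mu}{d\mathcal{B}_\sigma}(u))}(U)$, which follows from the $\lambda_{\text{ADEV}}$ semantics of $\sample$ and $\return$. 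For $\return~t$, $\observe~t_1~t_2$, and $\sample~t_1~t_2$ I would instantiate the witness $g$ in $R^\chi_{G~\tau}$ and compute both sides: the Dirac-at-empty-trace measure with weight $1$; the scaled Dirac whose mass is the observed density, obtained from $\score$; and the pushforward onto singleton traces $\{\sem{t_2}(\gamma) \mapsto v\}$ with weight $\frac{d\sem{t_1}(\gamma)}{d\mathcal{B}_\sigma}(v)$. In each case the trace density $\frac{d\mu}{d\mathcal{B}_\mathbb{T}}$ reduces to precisely the weight that the transformed program accumulates.

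The main obstacle is the sequencing case $\haskdo\{x \gets t; m\}$, where both a measure-theoretic and a relational subtlety appear. Writing $(\mu_t, f_t) = \sem{t}(\gamma)$ and letting $g_t$ be the return-value witness supplied by the induction hypothesis on $t$, I must first show on the measure side that the $\lambda_{\text{ADEV}}$ bind — draw $(x,w,u_1)$ from $\chi\{t\}$, draw $(y,v,u_2)$ from $\chi\{\haskdo\{m\}\}$, then branch on $\textit{disj}(u_1,u_2)$ — denotes exactly the iterated integral required by $R^\chi_{G~\tau_2}$ for the composite measure $\sem{\haskdo\{x \gets t; m\}}_1(\gamma)$. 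This is a Fubini-style computation in which the returned weight $w \cdot v$ must be identified with $\frac{d\mu}{d\mathcal{B}_\mathbb{T}}(u_1 \mdoubleplus u_2)$, a factorization of the joint trace density over the concatenation, and the $\mathbf{else}$-branch's $\score~0$ must be matched against the $\textit{disj}(u_1,u_2)$ factor in the source semantics, which annihilates executions that reuse a name. On the relational side, binding $x$ forces $\haskdo\{m\}$ to be evaluated in $\gamma'[x \mapsto g_t(u_1)]$, so applying the induction hypothesis to $m$ requires $(f_t(u_1), g_t(u_1)) \in R^\chi_{\tau_1}$, whereas $R^\chi_{G~\tau_1}$ only hands me membership in $R^\xi_{\tau_1}$. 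I would close this gap with a small compatibility lemma showing $R^\xi_\tau = R^\chi_\tau$ (indeed $\xi\{\tau\} = \chi\{\tau\}$) on the fragment of types built solely from ground types, products, and arrows — precisely the types that occur as generative return values, since values of type $D~\sigma$ or $G~\sigma$ are never bound out of a trace. Establishing this compatibility by an easy side induction and threading it through the bind is the crux of the argument; the remaining bookkeeping is routine.
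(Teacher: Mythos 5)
Your proposal is correct and follows essentially the same route as the paper's proof: structural induction via logical relations, with the standard compatibility arguments for the shared core, direct verification of the two conjuncts of $R^\chi_{D~\sigma}$ for primitives, explicit pushforward witnesses for $\return$, $\observe$, and $\sample$, and the same weight-factorization argument in the $\haskdo$ sequencing case. If anything you are more careful than the paper at the one delicate point you identify: the figure's definition of $R^\chi_{G~\tau}$ relates return values via $R^\xi_\tau$, yet the paper's own $\haskdo$ case silently invokes the inductive hypothesis as if it yielded $R^\chi_{\tau'}$-relatedness, whereas you make the needed coincidence of the two relations on return-value types an explicit side lemma.
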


The proof is again by logical relations, and can be found in Appx.~\ref{appx:proof-details}. We can then prove the correctness of $\textbf{sim}$ itself:
\noindent\begin{theorem}
    Let $\vdash t : G~\tau$ be a closed $\lambda_{\text{Gen}}$ term. Then $\vdash {\mathbf{sim}}\{t\} : P~(\trace \times \RR)$ is a well-typed $\lambda_{\text{ADEV}}$ term and $\sem{\mathbf{sim}\{t\}}$ is the pushforward of $\pi_1(\sem{t})$ by the function $\lambda u. \left(u, \frac{d\sem{t}_1}{d\mathcal{B}_{\mathbb{T}}}(u)\right)$.
\end{theorem}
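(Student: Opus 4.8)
The plan is to reduce the theorem to the soundness of the helper transformation $\chi$ (Lemma~\ref{lem:fundamental-chi}), in exactly the way the density theorem reduces to Lemma~\ref{lem:fundamental-xi}. Since $\vdash t : G~\tau$ is closed, the empty context and empty environment are trivially related, so Lemma~\ref{lem:fundamental-chi} immediately delivers two facts: that $\vdash \chi\{t\} : \chi\{G~\tau\} = P~(\chi\{\tau\} \times \RR \times \trace)$ is a well-typed $\lambda_{\text{ADEV}}$ term, and that $(\sem{t}, \sem{\chi\{t\}}) \in R^\chi_{G~\tau}$. Well-typedness of $\mathbf{sim}\{t\}$ then follows by typing the wrapper: the $\haskdo$-block binds $(x,w,u) : \chi\{\tau\} \times \RR \times \trace$ and returns $(u,w) : \trace \times \RR$, producing a term of type $P~(\trace \times \RR)$.

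For the semantic claim I would set $(\mu, f) = \sem{t}$ and unfold the definition of $R^\chi_{G~\tau}$. It supplies a return-value witness $g$ together with the identity $\sem{\chi\{t\}} = \nu = \lambda U. \int \mu(du)\, 1_U\!\left(g(u), \tfrac{d\mu}{d\mathcal{B}_\mathbb{T}}(u), u\right)$; that is, $\nu$ is the pushforward of $\mu = \pi_1(\sem{t})$ along the map $u \mapsto \bigl(g(u), \tfrac{d\mu}{d\mathcal{B}_\mathbb{T}}(u), u\bigr)$. The witness $g$ and the relatedness of the return values $(f(u), g(u))$ play no role in the final measure, since the wrapper discards the first component.

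Next I would compute the denotation of the wrapper using the $\haskdo$ and $\return$ clauses of the $\lambda_{\text{ADEV}}$ semantics: $\sem{\mathbf{sim}\{t\}}$ is the pushforward of $\nu$ along $\lambda(x,w,u).(u,w)$. Composing the two pushforwards collapses the intermediate triple and drops the $g$ component, leaving $\sem{\mathbf{sim}\{t\}}$ equal to the pushforward of $\mu$ along $u \mapsto \bigl(u, \tfrac{d\mu}{d\mathcal{B}_\mathbb{T}}(u)\bigr)$. Since $\mu = \pi_1(\sem{t}) = \sem{t}_1$, this is exactly the pushforward of $\pi_1(\sem{t})$ by $\lambda u. \left(u, \tfrac{d\sem{t}_1}{d\mathcal{B}_\mathbb{T}}(u)\right)$, as required.

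The genuinely hard work lives in Lemma~\ref{lem:fundamental-chi}, which I assume here; given it, the theorem is a short unfolding followed by a pushforward composition. The one point that needs care is measure-theoretic rather than combinatorial: I must make sure the $\score$ factors inserted by $\chi$ (for $\observe$ statements, and for the disjointness guard that inserts $\score~0$ around $\mdoubleplus$) are exactly what forces $\nu$ to push forward the reweighted, possibly sub-probability measure $\mu = \sem{t}_1$ rather than the underlying prior. This is precisely the content certified by $R^\chi_{G~\tau}$, so no separate argument is needed beyond invoking the lemma and composing the pushforwards correctly.
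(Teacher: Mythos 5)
Your proposal is correct and follows essentially the same route as the paper's own proof: both reduce the theorem to Lemma~\ref{lem:fundamental-chi}, unfold the definition of $R^\chi_{G~\tau}$ to read off that $\sem{\chi\{t\}}$ is the pushforward of $\mu$ by $u \mapsto (g(u), \tfrac{d\mu}{d\mathcal{B}_\mathbb{T}}(u), u)$, and observe that the wrapper merely projects out the last two components. Your version is in fact slightly more explicit than the paper's (which compresses the final step into ``the requirements placed by $R^\chi_{G~\tau}$ on $w$ and $u$ are precisely the conditions we aim to prove''), but there is no substantive difference in approach.
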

\begin{proof}
    Fix $\theta \in \sem{\sigma}$ and let $(\mu, f) = \sem{t}(\theta)$. By Lemma~\ref{lem:fundamental-chi}, we have that $(\sem{t}(\theta), \sem{\chi\{t\}}(\theta)) \in R^\chi_{G~\tau}$. The $\mathbf{sim}$ macro invokes $\chi\{t\} \theta$ to obtain a triple $(x, w, u)$, but only returns $(u, w)$. Observe that the requirements placed by $R^\chi_{G~\tau}$ on $w$ and $u$ are precisely the conditions we aim to prove here.
\end{proof}

\section{Variational Inference via Differentiable Probabilistic Programming}
\label{sec:vi}

\label{sec:adev}
\begin{figure}[t]
\scriptsize{
\begin{tabular}{ll}
    % Overall spec
   %{Spec}  & 
   \framebox{\textcolor{blue}{\textbf{\textit{Syntax}}} \;\; \begin{tabular}[t]{@{}l}$\vdash t : \RR^n \to \eRR \implies \vdash \mathbf{adev}\{t\} : \RR^n \to \RR^n \to \eRR$\\
   $\vdash t : \RR^n \to \eRR \implies \vdash \mathbf{dom}\{t\}_{(\theta, i)} : \RR \times \RR^* \to \RR$\end{tabular} \quad \;\textcolor{blue}{\textbf{\textit{Semantics}}} \;\; \begin{tabular}[t]{@{}l}
   $(\forall \theta \in \RR^n, i \in \underline{n}. \sem{\mathbf{dom}\{t\}_{(\theta, i)}} \text{ locally dom'd}) \implies$ \\
   $\mathbb{E}_{x \sim \sem{\mathbf{adev}\{t\}}(\mathbf{\theta}, \mathbf{v})}[x] = \left(\nabla_\theta \int_{\RR} x~\sem{t}(\mathbf{\theta}, dx)\right)^T \mathbf{v}$\end{tabular}} \\
   % Wrapper implementation
   %\rule{0pt}{4ex}{Wrapper}  & 
   \rule{0pt}{4ex}{$\mathbf{adev}\{t\} \coloneqq \lambda \theta. \lambda \mathbf{v}. \mathbb{E} (\haskdo \{(y, y') \gets \mathcal{D}\{t\}~((\theta_1, \mathbf{v}_1),\dots,(\theta_n, \mathbf{v_n})); \return~y'\})$}\\
   %& 
   $\mathbf{dom}\{t\}_{(\theta, i)} \coloneqq \lambda (\phi, x). \pi_2(\pi_2(\mathcal{D}\{t\} ((\theta_1, 0), \dots, (\theta_{i-1}, 0), (\phi, 1), (\theta_{i+1}, 0), \dots, (\theta_n, 0))~ x$\\
   % Spec for helper
   %\rule{0pt}{6ex}{Helper} $\mathcal{D}$ & 
   \rule{0pt}{6ex}{\framebox{\textcolor{blue}{\textbf{\textit{Syntax}}}\;\;$\Gamma \vdash_{\text{ADEV}} t : \tau \implies \mathcal{D}\{\Gamma\} \vdash_{\text{ADEV}} \mathcal{D}\{t\} : \mathcal{D}\{\tau\}$\quad \quad\textcolor{blue}{\textbf{\textit{Semantics}}}\;\;$\forall (\gamma, \gamma') \in R^{\mathcal{D}}_{\Gamma}, (\sem{t} \circ \gamma, \sem{\mathcal{D}\{t\}} \circ \gamma') \in R^{\mathcal{D}}_\tau$\quad\quad\quad\quad\;\;}}%\\
   \end{tabular}

   % Implementation on types
    %\rule{0pt}{4ex}\;\textit{on types} & 
    \begin{center}
    \vspace{2mm}\textit{\textbf{Type Translation and Logical Relations}}\vspace{2mm}
    \begin{tabular}[t]{@{}l@{}l|l@{}l}
        $\mathcal{D}\{\RR\}$ &$\coloneqq \RR \times \RR$ & $R^\mathcal{D}_\RR$ &$\coloneqq \{(f, g) \mid g = \lambda r. (f(r), f'(r))\}$\\
        $\mathcal{D}\{\sigma\}$ &$\coloneqq \sigma \quad\quad\quad (\sigma \in \{\RR^*, \mathbb{B}, \text{Str}\})$ & $R^\mathcal{D}_{\sigma}$ &$\coloneqq \{(f, f) \mid f\text{ constant}\}$\\
        $\mathcal{D}\{\tau_1 \times \tau_2\}$&$\coloneqq \mathcal{D}\{\tau_1\} \times \mathcal{D}\{\tau_2\}$ & $R^\mathcal{D}_{\tau_1 \times \tau_2}$ &$\coloneqq \{(f,g) \mid \forall i \in \{1, 2\}. (\pi_i \circ f, \pi_i \circ g) \in R^\mathcal{D}_{\tau_i}\}$\\
        $\mathcal{D}\{\tau_1 \to \tau_2\}$&$\coloneqq \mathcal{D}\{\tau_1\} \to \mathcal{D}\{\tau_2\}$ & $R^\mathcal{D}_{\tau_1 \to \tau_2}$ &$\coloneqq \{(f,g) \mid \forall (x,y) \in R^\mathcal{D}_{\tau_1}, (\lambda r. f(r)(x(r)),g(r)(y(r))) \in R^\mathcal{D}_{\tau_2}\}$\\
        % $\mathcal{D}\{\RR^*\}$ &$\coloneqq \RR^*$ & $R^\mathcal{D}_{\RR^*}$ &$\coloneqq \{(f, f) \mid f\text{ constant}\}$\\
        % $\mathcal{D}\{\mathbb{B}\}$ &$\coloneqq \mathbb{B}$ & $R^\mathcal{D}_\mathbb{B}$ &$\coloneqq \{(f, f) \mid f \text{ constant}\}$ \\
        % $\mathcal{D}\{\text{Str}\}$ &$\coloneqq \text{Str}$ & $R^\mathcal{D}_\text{Str}$ &$\coloneqq \{(f, f) \mid f \text{ constant}\}$ \\
        $\mathcal{D}\{\trace\}$ &$\coloneqq \trace$ & $R^\mathcal{D}_\trace$ &$\coloneqq \{(f, g) \mid \forall k \in \sem{\text{Str}}. (\lambda r. f(r)[k], \lambda r. g(r)[k]) \in R^{\mathcal{D}}_\sigma\}$ \\
        $\mathcal{D}\{D~\sigma\}$&$\coloneqq \mathcal{D}\{P~\sigma\}$ & $R^\mathcal{D}_{D~\sigma}$ &$\coloneqq R^\mathcal{D}_{P~\sigma}$\\
        $\mathcal{D}\{\eRR\}$&$\coloneqq P~(\RR \times \RR) \times (\RR^* \to \RR \times \RR)$ &$R^\mathcal{D}_{\eRR}$ &\begin{tabular}[t]{@{}l} 
        $\coloneqq \{(\mu, \nu) \mid 
        \forall \theta. \int_{\RR} h_1(\theta)(s) ds = \int_{\RR} x \mu(\theta, dx) = \mathbb{E}_{x \sim {\pi_1}_*(\pi_1 \circ \nu)(\theta)}[x] \wedge$ \\
        \quad\quad$\forall \theta. \int_{\RR} h_2(\theta)(s) ds = \mathbb{E}_{x \sim {\pi_2}_*({\pi_1}_*\nu)(\theta)}[x] \wedge$\\
        \quad\quad$(\lambda \theta. \lambda s. h_1(\theta)(s), \lambda\theta.\lambda s. (h_1(\theta)(s), h_2(\theta)(s)) \in R_{\RR^* \to \RR}$\\
        % $(\lambda r. \mathbb{E}_{x \sim \mu(r)}[x], \lambda r. (\mathbb{E}_{(x,y) \sim \nu(r)}[x], \mathbb{E}_{(x,y) \sim \nu(r)}[y])) \in R_\RR^\mathcal{D}\}$\\
        \text{where } $h_i \coloneqq \lambda \theta. \lambda s. \pi_i((\pi_2 \circ \nu)(\theta)(s))$
        \end{tabular}\\
        $\mathcal{D}\{P~\tau\}$&$\coloneqq (\mathcal{D}\{\tau\} \to \mathcal{D}\{\eRR\}) \to \mathcal{D}\{\eRR\}$ & $R^\mathcal{D}_{P~\tau}$ &$\coloneqq \{(\mu, \nu) \mid (\lambda r. \lambda k. \lambda U. \mathbb{E}_{x \sim \mu(r)}[k(x, U)], \nu) \in R^{\mathcal{D}}_{(\tau \to \eRR) \to \eRR} \}$
    \end{tabular}
    \end{center}
   % \rule{0pt}{4ex}\quad\textit{on types} & $\xi\{\sigma\} \coloneqq \sigma \;\mid\; \xi\{D~\sigma\} \coloneqq \sigma \to \RR \;\mid\; \xi\{\tau_1 \to \tau_2\} \coloneqq \xi\{\tau_1\} \to \xi\{\tau_2\}$ \\
   % Implementation on terms
   %\rule{0pt}{4ex}\;\textit{on terms} & 
   \vspace{2mm}\textit{\textbf{Term Translation}}\vspace{2mm}
   \begin{tabular}[t]{@{}l@{}ll@{}l}
   $\mathcal{D}\{()\}$ &$\coloneqq ()$ &
   $\mathcal{D}\{\{\}\}$ &$\coloneqq \{\}$ \\
   $\mathcal{D}\{c\}$ &$\coloneqq c_\mathcal{D}$ &
   $\mathcal{D}\{x\}$ &$\coloneqq x$ \\
   $\mathcal{D}\{\lambda x. t\}$&$\coloneqq \lambda x. \mathcal{D}\{t\}$ &
   $\mathcal{D}\{t_1~t_2\}$&$\coloneqq \mathcal{D}\{t_1\}~\mathcal{D}\{t_2\}$ \\
   $\mathcal{D}\{(t_1,t_2)\}$&$\coloneqq (\mathcal{D}\{t_1\}, \mathcal{D}\{t_2\})$&
   $\mathcal{D}\{\pi_i~t\}$&$\coloneqq \pi_i~\mathcal{D}\{t\}$ \\
   $\mathcal{D}\{b\}~(b \in \{\mathbf{T},\mathbf{F}\})$ &$\coloneqq b$ &
   $\mathcal{D}\{\mathbf{if}~t~\mathbf{then}~t_1~\mathbf{else}~t_2\}$&$\coloneqq \mathbf{if}~\mathcal{D}\{t\}~\mathbf{then}~\mathcal{D}\{t_1\}~\mathbf{else}~\mathcal{D}\{t_2\}$ \\
   $\mathcal{D}\{\{t_1 \mapsto t_2\}\}$ &$\coloneqq \{\mathcal{D}\{t_1\} \mapsto \mathcal{D}\{t_2\}\}$ &
  $\mathcal{D}\{t_1 \mdoubleplus t_2\}$ &$\coloneqq \mathcal{D}\{t_1\} \mdoubleplus \mathcal{D}\{t_2\}$ \\
   $\mathcal{D}\{t_1[t_2]\}$ &$\coloneqq \mathcal{D}\{t_1\}[\mathcal{D}\{t_2\}]$ &
   $\mathcal{D}\{\sample~t\}$&$\coloneqq \mathcal{D}\{t\}$ \\
   $\mathcal{D}\{r\}$ &$\coloneqq (r, 0)$ &
   $\mathcal{D}\{\score~t\}$ &\begin{tabular}[t]{@{}l}
    $\coloneqq \lambda \kappa. (\haskdo \{y \gets \pi_1(\kappa(()));$\\
    \quad\quad\quad $\return (\mathcal{D}\{t\} \times_\mathcal{D} y)\},$\\
    \quad\quad\quad\quad $\lambda s. (\pi_2(\kappa(()))~s) \times_\mathcal{D} \mathcal{D}\{t\})$
    \end{tabular}\\
   $\mathcal{D}\{\return~t\}$&$\coloneqq \lambda \kappa. \kappa(\mathcal{D}\{t\})$ &
   $\mathcal{D}\{\haskdo~\{x \gets t; m\}\}$&$\coloneqq \lambda\kappa.\mathcal{D}\{t\}(\lambda x. \mathcal{D}\{\haskdo\{m\}\}(\kappa))$\\
   & & $\mathcal{D}\{\mathbb{E}~t\}$&\begin{tabular}[t]{@{}l}$\coloneqq\mathcal{D}\{t\}~(\lambda x.(\textbf{return}~x,\lambda s. x))$
    \end{tabular}\\
   $\mathcal{D}\{\mathbf{normal}_{\text{REPARAM}}\}$&\begin{tabular}[t]{@{}l}$\coloneqq\lambda ((\mu,\mu'),(\sigma, \sigma')). \lambda \kappa. (\haskdo \{$\\
   $\quad\quad \epsilon \gets \sample~(\mathbf{normal}_{\text{REPARAM}}(0, 1));$\\
   $\quad\quad \pi_1(\kappa((\sigma\epsilon + \mu, \sigma'\epsilon + \mu')))$\\
   $\quad\}, \lambda s. \dots)$
   \end{tabular} & 
   $\mathcal{D}\{\mathbf{normal}_{\text{REINFORCE}}\}$&\begin{tabular}[t]{@{}l}$\coloneqq\lambda ((\mu,\mu'), (\sigma, \sigma')). \lambda \kappa. (\haskdo \{$\\
   $\quad\quad x \gets \sample~ (\mathbf{normal}_{\text{REINFORCE}}(\mu, \sigma));$\\
   $\quad\quad (y, y') \gets \pi_1(\kappa((x, 0)));$\\
   $\quad\quad \llet~l' = \sigma'(\frac{1}{\sigma} + \frac{(y-\mu)^2}{\sigma^3}) + \mu' \frac{y-\mu}{\sigma^2};$\\
   $\quad\quad \return~(y, y'+yl')$\\
   $\quad\}, \lambda s. \dots)$
   \end{tabular}\\
   $\mathcal{D}\{\mathbf{flip}_{\text{ENUM}}\}$&\begin{tabular}[t]{@{}l}$\coloneqq\lambda (p,p'). \lambda \kappa. (\haskdo \{$\\
   $\quad\quad (y_T, y_T') \gets \pi_1(\kappa(\mathbf{T}));$\\
   $\quad\quad (y_F, y_F') \gets \pi_1(\kappa(\mathbf{F}));$\\
   $\quad\quad \llet~y=py_T + (1-p)y_F;$\\
   $\quad\quad \llet~y_1'=p'y_T + py_T';$\\
   $\quad\quad \llet~y_2'=(1-p)y_F' - p'y_F;$\\
   $\quad\quad \return~(y, y_1' + y_2')$\\
   %$\quad\quad x \gets \mathbf{flip}_{\text{ENUM}}(\mu, \sigma);$\\
   %$\quad\quad (y, dy) \gets \kappa((x, 0));$\\
   %$\quad\quad \llet~dl = d\sigma \times (\frac{1}{\sigma} + \frac{(y-\mu)^2}{\sigma^3}) + d\mu \times \frac{y-\mu}{\sigma^2};$\\
   %$\quad\quad \return~(y, dy+y\times dl)$\\
   $\quad\}, \lambda s. \dots)$
   \end{tabular} & 
   $\mathcal{D}\{\mathbf{flip}_{\text{REINFORCE}}\}$&\begin{tabular}[t]{@{}l}$\coloneqq\lambda (p,p'). \lambda \kappa. (\haskdo \{$\\
   $\quad\quad b \gets \sample~ (\mathbf{flip}_{\text{REINFORCE}}(p));$\\
   $\quad\quad (y, y') \gets \pi_1(\kappa(b));$\\
   $\quad\quad \llet~l'=\mathbf{if}~b~\mathbf{then}~\frac{p'}{p}~\mathbf{else}~\frac{p'}{p-1};$\\
   $\quad\quad \return (y, y' + yl')$\\
   $\quad\}, \lambda s. \dots)$ \\
   \end{tabular} \\

   % $\mathcal{D}\{\mathbf{normal}_{\text{REPARAM}}\}$&$\coloneqq \lambda(\mu, d\mu).\lambda(\sigma,d\sigma).\lambda\kappa.\haskdo \{\epsilon \gets \mathbf{normal}_{\text{REPARAM}}(0, 1); \kappa((\sigma \times \epsilon + \mu, d\sigma \times \epsilon + d\mu))\}$\\
   % $\mathcal{D}\{\mathbf{normal}_{\text{REINFORCE}}\}$&$\coloneqq \lambda(\mu, d\mu).\lambda(\sigma,d\sigma).\lambda\kappa.\haskdo \{x \gets \mathbf{normal}_{\text{REINFORCE}}(\mu, \sigma); (l, dl) \gets \kappa((x,0)); (p, dp) \gets \mathcal{D}\{\log\}; \return (l, dl + l \times dp)\}$\\
   \end{tabular}
   
%\end{tabular}
}
    \Description{ADEV program transformation}
    \caption{Monte Carlo gradient estimation as a program transformation from $\lambda_{\text{ADEV}}$ to $\lambda_{\text{ADEV}}$.}
    \label{fig:adev-transformation}
    \vspace{-2mm}
\end{figure}

As we saw in \S\ref{sec:overview}, the density and trace simulation programs automated in the previous section can be used to construct larger $\lambda_\text{ADEV}$ programs implementing variational objectives. Once we have a $\lambda_{\text{ADEV}}$ program representing our objective function, we need to differentiate it. Conventional AD systems do not correctly handle randomness in objective functions, or the expectation operator $\mathbb{E}$, and will produce biased gradient estimators when applied naively~\cite{lew_adev_2023}. For example, standard AD has no way of propagating derivative information through a primitive like $\text{flip}:[0,1]\rightarrow P~\mathbb{B}$ (to do so, one would need to define the notion of \textit{derivative of a Boolean with respect to the probability that it was heads}). The ADEV algorithm~\cite{lew_adev_2023} is designed to handle these features, and can be used to derive unbiased gradient estimators automatically.

%\vspace{0.05in}
\paragraph{Extending ADEV with Traces and Unnormalized Measures.} Fig.~\ref{fig:adev-transformation} gives the ADEV program transformation, extended to handle new datatypes (traces) and unnormalized measures (due to $\score$). Fig.~\ref{fig:adev-transformation} shows two top-level transformations, $\mathbf{adev}$ and $\mathbf{dom}$. The $\mathbf{dom}$ transformation exists solely for analytical purposes, to produce a term that must satisfy a \textit{local domination} condition in order for gradient estimates to be unbiased:\footnote{We have omitted several terms from Fig.~\ref{fig:adev-transformation}, denoted with ``$\dots$''. These terms are as in \cite[Fig. 26]{lew_adev_2023}, and do not affect the behavior of the $\mathbf{adev}$ transformation, only $\mathbf{dom}$.}
\begin{definition}[locally dominated]
A function $f : \RR \times \RR \to \RR$ is locally dominated if, for every $\theta \in \RR$, there is a neighborhood $U(\theta) \subseteq \RR$ of $\theta$ and an integrable function $m_{U(\theta)} : \RR \to [0, \infty)$ such that $\forall \theta' \in U(\theta), \forall x \in \RR, |f(\theta', x)| \leq m_{U(\theta)}(x)$. 
\end{definition}
\noindent Under this mild assumption, ADEV produces correct unbiased gradient estimators:
\begin{theorem}
\label{thm:unbiased-adev}
Let $\vdash t : \RR^n \to \eRR$ be a closed $\lambda_{\text{ADEV}}$ term, satisfying the following preconditions:
\begin{enumerate}
    \item $\int_\RR x~\sem{t}(\theta, dx)$ is finite for every $\theta \in \RR^n$.
    \item $\sem{\mathbf{dom}\{t\}_{(\theta, i)}}$ is locally dominated for every $\theta \in \RR^n$ and $i \in \underline{n}$.
\end{enumerate}
Then $\vdash {\mathbf{adev}}\{t\} : \RR^n \to \RR^n \to \eRR$ is a well-typed $\lambda_{\text{ADEV}}$ term, satisfying the following properties:
\begin{itemize}
    \item $\sem{\mathbf{adev}\{t\}}(\theta, \mathbf{v})$ is a probability measure with finite  expectation for all $\theta, \mathbf{v} \in \RR^n$.
    \item $\sem{t}$ is differentiable and $\mathbb{E}_{x \sim \sem{\mathbf{adev}\{t\}}(\theta, \mathbf{v})}[x] = \left(\nabla_\theta \int_\RR x~\sem{t}(\theta, dx) \right)^T \mathbf{v}$.
\end{itemize}
\end{theorem}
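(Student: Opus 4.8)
The plan is to follow the same logical-relations strategy used for $\mathbf{density}$ and $\mathbf{sim}$, but now applied to the ADEV transformation $\mathcal{D}$ and its family of relations $R^{\mathcal{D}}_\tau$ from Fig.~\ref{fig:adev-transformation}. First I would establish the fundamental lemma for $\mathcal{D}$: for every well-typed $\lambda_{\text{ADEV}}$ term $\Gamma \vdash t : \tau$, the transformed term $\mathcal{D}\{t\}$ is well-typed at $\mathcal{D}\{\tau\}$, and for all related environments $(\gamma, \gamma') \in R^{\mathcal{D}}_\Gamma$ we have $(\sem{t} \circ \gamma, \sem{\mathcal{D}\{t\}} \circ \gamma') \in R^{\mathcal{D}}_\tau$. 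The proof is by induction on the typing derivation. The purely functional cases (variables, abstraction, application, pairing, projection, conditionals) and the trace-manipulation cases ($\{\}$, $\{t_1 \mapsto t_2\}$, $\mdoubleplus$, $t_1[t_2]$) are routine once the relations are unfolded, since $\mathcal{D}$ acts homomorphically on them and the relations are defined compatibly. The substantive work is in the probabilistic fragment.

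Second, I would discharge the probabilistic cases, which are where the mathematics of gradient estimation actually appears. The monadic constructs ($\return$, $\bind$, $\sample$) are handled through the continuation-passing relation $R^{\mathcal{D}}_{P~\tau}$, reducing correctness of the monad to correctness at the answer type $\eRR$; this is the core ADEV idea from \cite{lew_adev_2023}. For each primitive distribution I would verify that its annotated estimation strategy is unbiased for the derivative: the reparameterization rule for $\normal_{\text{REPARAM}}$ (pushing $\partial$ through the map $\epsilon \mapsto \sigma\epsilon + \mu$), the score-function identity for $\normal_{\text{REINFORCE}}$ and $\flip_{\text{REINFORCE}}$ (the $y\,l'$ correction, with $l'$ the derivative of the log-density), and the finite-support summation for $\flip_{\text{ENUM}}$. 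The two genuinely new cases relative to \cite{lew_adev_2023} are $\score$ (unnormalized measures) and its interaction with the trace datatype; here I would check directly that the dual-number bookkeeping in $\mathcal{D}\{\score~t\}$ tracks both the unnormalized value and its derivative in a manner consistent with $R^{\mathcal{D}}_{\eRR}$.

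Third, I would instantiate the fundamental lemma at $t : \RR^n \to \eRR$. Reading $R^{\mathcal{D}}_\RR$ as tracking a curve together with its derivative, I would feed in the one-parameter curve $r \mapsto \theta + r\mathbf{v}$, whose derivative is $\mathbf{v}$, so that the dual-number inputs $((\theta_i, \mathbf{v}_i))_i$ supplied by $\mathbf{adev}$ correspond to a directional derivative in direction $\mathbf{v}$. The function-type relation then lands us in $R^{\mathcal{D}}_{\eRR}$, from which the first conjunct gives that the primal measure ${\pi_1}_*({\pi_1}\circ\nu)$ has expectation $\int_\RR x\,\sem{t}(\theta, dx)$ (finite by precondition (1)), and the second conjunct gives $\int_\RR h_2\,ds = \mathbb{E}_{x \sim {\pi_2}_*({\pi_1}_*\nu)}[x]$, the expectation of the derivative component that $\mathbf{adev}$ extracts. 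The third conjunct of $R^{\mathcal{D}}_{\eRR}$ tells us that $h_2 = \partial h_1$ pointwise in the integration variable.

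The crux --- and the step I expect to be the main obstacle --- is converting this pointwise derivative identity into the integrated statement $\int_\RR h_2\,ds = \partial \int_\RR h_1\,ds$, i.e.\ swapping differentiation and integration, and it is exactly what precondition (2) supplies. Local domination of $\sem{\mathbf{dom}\{t\}_{(\theta,i)}}$ provides, on a neighborhood of each $\theta$, an integrable dominating function, licensing the Leibniz rule (via dominated convergence) for each coordinate $i$. Carrying this out simultaneously establishes that $\theta \mapsto \int_\RR x\,\sem{t}(\theta, dx)$ is differentiable and that $\mathbb{E}_{x \sim \sem{\mathbf{adev}\{t\}}(\theta,\mathbf{v})}[x] = \left(\nabla_\theta \int_\RR x\,\sem{t}(\theta, dx)\right)^T\mathbf{v}$, completing the proof. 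The main care required is ensuring that $\mathbf{dom}\{t\}$ really does extract the derivative integrand $h_2 = \partial h_1$, whose local domination is precisely the sufficient condition for the Leibniz rule, and that applying the rule coordinate by coordinate suffices to differentiate the full multivariate integral.
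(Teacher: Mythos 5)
Your proposal follows essentially the same route as the paper, which states only that the result is proved by logical relations using the relations of Fig.~\ref{fig:adev-transformation}, extending the proof of \citet{lew_adev_2023} with new cases for $\mathbf{score}$ and the trace operations; your identification of the CPS-style reduction to $R^{\mathcal{D}}_{\eRR}$, the per-primitive unbiasedness obligations, and the role of precondition (2) in licensing the interchange of $\nabla_\theta$ and $\int$ via dominated convergence matches that outline (and the per-primitive obligation spelled out in Appx.~\ref{appdx:extensibility}). No discrepancy to report.
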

%\begin{proof}
%\end{proof}

\noindent The proof, as in the previous section, is a logical relations proof, and Fig.~\ref{fig:adev-transformation} gives the logical relations.
It extends the proof of~\cite{lew_adev_2023} with new cases, for $\mathbf{score}$, as well as for trace operations.

%\vspace{0.05in}
\paragraph{Static Checks and Unbiasedness.}
Probabilistic programming languages like Pyro and Gen use specialized logic — also going beyond ordinary AD — to unbiasedly estimate derivatives of particular variational objectives like the ELBO, for user-defined models and variational families. But in these systems, biased gradient estimates can still arise if the user's model or variational family violates assumptions made by the PPL backend. For example:
\begin{itemize}[leftmargin=*]
    \item The user's program may sample $x$ from a normal distribution, then branch on whether $x < k$ for some constant threshold $k$, in order to decide on the distribution of another random variable $y$. Pyro's default gradient estimation strategy assumes that the joint density of the model is differentiable with respect to the values of Gaussian random variables like $x$, but this assumption is violated by the user's program, because the joint density $p(x, y)$ is of the form $p(x) \cdot ([x > k] \cdot p_1(y \mid x) + [x \leq k] \cdot p_2(y \mid x))$, which may be discontinuous at $x = k$.
    \item Gen's default gradient estimation strategy does not place differentiability assumptions on the user's program, but does assume that the support of each primitive distribution does not depend on learned parameters. If the user's program samples from a uniform distribution with learned endpoints $a$ and $b$, this assumption will be violated, and Gen's gradient estimates will be biased.
\end{itemize}

In our design, by contrast, there is no default gradient estimation strategy. Rather, the user chooses a different gradient estimation strategy for each primitive, and the overall gradient estimator is automated compositionally. Crucially, different versions of primitives (employing different gradient estimation strategies) have static types that enforce the key assumptions necessary for their unbiasedness.
 For example:
\begin{itemize}[leftmargin=*]
    \item The $\texttt{normal}_\text{REPARAM}$  primitive has type $\mathbb{R} \times \mathbb{R}_{>0} \to D~\mathbb{R}$, so if $x$ is drawn from $\texttt{normal}_\text{REPARAM}$, then the type of the variable $x$ is $\mathbb{R}$. The type of $<$ is $\mathbb{R}^* \times \mathbb{R}^* \to \mathbb{B}$, and so the expression $x < k$ is ill-typed. Thus, the types enforce that the smoothness assumptions of the reparameterization estimator hold for the user's program, if the user chooses to apply this estimator.

    \item In our system, the uniform distribution with custom endpoints is $\mathbf{uniform} : \RR^* \times \RR^* \to D~\RR^*$, which behaves like a safe version of Gen's uniform distribution\textemdash its output can be used non-smoothly, but its bounds must not depend directly on learned parameters. (The bounds may still depend on, e.g., Gaussian random choices with learned means.)
\end{itemize}

\noindent Our smoothness-typing discipline in $\lambda_\text{Gen}$ is similar to, but slightly different from, that in~\citet{lew_adev_2023}. As an example, their version of ADEV can support a primitive $\text{uniform}_\text{MVD} : \RR \times \RR \to P~\RR$, but we cannot introduce such a primitive that returns $D~\RR^*$ or $D~\RR$. This is because the program transformation $\xi$ would need to translate such primitives into code for computing the uniform distribution's \textit{density} as a smooth function of its endpoints\textemdash which is not possible, since the density of the uniform distribution is discontinuous at the endpoints.

These static checks are necessary for proving  unbiasedness, as without them, we could easily produce estimators that do not respect the restrictions of the  estimation strategies they employ.

\section{Correctness of Gradient Estimation for Variational Inference}
\label{sec:correctness}

We can put together the results of the previous two sections to prove a general correctness theorem for our approach to variational inference. %The goal of our theorem is to characterize the conditions required to ensure that our compiler constructs an unbiased gradient estimator for $\lambda_\text{ADEV}$ programs which make use of $\textbf{sim}$ and $\textbf{density}$ to represent variational objectives.
%\vspace{0.1in}
%\begin{mdframed}[leftmargin=0pt,nobreak=true]
\noindent Suppose the user has written the following three programs:
\begin{itemize}[leftmargin=20pt]
    \item A \textbf{model program}: a closed $\lambda_\text{Gen}$ program $P : \RR^n \to G~\tau_1$.
    \item A \textbf{variational program}: a closed $\lambda_\text{Gen}$ program $Q : \RR^m \to G~\tau_2$.
    \item An \textbf{objective program}: a closed $\lambda_\text{ADEV}$ program $L : \mathbb{R}^{n+m} \to \eRR$, of the form $$L = \lambda (\theta, \phi). \mathbf{let}~(p,\mathbb{P},q,\mathbb{Q}) = (\textbf{density}\{P\}~\theta, \mathbf{sim}\{P\}~\theta, \mathbf{density}\{Q\}~\phi, \mathbf{sim}\{Q\}~\phi)~\mathbf{in}~F,$$ where $\theta$ and $\phi$ do not occur free in $F$. %$for some $F$.
    This program encodes the variational objective $\mathcal{F}(\mu_P, \mu_Q) = \int x \cdot \sem{F}(\gamma(\mu_P, \mu_Q), dx)$, where $\gamma(\mu_P, \mu_Q)$ is an environment mapping $p$ and $q$ to  densities of $\mu_P$ and $\mu_Q$ (with respect to $\mathcal{B}_\mathbb{T}$) and $\mathbb{P}$ and $\mathbb{Q}$ to simulators for $\mu_P$ and $\mu_Q$.
    %that encodes a variational objective $\mathcal{F}(\mu_P, \mu_Q)$, and is allowed to refer to \textit{density functions} $p, q : \trace \to \RR$ and \textit{simulators} $\mathbb{P}, \mathbb{Q} : P~(\trace \times \RR)$ for the measures $\mu_P$ and $\mu_Q$. (The symbols $p$, $q$, $\mathbb{P}$, and $\mathbb{Q}$ are made available through the context $\Gamma$. For a given $\mu_P$ and $\mu_Q$, we can construct the environment $\gamma(\mu_P, \mu_Q) = (p \mapsto \frac{d\mu_P}{d\mathcal{B}_\mathbb{T}}, q \mapsto \frac{d\mu_Q}{d\mathcal{B}_\mathbb{T}}, \mathbb{P} \mapsto (id \otimes \frac{d\mu_P}{d\mathcal{B}_\mathbb{T}})_*\mu_P, \mathbb{Q} \mapsto (id \otimes \frac{d\mu_Q}{d\mathcal{B}_\mathbb{T}})_*\mu_Q)$. Then the program $F$ encodes the objective $\mathcal{F}$ in that $\mathcal{F}(\mu_P, \mu_Q) = \int_{\RR} x \sem{F}(\gamma(\mu_P, \mu_Q), dx)$.)
\end{itemize}

\noindent The user wants to find $\theta$ and $\phi$ that optimize $\mathcal{F}(\sem{P}_1(\theta), \sem{Q}_1(\phi))$. Our result is that our system estimates derivatives of this objective unbiasedly, under mild technical conditions: %We compile a closed $\lambda_\text{ADEV}$ term $\vdash L : \mathbb{R}^{n + m} \to \eRR$ to represent this function: $$L = \lambda (\theta, \phi). \mathbf{let}~(p,\mathbb{P},q,\mathbb{Q}) = (\textbf{density}\{P\}~\theta, \mathbf{sim}\{P\}~\theta, \mathbf{density}\{Q\}~\phi, \mathbf{sim}\{Q\}~\phi)~\mathbf{in}~F.$$
% \noindent We can then state our unbiasedness result:
\vspace{0.05in}
\begin{mdframed}
\vspace{-0.1in}
\begin{theorem}
% Let:
% \begin{itemize}
%     \item $\vdash P : \RR^n \to G~\tau_1$, the \textit{model program}, be a closed $\lambda_{\text{Gen}}$ term;
%     \item $\vdash Q : \RR^m \to G~\tau_2$, the \textit{variational program}, be a closed $\lambda_{\text{Gen}}$ term;
%     \item $\Gamma$ be the $\lambda_{\text{ADEV}}$ environment with mappings for $p : \trace \to \RR, q : \trace \to \RR, \mathbb{P} : P (\trace \times \RR), \mathbb{Q} : P (\trace \times \RR)$;
%     \item  $\Gamma \vdash t : \eRR$, the \textit{objective program}, be an open $\lambda_{\text{ADEV}}$ term in context $\Gamma$;
%     \item $\mathcal{F}(\mu_P, \mu_Q)$, be the \textit{objective function}, with \begin{equation*}\mathcal{F}(\mu_P, \mu_Q)=\int_{\RR} x~\sem{t}(\frac{d\mu_P}{d{\mathcal{B}_\mathbb{T}}}, \frac{d\mu_Q}{d\mathcal{B}_\mathbb{T}}, (\textit{id} \otimes \frac{d\mu_P}{d{\mathcal{B}_\mathbb{T}}})_*\mu_P, (\textit{id} \otimes \frac{d\mu_Q}{d{\mathcal{B}_\mathbb{T}}})_*\mu_Q, dx)\end{equation*}
%     \item $\mathcal{L}(\theta, \phi)$, be the \textit{parametric objective function}, equal to $\mathcal{F}(\sem{P}_1(\theta), \sem{Q}_1(\phi))$
%     \item $s \coloneqq ~\vdash \lambda (\theta, \phi). t[\{p\mapsto\textbf{density}\{P\}~\theta,q\mapsto\textbf{density}\{Q\}~\phi,\mathbb{P}\mapsto\textbf{sim}\{P\}~\theta,\mathbb{Q}\mapsto\textbf{sim}\{Q\}~\phi\}] : \RR^{n+m} \to \eRR$ be the \textit{compiled parametric objective program}, a closed term of $\lambda_{\text{ADEV}}$.
% \end{itemize}
Let $\mathcal{L}(\theta, \phi) = \mathcal{F}(\sem{P}_1(\theta), \sem{Q}_1(\phi))$. If for all $\theta \in \RR^n$ and $\phi \in \RR^m$, $\mathcal{L}(\theta, \phi)$ is finite and $\sem{\textbf{dom}\{L\}_{((\theta, \phi), i)}}$ is locally dominated for each $i \in \underline{n+m}$, then for all $\mathbf{v} \in \RR^{n+m}$, $\sem{\mathbf{adev}\{L\}}((\theta, \phi), \mathbf{v})$ is a probability measure with finite expectation and \begin{equation*}\mathbb{E}_{x \sim \sem{\mathbf{adev}\{L\}}((\theta, \phi), \mathbf{v})}[x] = (\nabla_{(\theta, \phi)} \mathcal{L}(\theta, \phi))^T \mathbf{v}.\end{equation*}
\end{theorem}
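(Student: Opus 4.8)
The plan is to recognize this statement as a corollary that assembles the three correctness results established earlier: the correctness theorems for $\textbf{density}$ and $\mathbf{sim}$ (\S\ref{sec:gen-transformations}) and the unbiasedness theorem for ADEV (Thm.~\ref{thm:unbiased-adev}). The key observation is that $L$ is already a closed, well-typed $\lambda_{\text{ADEV}}$ term of type $\RR^{n+m} \to \eRR$: its well-typedness follows because $\textbf{density}\{P\}$, $\mathbf{sim}\{P\}$, $\textbf{density}\{Q\}$, and $\mathbf{sim}\{Q\}$ are well-typed $\lambda_{\text{ADEV}}$ terms of the types promised by their specifications, and the surrounding $\mathbf{let}$/$\lambda$-abstraction and $F$ compose these in a type-preserving way. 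Consequently $\mathbf{adev}\{L\}$ and $\mathbf{dom}\{L\}_{((\theta,\phi),i)}$ are defined, and I intend to invoke Thm.~\ref{thm:unbiased-adev} with $t \coloneqq L$ and the $\RR^{n+m}$ parameter identified with the pair $(\theta, \phi)$.

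Before invoking that theorem, I first need to pin down the denotation of $L$ and show that its mean recovers $\mathcal{L}$, i.e. that $\int_\RR x~\sem{L}((\theta,\phi), dx) = \mathcal{L}(\theta, \phi)$ for all $(\theta,\phi)$. I would argue this by compositionality of the denotational semantics together with the two generative-transformation theorems. The density theorem gives $\sem{\textbf{density}\{P\}}(\theta) = \frac{d\sem{P}_1(\theta)}{d\mathcal{B}_\mathbb{T}}$ and similarly for $Q$, while the simulation theorem identifies $\sem{\mathbf{sim}\{P\}}(\theta)$ with the pushforward of $\sem{P}_1(\theta)$ along $\lambda u.(u, \frac{d\sem{P}_1(\theta)}{d\mathcal{B}_\mathbb{T}}(u))$, and similarly for $Q$. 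Evaluating the $\mathbf{let}$-bindings of $L$ at $(\theta,\phi)$ therefore binds $p, q$ to the densities of $\mu_P \coloneqq \sem{P}_1(\theta)$ and $\mu_Q \coloneqq \sem{Q}_1(\phi)$ and binds $\mathbb{P}, \mathbb{Q}$ to the corresponding simulators, which is exactly the environment $\gamma(\mu_P, \mu_Q)$ used to define $\mathcal{F}$. Since $\theta$ and $\phi$ do not occur free in $F$, a substitution lemma for the $\lambda_{\text{ADEV}}$ semantics yields $\sem{L}((\theta,\phi)) = \sem{F}(\gamma(\mu_P,\mu_Q))$, and taking expectations gives $\int_\RR x~\sem{L}((\theta,\phi), dx) = \mathcal{F}(\mu_P, \mu_Q) = \mathcal{L}(\theta,\phi)$, as required.

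With this correspondence in hand, the two hypotheses discharge precisely the preconditions of Thm.~\ref{thm:unbiased-adev} at $t = L$: finiteness of $\mathcal{L}(\theta,\phi)$ is finiteness of $\int_\RR x~\sem{L}((\theta,\phi),dx)$ (precondition 1), and local domination of each $\sem{\textbf{dom}\{L\}_{((\theta,\phi),i)}}$ is precondition 2. The theorem then delivers that $\sem{\mathbf{adev}\{L\}}((\theta,\phi),\mathbf{v})$ is a probability measure with finite expectation and that $\mathbb{E}_{x \sim \sem{\mathbf{adev}\{L\}}((\theta,\phi),\mathbf{v})}[x] = \big(\nabla_{(\theta,\phi)} \int_\RR x~\sem{L}((\theta,\phi),dx)\big)^T \mathbf{v}$; substituting the identity $\int_\RR x~\sem{L}((\theta,\phi),dx) = \mathcal{L}(\theta,\phi)$ into the gradient finishes the proof.

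I expect the main obstacle to be the semantic-correspondence step of the second paragraph rather than the final invocation. One must be careful that replacing the transformed subprograms by their specified denotations is legitimate, which relies on compositionality of the $\lambda_{\text{ADEV}}$ semantics in the quasi-Borel setting and on the fact that the specifications of $\textbf{density}$ and $\mathbf{sim}$ match, up to the chosen environment, the objects $\gamma$ supplies to $F$. One must also confirm that the $\eRR$-typed result of $F$ is read off as a measure whose mean is exactly the scalar being differentiated, so that the gradient commutes with the expectation precisely as packaged by Thm.~\ref{thm:unbiased-adev}. Everything else is routine bookkeeping: checking that $\RR^n$ and $\RR^m$ satisfy the ground-type side-condition of the density and simulation theorems, and tracking the identification of the $\RR^{n+m}$ parameter with $(\theta,\phi)$ so that $\nabla$ over $\RR^{n+m}$ coincides with $\nabla_{(\theta,\phi)}$.
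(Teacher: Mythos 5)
Your proposal is correct and follows exactly the route the paper intends: the theorem is presented as a corollary that composes the correctness theorems for $\textbf{density}$ and $\mathbf{sim}$ (which justify identifying $\sem{L}((\theta,\phi))$ with $\sem{F}(\gamma(\mu_P,\mu_Q))$ and hence its mean with $\mathcal{L}(\theta,\phi)$) with Theorem~\ref{thm:unbiased-adev} applied to $t = L$, whose two preconditions are discharged by the stated hypotheses. The semantic-correspondence step you flag as the main obstacle is indeed the only nontrivial content beyond the cited results, and your treatment of it matches the paper's setup.
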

\end{mdframed}
%\end{mdframed}
\vspace{0.05in}
\noindent To understand the guarantee the theorem gives more concretely, consider the following objective program defining the ELBO (Eqn.~\ref{eqn:elbo}):
{\small$$\textbf{ELBO}\coloneqq \lambda(\theta,\phi). \mathbb{E}(\haskdo~\{(z, w_q)\gets(\textbf{sim}\{Q\}~\phi); \mathbf{let}~w_p = (\textbf{density}\{P\}~\theta)~z; \return\log w_p - \log w_q\})$$}
%This program can be understood as encoding the (mathematical) variational objective $$\mathcal{F}(\mu_P, \mu_Q) = \mathbb{E}_{z \sim \mu_Q}\left[\log \frac{d\mu_P}{d\mu_Q}(z)\right].$$  
We can write it in the form required by the theorem as follows:
\begin{align*}
    L = \lambda(\theta,\phi). \mathbf{let}~&(p,\mathbb{P},q,\mathbb{Q}) = (\textbf{density}\{P\}~\theta, \mathbf{sim}\{P\}~\theta, \mathbf{density}\{Q\}~\phi, \mathbf{sim}\{Q\}~\phi)~\mathbf{in}\\
    &\mathbb{E} (\haskdo \{(z,w_q) \gets \mathbb{Q}; \mathbf{return}~(\log p(z) - \log w_q)\})
\end{align*}
% {\small$$L = \lambda(\theta,\phi). \mathbf{let}~(p,\mathbb{P},q,\mathbb{Q}) = (\textbf{density}\{P\}~\theta, \mathbf{sim}\{P\}~\theta, \mathbf{density}\{Q\}~\phi, \mathbf{sim}\{Q\}~\phi)~\mathbf{in}~\mathbb{E} (\haskdo \{(z,w_q) \gets \mathbb{Q}; \mathbf{let}~w_p = p~z; \mathbf{return}~(\log w_p - \log w_q)\})$$}
% \begin{align*}
% \Gamma &= \{q \mapsto \textbf{density}\{Q\}~\phi, p\mapsto\textbf{density}\{P\}~\theta,\mathbb{Q}\mapsto\textbf{sim}\{Q\}~\phi,\mathbb{P}\mapsto\textbf{sim}\{P\}~\theta\} \\
% t &= \mathbb{E}(\haskdo~\{(z, w_q)\gets\mathbb{Q}; w_p\gets~p~z; \return\log w_p - \log w_q\})
% \end{align*}
%With the local domination condition, we can then apply Theorem~\ref{thm:unbiased-adev} to conclude that the denotation of the gradient estimator program $\sem{\textbf{adev}\{s\}}$ has the correct unbiasedness property with respect to the objective function, for all $\theta$ and $\phi$.
% TODO: Maybe move Section 5.1 here?
The theorem then establishes that, under mild technical conditions, applying $\mathbf{adev}$ to $L$ yields an unbiased estimator of $\nabla_{(\theta, \phi)} \mathcal{F}(\sem{P}_1(\theta), \sem{Q}_1(\phi))$.

\section{Full System}
\label{sec:sp}

In the previous sections, we presented a formal model of the key features of our approach. 
Our full language and system (detailed in Appx.~\ref{appdx:full-system}) extends our formal model in three key ways:
\begin{itemize}[leftmargin=*]
\item \textbf{New language features for probabilistic models and variational families (Appx.~\ref{sec:margnorm}).} Our full language includes constructs for \textit{marginalizing} ($\textbf{marginal}$) and \textit{normalizing} ($\textbf{normalize}$) $\lambda_\text{Gen}$ programs, making it possible to express a broader class of models and variational families than in current systems. Our versions of these constructs are designed following~\citet{lew_probabilistic_2023}.

\item \textbf{Differentiable stochastic estimators of densities and density reciprocals (Appx.~\ref{sec:diffsp}).} When exact densities of $\lambda_\text{Gen}$ programs cannot be efficiently computed, our full system can compile $\lambda_\text{ADEV}$ terms implementing differentiable unbiased estimators of the required density functions and their reciprocals. These estimators can even have learnable parameters controlling their variance, which can be optimized jointly as part of the overall variational objective. %TODO: point out that variational optimization *also improves the `marginal` proposals.

\item \textbf{Reverse-mode automatic differentiation of expected values (Appx.~\ref{appdx:yolo}).} Our full language's AD algorithm computes \textit{vector-Jacobian products} for expected values of probabilistic objectives, whereas our formal development shows only \textit{Jacobian-vector products}. Algorithms for vector-Jacobian products, also known as \textit{reverse-mode} AD algorithms, are much more efficient when optimizing scalar losses with large numbers of parameters, common in deep learning.
\end{itemize}

%\noindent The remainder of this section introduces these features, explains their implications for programmable variational inference, and highlights the key innovations necessary to implement them.
% NOTE: moved everything below this into the appendix.

\section{Evaluation}
\label{sec:evaluation}
We evaluate our approach using \texttt{genjax.vi}, a prototype of our proposed architecture implemented as an extension to a JAX-hosted version of Gen \cite{cusumano-towner_gen_2019}. {All experiments were run on a single device with an AMD Ryzen 7 7800X3D @ 5.050 GHz CPU and an Nvidia RTX 4090 GPU.} We consider several case studies designed to answer the following questions:
\begin{itemize}[leftmargin=*]
    \item \textbf{Overhead.} \textit{How much overhead is incurred by using our automated gradient estimators, over hand-coded versions?} We compare the same gradient estimator for a variational autoencoder \cite{kingma_auto-encoding_2022} constructed (a) via a hand-coded implementation and (b) via our automation.
    \item \textbf{Overall performance.} \textit{How well can we solve a challenging inference problem using our system compared to other PPLs that support variational inference?} We consider the \textit{Attend-Infer-Repeat} (AIR) model \cite{eslami_attend_2016} and compare the capabilities of our system to Pyro \cite{bingham_pyro_2018}.
    \item \textbf{Expressivity and compositional correctness.} \textit{For the objectives and estimator strategies expressible in our system, is it possible to combine all objectives and estimator strategies while maintaining correctness?} We evaluate the expressiveness of our system vs. Pyro on the AIR model, and on a hierarchical variational inference problem~\cite{agakov_auxiliary_2004}. % (HVI) objective \cite{bornschein_reweighted_2015}. We investigate our system's ability to express HVI, importance weighted HVI (IWHVI), and doubly importance weighted HVI (DIWHVI), by combining our $\textbf{marginal}$ feature with SIR.
\end{itemize}

\paragraph{Overhead.} Table~\ref{tab:overhead} presents a runtime comparison between \texttt{genjax.vi} and a hand-coded implementation of the gradient estimator in JAX (Appx. \ref{appdx:overhead}). %: in JAX, using \texttt{jax.grad} and \texttt{tensorflow.probability}). 
We measure the wall time required to compute a gradient estimate for different batch sizes $n$. We find that our automation introduces a small amount of runtime overhead (around 3-10\%) compared to our hand coded implementation (Fig.~\ref{fig:overhead_transcript}). %In epoch computations, where we compute and apply gradients to the parameters of a model and guide in a loop ($\sim$940 iterations), the overhead leads to a runtime penalty on the order of $\sim$1ms (Fig.~\ref{fig:overhead_transcript}, top right, bottom). %Our program transformations also induce a compile time penalty - but we found this penalty to be negligible and not statistically significant for the VAE program. In other implementations of our language, we'd expect the runtime cost associated with continuation passing style (CPS) to incur a penalty: our JAX implementation negates this by performing partial evaluation to remove continuations at compile time. On this case study, the cost is cheap to utilize our abstractions, and the benefits are numerous, including improved readability from separation of concerns, and ease of changing estimators.
%\newpage
\setlength{\columnsep}{9pt}%
\setlength{\intextsep}{5pt}%
\begin{wraptable}{r}{0.38\textwidth}
  \setlength{\abovecaptionskip}{1pt}
  \caption{Timing (ms) our estimators versus hand coded estimators for the VAE.}
 % \vspace{-10pt}
  {\small\begin{tabular}{|c|c|c|}
  \hline
  Batch size   & Ours & Hand coded \\ \hline
  64           & 0.11 $\pm$ 0.02 &  0.09 $\pm$ 0.04 \\
  128          & 0.22 $\pm$ 0.2  &  0.16 $\pm$ 0.08 \\
  256          & 0.31 $\pm$ 0.18 &  0.29 $\pm$ 0.17 \\
  512          & 0.56 $\pm$ 0.35 &  0.54 $\pm$ 0.34 \\
  1024         & 1.58 $\pm$ 1.13 &  1.07 $\pm$ 0.70 \\
  \hline
  \end{tabular}}
  \label{tab:overhead}
\end{wraptable}
\vspace{-4mm}

\paragraph{Overall Performance.} We consider the \textit{Attend, Infer, Repeat} \cite{eslami_attend_2016} (AIR) model (Fig.~\ref{fig:air_arch}). We plot accuracy and loss curves over time in Fig.~\ref{fig:air_collat}, for several estimators expressed in our system and in Pyro. We also compare timing results, shown in Table ~\ref{fig:benchmark_table}. Our implementation's performance is competitive, and we support a broader class of estimators and objectives than Pyro.
%- e.g. allowing measure valued derivative estimators for Bernoulli distributions as a gradient estimator strategy, for the IWELBO objective. We can easily swap objectives (as programs in our loss language), e.g. using Reweighted Wake-Sleep (RWS) instead of the ELBO (or IWELBO instead of ELBO) without compromising the correctness of our gradient estimators. 
We find that some estimators we support (in particular those based on measure-valued derivatives) lead to faster convergence than the estimators automated by Pyro.

\begin{figure}[t]
\setlength{\belowcaptionskip}{-5mm}
\includegraphics[width=0.9\linewidth]{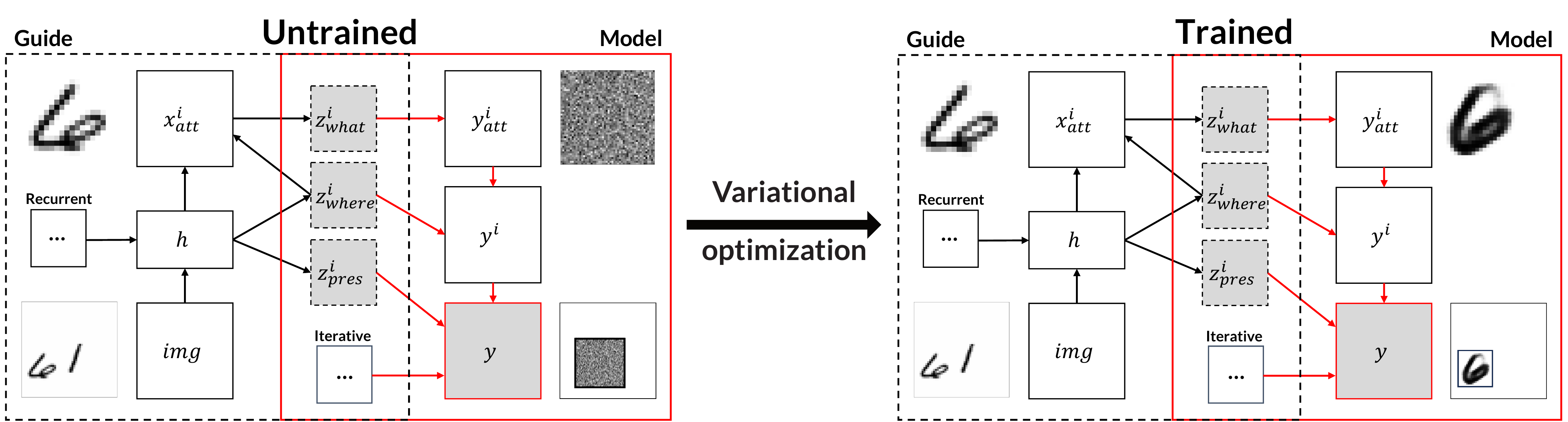}
\vspace{-2mm}
\caption{AIR is a generative model for multi-object images, trained with variational inference. The model randomly selects a number of patches to render onto a canvas, and a location, scale, and latent code for each. The variational family predicts these latent variables from an image. The model is trained on a dataset of images constructed by randomly translating and scaling  MNIST digits onto a canvas.}
\label{fig:air_arch}
\Description{AIR figure.}
\end{figure}

\begin{figure}[htb!]
\centering
\captionof{table}{Time (in seconds) to train the AIR model~\cite{eslami_attend_2016} for one epoch (batch size 64) with different objectives and estimators. All discrete variables use the same estimation strategy. IWELBO runs have $n = 2$ particles.}
\vspace{-3mm}
\footnotesize{\begin{tabular}{ | c c | c | c | c | c | c | }
    \hline
    System & Compiler & REINFORCE & ENUM & MVD & IWELBO + REINFORCE & IWELBO + MVD \\
    \hline
    \texttt{genjax.vi}    & JAX (XLA) & 1.52 $\pm$ 0.05 & 6.22 $\pm$ 0.29 & 1.74 $\pm$ 0.04 & 2.28 $\pm$ 0.12 & 3.74 $\pm$ 0.05 \\
    \texttt{pyro}            & Torch & 12.28 $\pm$ 0.55 & 122.93 $\pm$ 1.74 & \text{\sffamily X} & 22.17 $\pm$ 1.2 & \text{\sffamily X} \\
    \hline
\end{tabular}}
\label{fig:benchmark_table}
\end{figure}

\paragraph{Expressivity and Compositional Correctness.} In Table \ref{tab:checks_exes}, we enumerate several possible combinations of gradient estimation strategies and objectives for the AIR model. In Table \ref{fig:hvi_table}, we consider the model shown in Fig.~\ref{fig:example_transcript}, and implement the model and naive variational guide in \texttt{genjax.vi}, Pyro and NumPyro, as well as the auxiliary variable variational guide in \texttt{genjax.vi}. We show statistics on final mean objective values for different variational objectives. While ELBO and IWELBO are standard, our system allows using more expressive approximations and tighter lower bound objectives \textit{compositionally} (such as DIWHVI~\cite{sobolev_importance_2019}, which uses SIR to estimate densities of marginals, in the IWELBO objective) to achieve tighter variational bounds.

\setlength{\dashlinedash}{0.2pt}
\setlength{\dashlinegap}{2pt}
\setlength{\arrayrulewidth}{0.2pt}

% \begin{table}[h]

% \newfloatcommand{capbtabbox}{table}[][\FBwidth]
% \begin{figure}
% \begin{floatrow}
% \ffigbox{%
%   \rule{3cm}{3cm}%
% }{%
%   \caption{A figure}%
% }
% \capbtabbox{%
%   \centering
%   \input{fig/expressivity_table/checks_exes_tabular}
% }{%
% \caption{\label{tab:my_label}We consider a combinatorial space of gradient estimators and objective functions that can be optimized for the Attend-Infer-Repeat (AIR) model. Here, we focus on strategies for the discrete random choices and assume that multivariate normal choices use reparametrization. Pyro supports only a pre-selected menu of options within this combinatorial space, whereas \texttt{genjax.vi} can express them all, compositionally.}%
% }
% \end{floatrow}
% \end{figure}

\begin{figure}[htbp]
  \begin{minipage}[t]{0.48\linewidth}
  \centering
    \captionsetup{type=table}
    \caption{{Combinatorial space of gradient estimators and objective functions for the AIR model, which our programmable approach helps to explore. %Our programmable approach allows users to rapidly explore a broad class of gradient estimators.
    %Here, we focus on strategies for the discrete random choices and assume that multivariate normal choices use reparametrization. 
    }}
    \label{tab:checks_exes}
    \vspace{-3mm}
    \tiny{
RE = REINFORCE estimator, EN = enumeration estimator, BL = REINFORCE with learned baselines, MV = measure-valued derivative estimator
\begin{tabular}[t]{|cccc|c|c|cc|}
\hline
\multicolumn{4}{|c|}{\textbf{Grad. Estimation Strategies}} & \multirow{2}{*}{\textbf{Objective}} & \multirow{2}{*}{\textbf{Batch}} & \multicolumn{2}{|c|}{\textbf{System}} \\
{\tiny RE.} & {\tiny EN.} & {\tiny BL.} & {\tiny MV.} &  &  & Pyro & Ours \\\hline
\multirow{2}{*}{$\checkmark$} & \multirow{2}{*}{ } & \multirow{2}{*}{ } & \multirow{2}{*}{ } & ELBO & $\geq$ 1 & \textcolor{teal}{\textbf{$\checkmark$}} & \textcolor{teal}{\textbf{$\checkmark$}} \\
\cdashline{5-8}
& & & & IWAE & $\geq$ 1 & \textcolor{teal}{\textbf{$\checkmark$}} & \textcolor{teal}{\textbf{$\checkmark$}} \\
\cdashline{5-8}
\hdashline
\multirow{2}{*}{ } & \multirow{2}{*}{$\checkmark$} & \multirow{2}{*}{ } & \multirow{2}{*}{ } & ELBO & $\geq$ 1 & \textcolor{teal}{\textbf{$\checkmark$}} & \textcolor{teal}{\textbf{$\checkmark$}} \\
\cdashline{5-8}
& & & & IWAE & $\geq$ 1 & \textcolor{purple}{\textbf{$\times$}} & \textcolor{teal}{\textbf{$\checkmark$}} \\
\cdashline{5-8}
\hdashline
\multirow{2}{*}{ } & \multirow{2}{*}{ } & \multirow{2}{*}{$\checkmark$} & \multirow{2}{*}{ } & ELBO & $\geq$ 1 & \textcolor{teal}{\textbf{$\checkmark$}} & \textcolor{teal}{\textbf{$\checkmark$}} \\
\cdashline{5-8}
& & & & IWAE & $\geq$ 1 & \textcolor{purple}{\textbf{$\times$}} & \textcolor{teal}{\textbf{$\checkmark$}} \\
\cdashline{5-8}
\hdashline
\multirow{2}{*}{ } & \multirow{2}{*}{ } & \multirow{2}{*}{ } & \multirow{2}{*}{$\checkmark$} & ELBO & $\geq$ 1 & \textcolor{purple}{\textbf{$\times$}} & \textcolor{teal}{\textbf{$\checkmark$}} \\
\cdashline{5-8}
& & & & IWAE & $\geq$ 1 & \textcolor{purple}{\textbf{$\times$}} & \textcolor{teal}{\textbf{$\checkmark$}} \\
\cdashline{5-8}
\hdashline
\multirow{2}{*}{$\checkmark$} & \multirow{2}{*}{$\checkmark$} & \multirow{2}{*}{ } & \multirow{2}{*}{ } & ELBO & $\geq$ 1 & \textcolor{teal}{\textbf{$\checkmark$}} & \textcolor{teal}{\textbf{$\checkmark$}} \\
\cdashline{5-8}
& & & & IWAE & $\geq$ 1 & \textcolor{purple}{\textbf{$\times$}} & \textcolor{teal}{\textbf{$\checkmark$}} \\
\cdashline{5-8}
\hdashline
\multirow{2}{*}{$\checkmark$} & \multirow{2}{*}{ } & \multirow{2}{*}{$\checkmark$} & \multirow{2}{*}{ } & ELBO & $\geq$ 1 & \textcolor{teal}{\textbf{$\checkmark$}} & \textcolor{teal}{\textbf{$\checkmark$}} \\
\cdashline{5-8}
& & & & IWAE & $\geq$ 1 & \textcolor{purple}{\textbf{$\times$}} & \textcolor{teal}{\textbf{$\checkmark$}} \\
\cdashline{5-8}
\hdashline
\multirow{2}{*}{$\checkmark$} & \multirow{2}{*}{ } & \multirow{2}{*}{ } & \multirow{2}{*}{$\checkmark$} & ELBO & $\geq$ 1 & \textcolor{purple}{\textbf{$\times$}} & \textcolor{teal}{\textbf{$\checkmark$}} \\
\cdashline{5-8}
& & & & IWAE & $\geq$ 1 & \textcolor{purple}{\textbf{$\times$}} & \textcolor{teal}{\textbf{$\checkmark$}} \\
\cdashline{5-8}
\hdashline
\multirow{2}{*}{ } & \multirow{2}{*}{$\checkmark$} & \multirow{2}{*}{$\checkmark$} & \multirow{2}{*}{ } & ELBO & $\geq$ 1 & \textcolor{purple}{\textbf{$\times$}} & \textcolor{teal}{\textbf{$\checkmark$}} \\
\cdashline{5-8}
& & & & IWAE & $\geq$ 1 & \textcolor{purple}{\textbf{$\times$}} & \textcolor{teal}{\textbf{$\checkmark$}} \\
\cdashline{5-8}
\hdashline
\multirow{2}{*}{ } & \multirow{2}{*}{$\checkmark$} & \multirow{2}{*}{ } & \multirow{2}{*}{$\checkmark$} & ELBO & $\geq$ 1 & \textcolor{purple}{\textbf{$\times$}} & \textcolor{teal}{\textbf{$\checkmark$}} \\
\cdashline{5-8}
& & & & IWAE & $\geq$ 1 & \textcolor{purple}{\textbf{$\times$}} & \textcolor{teal}{\textbf{$\checkmark$}} \\
\cdashline{5-8}
\hdashline
\multirow{2}{*}{ } & \multirow{2}{*}{ } & \multirow{2}{*}{$\checkmark$} & \multirow{2}{*}{$\checkmark$} & ELBO & $\geq$ 1 & \textcolor{purple}{\textbf{$\times$}} & \textcolor{teal}{\textbf{$\checkmark$}} \\
\cdashline{5-8}
& & & & IWAE & $\geq$ 1 & \textcolor{purple}{\textbf{$\times$}} & \textcolor{teal}{\textbf{$\checkmark$}} \\
\cdashline{5-8}
\hdashline
\multirow{2}{*}{$\checkmark$} & \multirow{2}{*}{$\checkmark$} & \multirow{2}{*}{$\checkmark$} & \multirow{2}{*}{ } & ELBO & $\geq$ 1 & \textcolor{purple}{\textbf{$\times$}} & \textcolor{teal}{\textbf{$\checkmark$}} \\
\cdashline{5-8}
& & & & IWAE & $\geq$ 1 & \textcolor{purple}{\textbf{$\times$}} & \textcolor{teal}{\textbf{$\checkmark$}} \\
\cdashline{5-8}
\hdashline
\multirow{2}{*}{$\checkmark$} & \multirow{2}{*}{$\checkmark$} & \multirow{2}{*}{ } & \multirow{2}{*}{$\checkmark$} & ELBO & $\geq$ 1 & \textcolor{purple}{\textbf{$\times$}} & \textcolor{teal}{\textbf{$\checkmark$}} \\
\cdashline{5-8}
& & & & IWAE & $\geq$ 1 & \textcolor{purple}{\textbf{$\times$}} & \textcolor{teal}{\textbf{$\checkmark$}} \\
\cdashline{5-8}
\hdashline
\multirow{2}{*}{$\checkmark$} & \multirow{2}{*}{ } & \multirow{2}{*}{$\checkmark$} & \multirow{2}{*}{$\checkmark$} & ELBO & $\geq$ 1 & \textcolor{purple}{\textbf{$\times$}} & \textcolor{teal}{\textbf{$\checkmark$}} \\
\cdashline{5-8}
& & & & IWAE & $\geq$ 1 & \textcolor{purple}{\textbf{$\times$}} & \textcolor{teal}{\textbf{$\checkmark$}} \\
\cdashline{5-8}
\hdashline
\multirow{2}{*}{ } & \multirow{2}{*}{$\checkmark$} & \multirow{2}{*}{$\checkmark$} & \multirow{2}{*}{$\checkmark$} & ELBO & $\geq$ 1 & \textcolor{purple}{\textbf{$\times$}} & \textcolor{teal}{\textbf{$\checkmark$}} \\
\cdashline{5-8}
& & & & IWAE & $\geq$ 1 & \textcolor{purple}{\textbf{$\times$}} & \textcolor{teal}{\textbf{$\checkmark$}} \\
\cdashline{5-8}
\hdashline
\multicolumn{4}{|c|}{\multirow{2}{*}{Not Exploited}} & \multirow{2}{*}{RWS} & 1 & \textcolor{teal}{\textbf{$\checkmark$}} & \textcolor{teal}{\textbf{$\checkmark$}} \\
\cdashline{6-8}
\multicolumn{4}{|c|}{} &  & $>$ 1 & \textcolor{purple}{\textbf{$\times$}} & \textcolor{teal}{\textbf{$\checkmark$}} \\
\cdashline{6-8}
\hline
\end{tabular}}
  \end{minipage}%
  \hfill
  \begin{minipage}[t]{0.49\linewidth}
    \centering
    \includegraphics[width=\linewidth, valign=t]{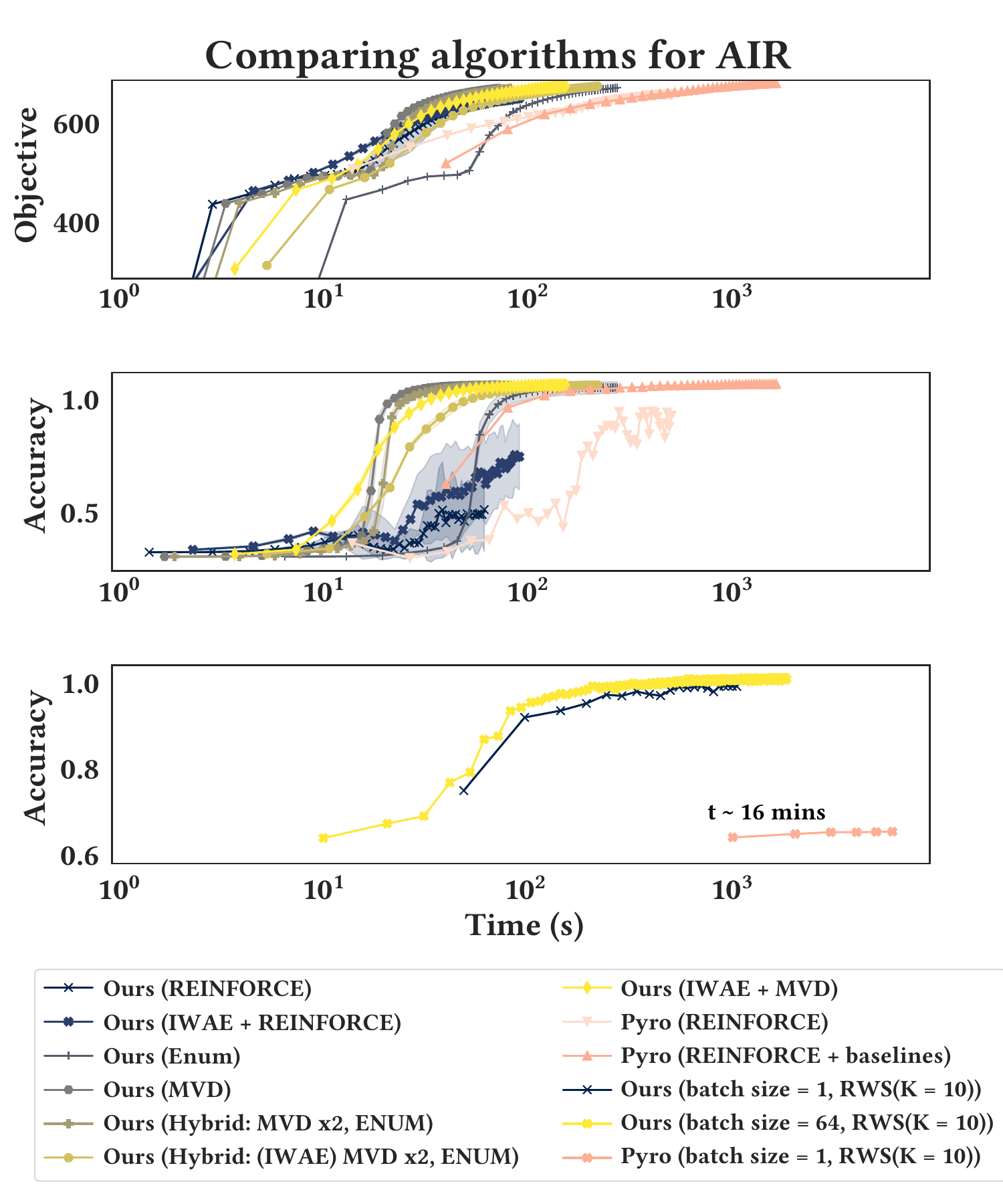}
    \vspace{-3mm}
    \caption{We evaluate a variety of custom estimators and objectives (ELBO, IWAE, RWS) using our system. Our on average best estimator (IWAE + MVD, not expressible in Pyro) converges an order of magnitude faster than Pyro's recommended estimator.}
    \label{fig:air_collat}
  \end{minipage}
\end{figure}

\begin{table}[htb!]
\captionsetup{skip=3pt}
\captionof{table}{Mean objective value (in nats) on repeated runs for several variational objectives, including ones which utilize $\textbf{marginal}$. $n$ and $m$ denote particle sizes for SIR algorithms.}
\footnotesize{\begin{tabular}[t]{ | c | c | c | c | c | c | }
    \hline
    System & ELBO & IWELBO ($n = 5$) & HVI & IWHVI ($m = 5$) & DIWHVI ($n = 5, m = 5$) \\
    \hline
    \texttt{genjax.vi}    & -8.08 & -7.79 & -9.75 & -8.18 & -7.33 \\
    \texttt{numpyro}      & -8.08 & -7.77 & \checkmark / \text{\sffamily X} & \text{\sffamily X} & \text{\sffamily X} \\
    \texttt{pyro}         & -8.08  & -7.75 & \checkmark / \text{\sffamily X} & \text{\sffamily X} & \text{\sffamily X} \\
    \hline
\end{tabular}}
\label{fig:hvi_table}
\end{table}

\section{Related work}
\label{sec:related}

\paragraph{Variational Inference in PPLs.} Many PPLs support some form of variational inference~\cite{bingham_pyro_2018,stites_learning_2021,carpenter2017stan,ge2018turing,cusumano-towner_gen_2019},
and it is the primary focus of Pyro~\cite{bingham_pyro_2018} and ProbTorch~\cite{stites_learning_2021}. Both
Pyro and ProbTorch have endeavored to make inference more programmable. For example, ProbTorch has introduced
\textit{inference combinators} that make it easy to express certain nested variational inference algorithms~\cite{zimmermann_nested_2021,stites_learning_2021}. Pyro has perhaps the most mature support for variational inference, with many gradient estimators
and objectives supported. Pyro also implements some variance reduction strategies not yet supported by our system, e.g.
exploiting conditional independence using their \textbf{plate} operator. 
However, extending Pyro with new variational objectives or gradient estimation strategies requires a deep understanding of
its internals. Furthermore, Pyro's modeling language does not have our constructs for marginalization and normalization (although Pyro can marginalize discrete, finite-support auxiliary variables from models). Concurrently with this work, \citet{wagner2023fast} presented a PPL with correct-by-construction variational inference, where their notion of ``correctness'' is stronger in some ways and weaker in others than that of this work. In particular, \citet{wagner2023fast} works with \textit{smoothed approximations} of the user's probabilistic programs, and so the gradient estimates it computes are biased for the original objective. However, the degree of error in this smoothing is gradually annealed over the course of optimization, leading to convergence to a stationary point of the original objective. 

\paragraph{Static Analyses for Differentiability Criteria.} Within the PL community, researchers have made some progress toward formalizing and developing static analyses for ensuring soundness properties of variational inference~\cite{lee_towards_2019,lee_smoothness_2023,wang2021sound}, as well as program transformations for automatically constructing variational families informed by the model's structure~\cite{li2023type}. By contrast, we formalize the process by which user model, inference, and objective code is transformed into a gradient estimator, tracking the interactions between density computation, simulation, and automatic differentiation. One interesting direction would be to extend~\citet{lee_smoothness_2023}'s analysis to automatically annotate $\lambda_{\text{Gen}}$ programs with gradient estimators. Note that our current type-based analysis does not verify the local domination condition of Thm.~\ref{thm:unbiased-adev}, which~\citet{wagner2023fast} does manage to check statically, by imposing restrictions on the PPL and considering only the ELBO objective. 

\paragraph{Automated Gradient Estimation.} Due to its centrality to many applications in computer science and beyond, there has been intense interest in automating unbiased gradient estimation for objective functions expressed as expectations, yielding several frameworks for unbiasedly differentiating first-order stochastic computation graphs~\cite{krieken2021storchastic,schulman2015gradient,weber2019credit}, imperative programs with discrete randomness~\cite{arya_automatic_2023}, and higher-order probabilistic programs~\cite{lew_adev_2023}. Some works have also investigated automated computation of \textit{biased} gradient estimates via smoothing~\citep{kreikemeyer2023smoothing}.  However, these frameworks cannot be directly applied to variational inference problems, which require differentiating not just user code, but also traced simulators and log density evaluators of the probabilistic programs that users write. We address these challenges, by compiling density functions of user-defined probabilistic programs into a target language compatible with ADEV~\cite{lew_adev_2023}. We also extend ADEV in several other ways: our version adds \textbf{score} to differentiate not just expected values but general integrals against (potentially unnormalized) measures; is implemented performantly on GPU; and has been extended with a reverse-mode.

% \noindent\textbf{Sound and modular inference} % also cite minh?

% %% VI + programmable inference?
\paragraph{Programmable Inference.} We build on a long line of work that aims to make inference more \textit{programmable} in PPLs~\cite{mansinghka2014venture,mansinghka2018probabilistic,narayanan2016probabilistic,cusumano-towner_gen_2019,lew_probabilistic_2023}. In much the same way that this prior work has aimed to expose compositional structure in Monte Carlo algorithms and help users explore a broad class of algorithm settings, our work aims to do the same for variational inference, exposing compositional structure in gradient estimators and variational objectives.

% The automation of gradient computations for variational inference has led to the emergence of specialized programming languages and software frameworks that streamline these processes in machine learning and statistical modeling. We build on prior approaches: our core calculus is closest to languages like Gen \cite{cusumano-towner_gen_2019}, Pyro \cite{bingham_pyro_2018}, and ProbTorch \cite{siddharth_learning_2017}. We go beyond these languages by formalizing the interaction between density computation and gradient estimation, and significantly extend the landscape of learnable programs (including ones with stochastic control flow, discrete random choices, and marginal density estimation). Our work is inspired by work like \cite{stites_learning_2021}, which considers extending variational objectives to stacks of properly weighted importance samplers. Our work considers differentiable languages for model and proposal specification, estimator design questions, and a more flexible interface (the stochastic probability interface is less restrictive than enforcing properly weighted samplers). We're intrigued to determine if we can recover \cite{stites_learning_2021}, as well as the nested objectives described in \cite{zimmermann_nested_2021}, in our setting. Practically, our language development is also inspired by compositional inference frameworks like \texttt{monad-bayes} \cite{scibior_functional_2018}.

%% Formalism for inference
\paragraph{Formal Reasoning about Inference and Program Transformations.} Our semantics builds on recent work in the {denotational semantics and validation of Bayesian inference}~\cite{scibior_denotational_2017,heunen_convenient_2017,scibior_functional_2018}, as well as semantic foundations for differentiable programming~\cite{huot_correctness_2020,sherman_s_2021}. Our soundness proofs are based on logical relations~\cite{ahmed_step-indexed_2006,huot_correctness_2020}. We also draw on a tradition of deriving probabilistic inference algorithms via program transformations~\cite{narayanan2016probabilistic}.

\section{Discussion}
\label{sec:discussion}

This work gives a formal account of the automation that PPLs provide for variational inference — a powerful and widely used suite of features that has not previously been completely understood in formal programming language terms. This work's account provides a careful separation of the interactions between tracing, density computation, gradient estimation strategies, and automatic differentiation. Simultaneously, this work shows how to implement these features modularly and extensibly, addressing a number of pain points in existing implementations, and expanding the class of variational inference algorithms that users can easily express.

\paragraph{Limitations.} That said, the modular approach we have presented has several key limitations:
\begin{itemize}[leftmargin=*]
\item \textit{Static checks may be unnecessarily restrictive.} For example, although the rectified linear unit (ReLU) is not differentiable at 0, in many (but not all) contexts it is safe to use it as though it were differentiable, without compromising unbiasedness. The type system of $\lambda_\text{ADEV}$ is not sophisticated enough to distinguish when ReLU is safe or not, and so we must give it the restrictive type $\mathbb{R}^*\rightarrow\mathbb{R}^*$. Of course, at their own risk, users of the system are free to ignore these static checks.
\item \textit{No parametric discontinuities.} A key limitation of our language, shared by Pyro, ProbTorch, and Gen, is that \textit{parametric discontinuities} (expressions that compute discontinuous functions of the input parameters) are not permitted. Variational inference is possible in these settings, and \citet{lee2018reparameterization} proposed a gradient estimator that can be automated for a restricted PPL with affine discontinuities. More recently, \citet{bangaru2021systematically} and \citet{michel2024distributions} have presented techniques for differentiating integral expressions with parametric discontinuities. It is not yet clear to what extent the design we present could be cleanly extended to exploit these techniques.
\item \textit{User-specified variational objectives may be ill-defined.} Our correctness theorem assumes as a precondition that the objective the user has specified is well-defined (i.e., if it is defined as an expected value, that it is finite for all input parameters). Previous work has identified the verification of this condition as an important challenge for safe variational inference~\citep{lee_towards_2019}, but our system includes no static checks to do so.
\item \textit{Continuation-passing style (CPS) is unnatural in many host languages.} $\textbf{adev}\{\cdot\}$ transforms a program to CPS in order to automate unbiased gradient estimation. This is easy in our Haskell implementation, but in Python and Julia, it introduces some amount of friction. For example, in Julia, certain host-language features (like mutation, and therefore most imperative loops) are incompatible with our CPS implementation. By contrast, Pyro and Gen place few restrictions on the host-language features that can be used to define models and variational families.
\end{itemize}

In addition, our implementation does not yet incorporate several important insights and capabilities from existing deep PPLs, including Pyro's use of tensor contractions to marginalize discrete variables~\citep{obermeyer2019functional,obermeyer2019tensor}, or the use of fine-grained control flow information to reduce the variance of gradient estimates~\citep{schulman2015gradient}.

\paragraph{Future Work.} We comment on several intriguing avenues for future work:
\begin{itemize}[leftmargin=*]
\item \textit{The search for low variance estimators.} Our approach automates the derivation of unbiased gradient estimators, but it says little about what estimators one should choose to achieve low variance on particular problems. Our approach should make it \textit{easier} to address this challenge, by allowing rapid exploration of a large space of estimation strategies, some of which have not been previously automated. We hope that our work and implementations might be used to carry out a study of the behavior of different gradient estimators, on a broader variety of problems than those enabled by current automation.
\item \textit{Interaction with ADEV.} Our system is the first to use ADEV \cite{lew_adev_2023} for scalable learning of large parameter spaces (\S\ref{appdx:yolo}). ADEV provides a fundamentally new perspective on automatic differentiation, and extends the technique to expected value loss functions. Our integration with ADEV has several implications: by virtue of the fact that the target language of our language's transformations is ADEV's source language, our system may be used with ADEV loss functions beyond the variational loss functions which we've discussed in this work. Extensions to ADEV which improve variance properties of gradient estimators symbiotically improve the performance of gradient estimators in our language. Indeed, we expect new investigations into low variance gradient estimators for discrete random choices \cite{arya_automatic_2023} to open up novel variational guide families, hitherto unexplored due to poor variance or computational intractability of discrete enumeration.
\end{itemize}

%% Closing statement
% It is our hope that this work will further weaken the boundaries between learning and probabilistic computation, and provide firm ground for old ideas (e.g. deep, structured generative modeling) and new (approximations which include normalization and marginalization, with estimated densities) to find growth.

%\newpage

\section*{Data-Availability Statement}

An artifact providing a version of \href{https://gen.dev/genjax/vi}{\texttt{genjax.vi}}, and reproducing our experiments, is available \cite{mccoy_r_becker_2024_10935596}.

\begin{acks}
The authors are grateful to Tuan Anh Le, Tan Zhi-Xuan, Cameron Freer, Andrew Bolton, George Matheos, Nishad Gothoskar, Martin Jankowiak, and Sam Witty for useful conversations and feedback, and to our anonymous referees for helpful feedback on earlier drafts of the paper. We are grateful for  support from DARPA, under the DARPA Machine Common Sense (Award ID: 030523-00001) and JUMP (CoCoSys, Prime Contract No. 2023-JU-3131) programs, and DSTA, under Master Agreement No. 801899998, as well as gifts from Google, and philanthropic gifts from an anonymous donor and the Siegel Family Foundation.
\end{acks}

\bibliography{morerefs}

\appendix

\newpage
\section{Full Language and System}
\label{appdx:full-system}
We elaborate on our discussion in \S\ref{sec:sp}, and provide a sketch of the changes that would be necessary to adapt our formal model and proofs to the generative language extended with $\textbf{marginal}$ and $\textbf{normalize}$, and the reverse mode implementation of ADEV.

\subsection{New Language Features for Expressive Models and Variational Families}
\label{sec:margnorm}
The full language for models and variational families modularly extends our formalized core with two new constructs: $\marginal$, for marginalizing auxiliary variables from the model or variational family; and $\normalize$, for building variational families that use Monte Carlo to approximate the normalized posteriors of other programs. We discuss each feature in turn.\\\vspace{-2mm}

\noindent\textbf{Marginalization of auxiliary variables.} A key requirement in variational inference is that the model and variational family be defined over the same set of random variables~\cite{lee_towards_2019,lew2019trace,wang2021sound,li2023type}. But it is often desirable to introduce \textit{auxiliary variables} to only the variational family, or only the model, in order to make the \textit{marginal} distribution on the shared variables more interesting. Fig.~\ref{fig:example_transcript} illustrates one example. Both the model and variational family are defined over $x$ and $y$, and intuitively, we want the variational family to learn a distribution that concentrates mass around the circumference of a circle of radius $\sqrt{5}$. But using Gaussian primitives alone we can never express such a distribution. Therefore, $\texttt{guide\_joint}$ introduces an auxiliary variable $v \sim U([0, 2\pi])$, which informs the location of both $x$ and $y$. Because $\texttt{guide\_joint}$ is now defined over the three variables $(v, x, y)$, it is no longer directly usable as a variational family for the model. Using $\marginal$, we can fix the problem, constructing a new program representing $\texttt{guide\_joint}$'s marginal on $x$ and $y$.

The syntax for marginalizing is $\marginal~\textit{names}~\textit{program}~\textit{algorithm}$. The argument \textit{names} is a list of strings indicating which of the variables from \textit{program} should be kept. Marginal distributions may have intractable density functions, and so our system automates \textit{differentiable stochastic estimates} of their densities instead (\S\ref{sec:diffsp}). The argument \textit{algorithm} specifies a strategy for calculating these estimates. It is given as a function, taking in values for the variables in \textit{names}, and outputting an \textit{algorithm} for inferring the posterior over auxiliary variables (see \S\ref{sec:diffsp} for details).%of the form $\importance~\textit{proposal}~K$, specifying a \textit{proposal distribution} over the auxiliary variables and a number of particles $K$.

\textit{Usage.} With $\marginal$, users can encode within a PPL many variational inference algorithms that cannot be easily expressed in existing systems. For example, hierarchical variational inference (HVI)~\cite{ranganath_hierarchical_2016,agakov_auxiliary_2004} can be implemented by introducing then marginalizing auxiliary variables in a variational family, using 1-particle importance sampling as the \textit{algorithm} argument. Increasing the number of particles used to estimate the marginal density of the variational family yields importance-weighted HVI~\cite{sobolev_importance_2019}. Parameterizing the proposal used to marginalize the variational family as a neural network recovers auxiliary deep generative models (ADGM)~\cite{maaloe_auxiliary_2016}. Nesting a $\marginal$ distribution as the proposal to marginalize a variational family is an instance of recursive auxiliary-variable inference (RAVI)~\cite{lew_recursive_2022}. Extending the variational family with steps from a Markov chain Monte Carlo algorithm, then marginalizing those, implements Markov chain variational inference (MCVI)~\cite{salimans2015markov}. And many aspects of these algorithms could in principle be composed to develop new inference algorithms altogether.\\

\noindent\textbf{Approximate normalization.} Another way to make variational families more expressive is to add general-purpose inference logic to the variational family itself. To do this, users can apply our $\normalize$ construct, which builds a probabilistic program that represents the output distribution of a given inference algorithm applied to a given (unnormalized) probabilistic program. Its syntax is $\normalize~\textit{program}~\textit{algorithm}$.

\textit{Usage.} Fig.~\ref{fig:objectives} illustrates a typical usage: after defining a simple variational family ($\texttt{q}_{\texttt{NAIVE}}$), we define an improved variational family ($\texttt{q}_{\texttt{SIR}}$) that performs $N$-particle importance sampling, targeting the model's posterior, using $\texttt{q}_{\texttt{NAIVE}}$ as a proposal distribution. This normalized variational family can then be used in several ways. If we use $\texttt{q}_{\texttt{SIR}}$ as a variational family and optimize the usual ELBO objective (using our stochastically estimated densities for $\texttt{q}_{\texttt{SIR}}$, see \S\ref{sec:diffsp}), this is equivalent to optimizing the IWAE (or IWELBO) objective with $\texttt{q}_{\texttt{NAIVE}}$ as the variational family~\cite{burda_importance_2016}. Alternatively, we can optimize the objective $\phi \mapsto \mathbb{E}_{x \sim \sem{\texttt{q}_{\texttt{SIR}} {\phi_0}}}[-\log ~\sem{\textbf{density}\{\texttt{q}_{\texttt{NAIVE}}\}}_\phi(x)]$, which tunes $\texttt{q}_{\texttt{NAIVE}}$ to minimize its KL divergence to the approximate inference algorithm $\texttt{q}_{\texttt{SIR}}$. This latter objective is the wake-phase $\phi$ objective in the reweighted wake sleep algorithm~\cite{bornschein_reweighted_2015,le_revisiting_2019}, and is illustrated as a $\lambda_\text{ADEV}$ program in \S\ref{appdx:objectives}.

\subsection{Differentiable Stochastic Estimators of Densities and their Reciprocals}
\label{sec:diffsp}

With the new features from \S\ref{sec:margnorm}, it is possible to define models and variational families whose exact densities cannot be efficiently computed. As such, in our full language, $\textbf{density}$ and $\textbf{sim}$ do not produce exact density evaluators, but rather density estimators that may be stochastic, depending on the input program. Importantly, these estimators are still implemented in the ADEV language, still come equipped with gradient estimation strategies, and are still guaranteed to satisfy the necessary differentiability requirements for gradients to be unbiased. We now discuss the implications of this stochastic estimation for users of our language; see \S~\ref{appdx:estimator-impl} for an overview of how the stochastic estimates are computed.

\textit{For users.} On the surface, very little changes when densities are made stochastic. The automated $\textbf{density}$ procedures now have type $\trace \to P~\RR$, instead of $\trace \to \RR$, and the specifications for both $\textbf{density}$ and $\textbf{sim}$ change slightly: now, $\textbf{density}\{t\}$ produces only \textit{unbiased estimates} of the density of $\sem{t}_1$, and $\textbf{sim}\{t\}$ produces pairs $(x, w)$ with the property that $\mathbb{E}[\frac{1}{w} \mid x]$ is the reciprocal density of $\sem{t}_1$ at $x$. 
%Note that this new specification of $\mathbf{density}$ and $\mathbf{sim}$ still allows us to relatively easily express variational objectives and produce sound gradient estimators that minimize them. 
But using density estimators instead of exact densities can change the meaning of user-specified variational objectives. Consider, for example, the ELBO, whose implementation is very slightly modified to account for the new type of $\textbf{density}$:
% As a representative example we consider the ELBO objective for a model $P_\theta$ and a variational family $Q_\phi$. If $t_p: G \tau$ and $t_q: \trace \to G \tau'$ implement $P_\theta$ and $Q_\phi$ in $\lambda_\text{Gen}$ and $y: \trace$ is denotes an observation, then the program
% \begin{align*}
% &\mathbb{E} \{ \haskdo. x, w_q \gets \mathbf{sim}\{t_q ~y\}; \\
% &\hspace{4mm} \llet\ w_p = \mathbf{density}\{t_p\} (x \mdoubleplus y); \\
% &\hspace{4mm} \return \log w_p / w_q\\
% &\hspace{2mm}\}
% \end{align*}
% implements the ELBO objective, given by
% $$
% \int \log \frac{P_\theta(x, y)}{Q_\phi(x|y)}\ \ Q_\phi(x|y)dx.
% $$
% With the new specification, we have a very similar program
% \begin{align*}
% &\mathbb{E} \{ \haskdo. x, w_q \gets \mathbf{sim}\{t_q ~y\}; \\
% &\hspace{4mm} w_p \gets \mathbf{density}\{t_p\} (x \mdoubleplus y); \\
% &\hspace{4mm} \return \log w_p / w_q\\
% &\hspace{2mm}\}
% \end{align*}
$$\textbf{ELBO}\coloneqq \lambda(\theta, \phi). \mathbb{E}(\haskdo~\{(x, w_q)\gets\textbf{sim}\{Q\} ~\phi; w_p\gets\textbf{density}\{P\} ~\theta ~x; \return\log w_p - \log w_q\}).$$

\noindent This program may no longer denote the usual ELBO objective, $\int \log \frac{\tilde{p}_\theta(x)}{q_\phi(x)}\ \ Q_\phi(x)dx.$ Instead, it denotes the objective
$$
(\theta, \phi) \mapsto \iint \log\frac{w_p}{w_q} \kappa_{p;\theta}(x, dw_p)\kappa_{q;\phi}(dx, dw_q),
$$
where we write $\kappa_p$ for the unbiased density estimation kernel for $P_\theta$ (the new specification for $\mathbf{density}\{t_p\}$) and $\kappa_q$ for the `reciprocal density simulation' measure for $Q_\phi$ (the new specification for $\mathbf{sim}\{t_q\}$). Fortunately, however, this is still a sensible variational objective. Because $\mathbb{E}[w_p \mid x] = \tilde{p}_\theta(x)$ and $\mathbb{E}[\frac{1}{w_q} \mid x] = \frac{1}{q_\phi(x)}$, we can apply Jensen's inequality and obtain that $\mathbb{E}[\log w_p \mid x] \leq \log \tilde{p}_\theta(x)$ and $\mathbb{E}[\log \frac{1}{w_q} \mid x] \leq \log \frac{1}{q_\phi(x)}$. This in turn implies that our new objective is a \textit{lower bound} on the original ELBO. In fact, it can be shown that the new bound decomposes as the original ELBO plus a non-negative term related to the average variation of the stochastic density estimates~\cite{lew_recursive_2022}. Thus, optimizing this modified ELBO simultaneously maximizes the original ELBO and minimizes the variability of the density estimates.

\subsection{Estimator implementation} 
\label{appdx:estimator-impl}
To implement density estimation, we take inspiration from~\cite{lew_probabilistic_2023}. We first extend our language with a new type for \textit{inference algorithms}; currently we have implemented only importance-sampling algorithms, constructed with the syntax $\textbf{importance}~\textit{proposal}~N$. The $\xi$ and $\chi$ transformations from \S\ref{sec:gen-transformations} are then extended to transform importance-sampling algorithms into ADEV-compatible procedures for, given an unnormalized model of type $G~\tau$, (1) generating collections of importance-weighted particles from the proposal, and (2) given a sample $x$, generating collections of \textit{conditional} importance samples (at least one of which is equal to $x$)~\cite{cusumano-towner_aide_2017,naesseth2019elements}. The importance sampling is implemented in terms of the proposal's \textbf{sim} method and the model's \textbf{density} method, so any gradient estimation logic is incorporated into the generated ADEV procedures. The $\marginal$ and $\normalize$ constructs expect algorithms as inputs, which they use as follows to estimate their densities and density reciprocals:
\begin{itemize}[leftmargin=*]
\item To implement \textbf{density} for $\marginal~\textit{names}~\textit{prog}~\textit{alg}$, we first construct an unnormalized posterior over auxiliary variables to marginalize by transforming \textit{prog} to $\observe$ all random variables in \textit{names}. We then estimate its normalizing constant\textemdash the marginal density of the observed choices\textemdash by generating (unconditional) importance samples from $\textit{alg}$ and averaging the importance weights. For $\textbf{sim}$, we simulate a complete trace $t$ from $\textit{prog}$ and project out the variables from $\textit{names}$ to obtain a marginal sample. We then proceed as before to estimate the density, but use conditional importance samples, conditioned on the auxiliary variables sampled when generating $t$. This can be shown to have the desired unbiasedness property~\cite{lew_probabilistic_2023}.

\item To implement \textbf{density} for $\normalize~\textit{prog}~\textit{alg}$ for a given trace $t$, we first call $\textbf{density}\{\textit{prog}\}$ on $t$ to get its \textit{unnormalized} density. We then use \textit{conditional} importance sampling (conditioned on $t$ using \textit{alg} targeting \textit{prog}, and compute the average weight. The ratio of the unnormalized density to the average weight is an unbiased estimate of the output density of sampling-importance-resampling (SIR)~\cite{Rubin1988UsingTS} at $t$. To implement $\textbf{sim}$, we draw unconditional importance samples, and resample one from a categorical distribution based on the importance weights. We then call  $\textbf{density}\{\textit{prog}\}$ on the chosen trace $t$ and return $(t, w)$, where $w$ is  the ratio of the chosen trace's unnormalized density to the average importance weight. The categorical sample needs a gradient estimator\textemdash we use $\textbf{categorical}_\text{ENUM}$.
\end{itemize}
\subsection{Reverse-Mode Automatic Differentiation of Expected Values}
\label{appdx:yolo}
The algorithm we present in \S\ref{fig:adev-transformation} is a \textit{forward-mode} AD algorithm, which works by propagating dual numbers. Unfortunately, forward-mode algorithms do not scale to typical deep learning applications, because they can compute only one dimension of the gradient per execution. Our implementations extend ADEV to support \textit{reverse-mode}, which can compute gradients in one pass:
\begin{itemize}[leftmargin=*]
    \item Our \texttt{genjax.vi} implementation is built on top of JAX, which takes a unique approach to reverse-mode for deterministic programs~\cite{radul2022you} (YOLO). It involves first applying a forward-mode program transformation, then separating the transformed program into linear and non-linear components, and finally transposing the linear component via another program transformation. This is a powerful technique that avoids the need to write a specialized reverse-mode algorithm, but only applies to first-order deterministic programs, which at first glance rules out ADEV's reliance on higher-order continuation-passing style and on random sampling. The first restriction can be lifted by partial evaluation: if we begin with a first-order \textit{user} program in \texttt{genjax.vi} --- after apply ADEV's transformations --- we can $\beta$-reduce away (via JAX's partial evaluation) any higher-order subterms introduced by the CPS transformation, allowing us to take first-order programs to first-order programs. Following YOLO, we then apply the unzip transformation $\mathcal{U}$, and the transposition transformation $\mathcal{T}$ to the deterministic parts of the resulting first-order programs. The restriction on random sampling can be lifted by the observation that sampling can be safely included in the nonlinear fragment of YOLO's language -- meaning that it is safe to extend YOLO with ADEV sampling primitives, so long as they implement gradient strategies that do not introduce sampling statements which depend on the linear tangent perturbations. We exemplify the process of using YOLO's transformations to derive a reverse mode gradient estimator in Fig.~\ref{fig:yolo}.

    \item In Julia and Haskell, we take a different approach. We begin with existing libraries for deterministic, reverse-mode AD, and attempt to transform the user's program into a host-language program that, when standard reverse-mode AD is applied, will yield a correct gradient estimator. To achieve this, we still apply a continuation-passing transformation to the user's program, and replace stochastic primitives with special CPS-style gradient estimators. But we must write the primitives so that, instead of stochastically returning manually constructed dual numbers, they stochastically return scalar losses whose reverse-mode derivatives are as desired. To achieve this, we use phony ``functions'' analogous to the ``magic-box'' operators in DICE~\cite{foerster2018dice} or Storchastic~\cite{krieken2021storchastic}, which introduce phantom dependencies on scalars so that their derivatives will include extra terms. This sort of trick is widely used to implement gradient estimators without giving up on deterministic AD's benefit; what is nice about combining the trick with ADEV is that (1) the hacky use of phony functions is limited to the implementations of primitives, which still have local specifications that can be separately reasoned about; and (2) we still benefit from the expressive power of ADEV's continuations, which can express gradient estimators that require much more substantial changes to the original program than simply tracking a few extra gradient terms.
\end{itemize}

% Mathieu's macros
\newcommand{\KK}{\kappa}
\newcommand{\II}{\mathbb{I}}
\newcommand{\BB}{\mathbb{B}}
\newcommand{\loss}{\mathcal{L}}
\newcommand{\syntaxcolor}[0]{\color{blue!70!black}}
\newcommand{\constantcolor}[1]{\mathbf{\syntaxcolor{#1}}}
\newcommand{\colforsyntax}[1]{\mathbf{\syntaxcolor{#1}}}
\newcommand{\iin}[0]{\colforsyntax{in}}
\newcommand{\normalreparam}[0]{\constantcolor{normal_\texttt{REPARAM}}}
\newcommand{\normalreinforce}[0]{\constantcolor{normal_\texttt{REINFORCE}}}
\newcommand{\geometricreinforce}[0]{\constantcolor{geometric_\texttt{REINFORCE}}}
\newcommand{\flipreinforce}[0]{\constantcolor{flip_\texttt{REINFORCE}}}
\newcommand{\flipenum}[0]{\constantcolor{flip_\texttt{ENUM}}}
\newcommand{\gammareparam}[0]{\constantcolor{gamma_\texttt{REPARAM}}}
\newcommand{\studenttreparam}[0]{\constantcolor{student\_t_\texttt{REPARAM}}}
\newcommand{\betareparam}[0]{\constantcolor{beta_\texttt{REPARAM}}}
\newcommand{\dirichletreparam}[0]{\constantcolor{dirichlet_\texttt{REPARAM}}}
\newcommand{\baseline}[0]{\constantcolor{baseline}}
\newcommand{\flipimportance}[0]{\constantcolor{flip_\texttt{IMPORTANCE}}}
\newcommand{\addcost}[0]{\constantcolor{add\_cost}}
\newcommand{\poissonweak}[0]{\constantcolor{poisson_\texttt{WEAK\_DERIV}}}
\newcommand{\gammaimplicit}[0]{\constantcolor{gamma_\texttt{IMPLICIT}}}
\newcommand{\reinforce}[0]{\constantcolor{reinforce}}
\newcommand{\leaveoneout}[0]{\constantcolor{leave\_one\_out}}
\newcommand{\implicitdiff}[0]{\constantcolor{implicit\_differentiation}}
\newcommand{\reparamreject}[0]{\constantcolor{reparam\_rejection\_sampler}}
\newcommand{\forget}[1]{\lfloor #1 \rfloor}
\newcommand{\kleisli}[0]{\mathbf{Kleisli}}
\newcommand{\mbe}{\constantcolor{\mathbb{E}}}
\newcommand{\mbb}{\mathbb{B}}
\newcommand{\mba}{\mathbb{A}}
\newcommand{\mbc}{\mathbb{C}}
\newcommand{\mbd}{\mathbb{D}}
\newcommand{\mbl}{\mathbb{L}}
% \newcommand{\dwRR}{\wRR_\mathcal{D}}
% \newcommand{\posreal}[0]{\RR_{> 0}}
% \newcommand{\eplus}{+^{\wRR}}
% \newcommand{\etimes}{\times^{\wRR}}
% \newcommand{\eexp}{exp^{\wRR}}
% \newcommand{\minibatch}{\constantcolor{minibatch}}
% \newcommand{\exact}{\constantcolor{exact}}
% \newcommand{\monadsymbol}{P}
% \newcommand{\pmonad}{\monadsymbol\,}
% \newcommand{\dpmonad}{\monadsymbol_\der\,}
% \newcommand{\vpmonad}{\monadsymbol_\ver\,}
% \newcommand{\rel}{\mathscr{R}}
% \newcommand{\arr}[1]{#1^{\to}}
% \newcommand{\pushoutcorner}[1][dr]{\save*!/#1+1.2pc/#1:(1,-1)@^{|-}\restore}
% \newcommand{\pullbackcorner}[1][dr]{\save*!/#1-1.2pc/#1:(-1,1)@^{|-}\restore}
% \newcommand{\lcstruct}[0]{\lambda_c(\Sigma)}
% \newcommand{\lifted}[0]{\stackrel{\cdot}{\to}}
% \newcommand{\Gl}[0]{\mathbf{Gl}}
% \newcommand{\semgl}[1]{\llparenthesis #1\rrparenthesis}
% \newcommand{\facto}[0]{\ensuremath{(\mathcal{E},\mathcal{M})}}
% \newcommand{\qbd}{\mathbf{Qbd}}
% \newcommand{\bdiff}{\mathbf{BDiff}}
% \newcommand{\wRR}{\widetilde{\RR}}

% Fig. 1 (RADEV LAFI)
\begin{figure*}[htb]
\scriptsize{
    \begin{tabular}{|l|l|l|}
         \multicolumn{1}{c}{\bf (a) Input Loss as a} & 
         \multicolumn{1}{c}{\bf(b) Apply Forward-mode} & 
         \multicolumn{1}{c}{\bf (c) Inline $\ad{\mbe}, \ad{\normalreparam}$ } \\
         \multicolumn{1}{c}{\bf Probabilistic Program} & 
         \multicolumn{1}{c}{\bf ADEV $\ad{-}$} & 
         \multicolumn{1}{c}{\bf and $\beta$-reduce $\lambda$'s} \\\hline
    \hspace{-1mm}\begin{tabular}{l}
    $\loss = \lambda (\otimes \underline{\theta};):\RR^2.\,  \mbe (\haskdo~\{$\\
    $\quad \llet\ (x;) \gets \normalreparam(\underline{\theta}_1,1)$\\
    $\quad \llet\ (y;) = \sin(x)$ \\
    $\quad \llet\ (z;) = y+\underline{\theta}_2$ \\
    $\quad\quad \return~z \})$ \\
     \\
     \\
     \\
     \\
     \\
     \\
    % $l = \lambda \theta. \mbe (f(\theta))$
    \end{tabular}
    & 
    \hspace{-1mm}\begin{tabular}{l}
    $d\loss = \lambda (\otimes \underline{\theta};\otimes \underline{\dot{\theta}}):\RR^2\times \RR^2.\,  \ad{\mbe} $\\
    $\big(\lambda dl.\haskdo~\{$\\
    $\quad \ad{\normalreparam}\big((\underline{\theta}_1,1),(\underline{\dot{\theta}}_1,0)\big)$\\
    $\quad \big(\lambda (x;\dot{x}). $\\
    $\quad\quad \llet\ (y;) = \sin(x)$\\
    $\quad\quad \llet\ (dy;) = \cos(x)$\\
    $\quad\quad \llet\ (;\dot{y}) = dy*\dot{x}$\\
    $\quad\quad \llet\ (z;) = y+\underline{\theta}_2$ \\
    $\quad\quad \llet\ (;\dot{z}) = \dot{y}+\underline{\dot{\theta}}_2$ \\
    $\quad\quad\quad dl(z,\dot{z})\big)\}\big)$
    \\
    \\
    \\
     \\
    %$l' = \lambda \theta. \mbe (g(\theta))$
    \end{tabular}
    & 
    \hspace{-1mm}\begin{tabular}{l}
    $d\loss = \lambda (\otimes \underline{\theta};\otimes \underline{\dot{\theta}}):\RR^2\times \RR^2.\, \haskdo~\{$\\
    $\quad \llet\ (\epsilon;) \gets \normalreparam(0,1)$\\
    $\quad \llet\ (x_1;) = \epsilon * 1$ \\
    $\quad \llet\ (;\dot{x_1}) = \epsilon * 0$ \\
    $\quad \llet\ (x;) = x_1 + \underline{\theta}_1$\\
    $\quad \llet\ (;\dot{x}) = \dot{x_1} + \underline{\dot{\theta}}_1$\\
    $\quad \llet\ (y;) = \sin(x)$\\
    $\quad\llet\ (dy;) = \cos(x)$\\
    $\quad \llet\ (;\dot{y}) = dy*\dot{x}$\\
    $\quad \llet\ (z;) = y+\underline{\theta}_2$ \\
    $\quad \llet\ (;\dot{z}) = \dot{y}+\underline{\dot{\theta}}_2$ \\
    $\quad\quad \return (z,\dot{z})\}$
    \end{tabular}\\\hline
    \end{tabular}
    
     \begin{tabular}{|l|l|l|}
         \multicolumn{1}{c}{\bf (d) Apply Unzip macro} & 
         \multicolumn{1}{c}{\bf (e) Apply Transpose on linear part} & 
         \multicolumn{1}{c}{\bf (f) Simplify and initialise reverse-mode} \\\hline
    \hspace{-1mm}\begin{tabular}{l}
     $d\loss.nonlin = \lambda (\otimes \underline{\theta};):\RR^2.\, \haskdo~\{$\\
    $\quad \llet\ (\epsilon;) \gets \normalreparam(0,1)$\\
    $\quad \llet\ (x_1;) = \epsilon * 1$ \\
    $\quad \llet\ (x;) = x_1 + \underline{\theta}_1$\\
    $\quad \llet\ (y;) = \sin(x)$\\
    $\quad\llet\ (dy;) = \cos(x)$\\
    $\quad \llet\ (z;) = y+\underline{\theta}_2$ \\
    $\quad \llet\ (\underline{trace};) =\otimes(\epsilon, x_1,x,y,dy,z)$ \\
    $\quad\quad \return (z,trace)\big)$\\ \\
     $d\loss.lin = \lambda (\otimes\underline{trace};\otimes \underline{\dot{\theta}}):\otimes \RR\times \RR^2.\,$\\
    $\quad \llet\ (\epsilon, x_1,x,y,dy,z; ) = \underline{trace}$\\
    $\quad \llet\ (;\dot{x_1}) = \epsilon * 0$ \\
    $\quad \llet\ (;\dot{x}) = \dot{x_1} + \underline{\dot{\theta}}_1$\\
    $\quad \llet\ (;\dot{y}) = dy*\dot{x}$\\
    $\quad \llet\ (;\dot{z}) = \dot{y}+\underline{\dot{\theta}}_2$ \\
    $\quad\quad \return\ \dot{z}$
    \end{tabular}
    & 
    \hspace{-1mm}\begin{tabular}{l}
     $d\loss.nonlin = \lambda (\otimes \underline{\theta};):\RR^2.\, \haskdo~\{$\\
    $\quad \llet\ (\epsilon;) \gets \normalreparam(0,1)$\\
    $\quad \llet\ (x_1;) = \epsilon * 1$ \\
    $\quad \llet\ (x;) = x_1 + \underline{\theta}_1$\\
    $\quad \llet\ (y;) = \sin(x)$\\
    $\quad\llet\ (dy;) = \cos(x)$\\
    $\quad \llet\ (z;) = y+\underline{\theta}_2$ \\
    $\quad \llet\ (\underline{trace};) =\otimes(dy)$ \\
    $\quad\quad \return (z,\underline{trace})\big)$\\ \\
     $d\loss.lin = \lambda (\otimes\underline{trace}; \dot{z}):\otimes \RR\times \RR.\,$\\
    $\quad \llet\ (dy; ) = \underline{trace}$\\
        $\quad \llet\ (;\dot{y}) = \dot{z}$ \\
    $\quad \llet\ (;\underline{\dot{\theta}}_2) = \dot{z}$ \\
    $\quad \llet\ (;\dot{x}) = dy*\dot{y}$\\
    $\quad \llet\ (;\dot{x_1}) = \dot{x}$ \\ 
 $\quad \llet\ (;\underline{\dot{\theta}}_1) = \dot{x}$ \\ 
    $\quad\quad \return\ \underline{\dot{\theta}}$
    \end{tabular}
    & 
    \hspace{-1mm}\begin{tabular}{l}
    $d\loss.nonlin = \lambda (\otimes \underline{\theta};):\RR^2.\, \haskdo~\{$\\
    $\quad \llet\ (\epsilon;) \gets \normalreparam(0,1)$\\
    $\quad \llet\ (x;) = \epsilon + \underline{\theta}_1$\\
    $\quad \llet\ (y;) = \sin(x)$\\
    $\quad\llet\ (dy;) = \cos(x)$\\
    $\quad \llet\ (z;) = y+\underline{\theta}_2$ \\
    $\quad \llet\ (\underline{trace};) =\otimes(dy)$ \\
    $\quad\quad \return (z,\underline{trace})\big)$\\ \\
     $d\loss.grad = \lambda (\otimes\underline{trace};):\otimes \RR.\,$\\
    $\quad \llet\ (dy; ) = \underline{trace}$\\
    $\quad \llet\ (;\underline{\dot{\theta}}_2) = 1$ \\
 $\quad \llet\ (;\underline{\dot{\theta}}_1) =  dy$ \\
    $\quad\quad \return\ \underline{\dot{\theta}}$ 
    \\
     \\
     \\
     \\
    \end{tabular}\\\hline
    \end{tabular}
}
\vspace{-.3cm}
\caption{Deriving a reverse mode gradient estimator for a program in $\lambda_\text{ADEV}$ via YOLO \cite{radul2022you}. We've dropped $\iin$ for readability.}
\label{fig:yolo}
\end{figure*}

\newpage
\section{Additional Variational Objectives in \texorpdfstring{$\lambda_{\text{ADEV}}$}{Lambda ADEV}}
\label{appdx:objectives}
IWELBO, IWHVI, and DIWHVI are all objectives which are based on \textit{evidence lower bounds}. Many variational algorithms also make use of objectives motivated by the forward KL divergence
\begin{align*}
\mathcal{F}_{\text{KL}(P, Q)} := \mathbb{E}_{z \sim P}[\log \frac{\textbf{density}\{P\}(z)}{\textbf{density}\{Q\}(z)}]
\end{align*}
where, for instance, $Q$ has learnable parameters and $P$ is taken to be a posterior. $P$ may be intractable --- we may not have access to simulators or density evaluators for $P$. In this setting, several variational algorithms \cite{hinton_wake-sleep_1995,bornschein_reweighted_2015, gu_neural_2015} consider a modified objective \textit{where inference is used to approximate $P$} --- these algorithms utilize $Q$ as part of the $\text{Alg}$ approximation:
\begin{align*}
\mathcal{F}_{\text{KL}(\text{Alg}(Q), Q)} := \mathbb{E}_{z \sim \text{Alg}(Q)}[\log \frac{\textbf{density}\{\text{Alg}(Q)\}(z)}{\textbf{density}\{Q\}(z)}]
\end{align*}
Of course, for a fixed $\text{Alg}$, the estimators constructed from the modified objective will be biased estimators of the original forward KL objective --- where the bias is controlled by the accuracy of the approximation of $P$ induced by $\text{Alg}$. Importantly, our language includes the possibility of using these objectives in variational algorithms: for instance, the wake phase $Q$-update gradient estimator from \cite{bornschein_reweighted_2015} can be derived in $\lambda_\text{ADEV}$ (plus our full system extensions for estimated densities) from the objective:

\vspace{0.1in}
\begin{mdframed}
{\small 
\begin{tabular}{l}
$\textbf{QWake}(N, P, Q, \phi^\prime)\coloneqq \lambda\phi. \mathbb{E}(\haskdo~\{$ \\
$\quad\quad (z, \_)\gets\textbf{sim}\{\lambda\phi'.\textbf{normalize} \ P \ (\textbf{importance} \ (Q \ \phi') \ N) \}~\phi^\prime;$ \\ 
$\quad\quad q\gets(\textbf{density}\{Q\}~\phi)~z;$ \\ 
$\quad\quad \return - log~q$ \\
$\})$
\end{tabular}}
\end{mdframed}
\vspace{0.1in}

\noindent and also, the wake-phase $P$-update:

\vspace{0.1in}
\begin{mdframed}
{\small 
\begin{tabular}{l}
$\textbf{PWake}(N, P, Q, \phi')\coloneqq \lambda\theta. \mathbb{E}(\haskdo~\{$ \\
$\quad\quad (z, w_q)\gets\textbf{sim}\{\lambda\theta .\textbf{normalize} \ (P \ \theta) \ (\textbf{importance} \ (Q \ \phi^\prime) \ N) \}~\theta;$ \\ 
$\quad\quad p\gets(\textbf{density}\{P\}~\theta) ~z;$ \\ 
$\quad\quad \return~(log~p) - (log~w_q)$ \\
$\})$
\end{tabular}}
\end{mdframed}
\vspace{0.1in}

\noindent where, in this objective, $P$ is a term of type $\mathbb{R}^M \rightarrow D \ \tau$. Indeed, this allows one to implement learning architectures like \textit{reweighted wake-sleep}~\cite{bornschein_reweighted_2015} in $\lambda_\text{ADEV}$.

\newpage
\section{Evaluation: Hand Coded Program vs. Our Implementation}
\label{appdx:overhead}
In \S\ref{sec:evaluation} and Table~\ref{tab:overhead}, we provided evaluation time comparisons between gradient estimators for the ELBO of a variational autoencoder derived using our system and a manually coded implementation. Below, we illustrate the programs used to construct this comparison, and runtime curves as a function of batch size.
\vspace{0.1in}

\begin{figure}[htb!]
\centering
% Row 1: Code on the left, Image on the right
\hfill
\begin{minipage}{0.5\textwidth}
\textbf{Hand coded gradient estimator}
\begin{mdframed}[innerleftmargin=13pt, innerrightmargin=15pt]
\input{fig/overhead/handcoded}
\end{mdframed}
\end{minipage}
\hfill
\begin{minipage}{0.48\textwidth}
\includegraphics[width=\linewidth]{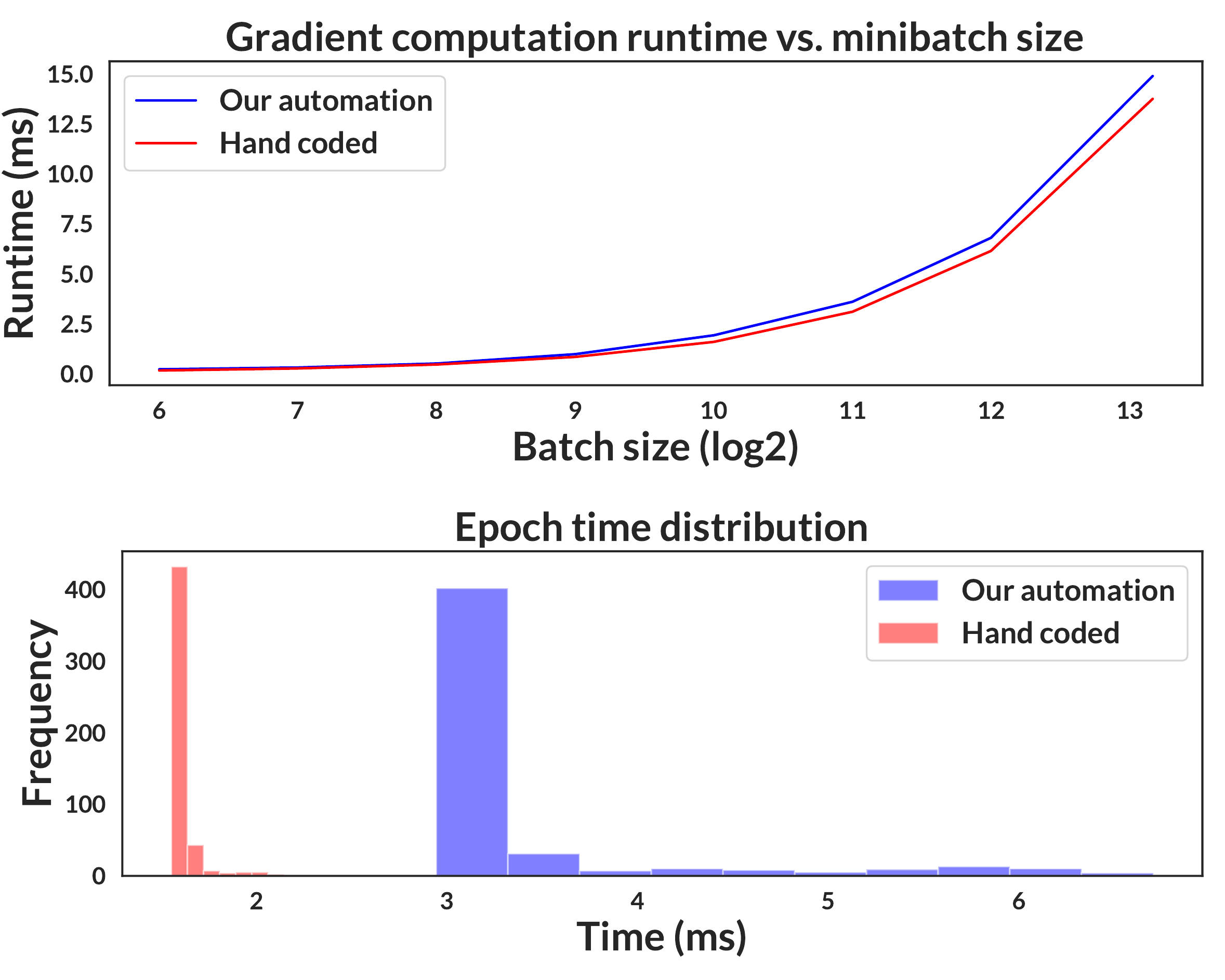}
\end{minipage}
\vspace{0.2cm} % Adds vertical space after the line
\hrule
\vspace{0.2cm} % Adds vertical space after the line
% Row 2: Code on the left, Image on the right
\hfill
\begin{minipage}{0.48\textwidth}
\textbf{Model program}
\begin{mdframed}
\input{fig/overhead/model}
\end{mdframed}
\vspace{0.1cm}
\textbf{Guide program}
\begin{mdframed}
\input{fig/overhead/guide}
\end{mdframed}
\end{minipage}
\hfill
\hfill
\hfill
\begin{minipage}{0.5\textwidth}
\textbf{Loss program (with automated estimator)}
\begin{mdframed}[innerleftmargin=13pt]
\input{fig/overhead/elbo}
\end{mdframed}
\end{minipage}
\caption{Comparing a hand coded VAE ELBO gradient estimator (top left) to our implementation. (Bottom) Our language and implementation supports a modular separation of gradient strategies, model and guide programs, and loss specification. (Top right) We investigate the runtime cost of our abstractions in a parallelization experiment (increasing batch size), and find a negligible impact on runtime performance.}
\label{fig:overhead_transcript}
\end{figure}

\newpage
\section{Evaluation: Additional Experiments}
\label{appdx:additional-experiments}
We performed additional experiments comparing $\texttt{genjax.vi}$ to Pyro and NumPyro, on tutorial examples drawn from Pyro's website. These examples include applying variational inference to a coin flip model, Bayesian linear regression, and semi-supervised and conditional variational autoencoders. In this section, we provide a discussion of the experiments and a selection of our results. For all experiments, we use the standard definition of ELBO in $\lambda_\text{ADEV}$:
{\small $$\textbf{ELBO}(P, Q) \coloneqq \lambda(\theta,\phi). \mathbb{E}(\haskdo~\{(x, w_q)\gets\textbf{sim}\{Q\} ~\phi; w_p\gets\textbf{density}\{P\}~\theta~x; \return\log w_p~-~\log w_q\})$$}

\subsection{Fairness inference of a noisy coin}
We consider a model over \href{https://pyro.ai/examples/svi_part_i.html#A-simple-example}{the latent fairness of a coin}. We observe a sequence of coin flips of length $10$, and seek to infer the distribution over the fairness (the probability) of the coin flip, a Bernoulli random variable. The model and guide programs can be expressed in $\lambda_\text{Gen}$:

\begin{figure}[htb]
\begin{mdframed}
\begin{minipage}[t]{.49\linewidth}
\vspace{0pt}
{\small\begin{tabular}{@{}l@{}}
$\textbf{model} \coloneqq \lambda~flips. (\haskdo~\{$ \\
$\quad\quad p\gets\textbf{sample}~\textbf{beta}(10.0, 10.0)~\text{"fairness"};$ \\
$\quad\quad  \textbf{mapM}~(\lambda x.~\textbf{observe}~\textbf{bernoulli}(p)~x)~flips$ \\
$\})$
\end{tabular}}
\end{minipage}
\hfill
\begin{minipage}[t]{.49\linewidth}
\vspace{0pt}
{\small\begin{tabular}{@{}l@{}}
$\textbf{guide} \coloneqq \lambda\phi. (\haskdo~\{$ \\
$\quad\quad \alpha = \pi_1~\phi;$ \\
$\quad\quad \beta = \pi_2~\phi;$ \\
$\quad\quad \_\gets\sample~\textbf{beta}_\text{IMPLICIT}(\alpha, \beta)$~\text{"fairness"} \\
$\})$
\end{tabular}}
\end{minipage}
\end{mdframed}
\end{figure}

\noindent We train the guide using gradient estimators for the ELBO. We've summarized the results across the 3 systems in the table below, comparing the average time per epoch, the average of ELBO estimates on the last 100 training steps, and, taking the learned parameter values of the guide, the inferred posterior mean (given by $\mu = \frac{\alpha}{\alpha + \beta}$, where $\alpha$ and $\beta$ are the parameters to the beta sampler (above, $\pi_1~\phi$ and $\pi_2~\phi$)).

\begin{figure}[htb!]
\centering
\begin{tabular}{|l|l|l|l|}
\hline
System & \makecell{Average wall time \\ per epoch} & \makecell{Average of ELBO estimates \\ (last $100$ steps)} & Inferred posterior mean \\
\hline
\texttt{genjax.vi} & 0.97 (ms) & -7.06 & $0.529\pm0.09$ \\
NumPyro & 0.66 (ms) & -7.07 & $0.531\pm0.09$ \\
Pyro & 0.77 (ms) & -7.07 & $0.532\pm0.09$ \\
\hline
\end{tabular}
\end{figure}

\subsection{Bayesian linear regression}
We consider the linear regression model from \href{https://pyro.ai/examples/intro_long.html#Example:-mean-field-variational-approximation-for-Bayesian-linear-regression-in-Pyro}{Introduction to Pyro}: the goal is to fit a regression model for the GDP of a country given a Boolean value (whether or not the country is in Africa) and a $\mathbb{R}$ value (the Terrain Ruggedness Index measurement of the country). The model and guide programs can be expressed in $\lambda_\text{Gen}$:

\begin{figure}[htb]
\begin{mdframed}
\begin{minipage}[t]{.44\linewidth}
\vspace{0pt}
{\small\begin{tabular}{@{}l@{}}
$\textbf{model} \coloneqq \lambda(gdps,cafrica,ruggedness). (\haskdo~\{$ \\
$\quad\quad a\gets\textbf{sample}~\textbf{normal}(0.0,10.0)~\text{"a"};$ \\
$\quad\quad ba\gets\textbf{sample}~\textbf{normal}(0.0,1.0)~\text{"bA"};$ \\
$\quad\quad br\gets\textbf{sample}~\textbf{normal}(0.0,1.0)~\text{"bR"};$ \\
$\quad\quad bar\gets\textbf{sample}~\textbf{normal}(0.0,1.0)~\text{"bAR"};$ \\
$\quad\quad s\gets\textbf{sample}~\textbf{uniform}(0.0,10.0)~\text{"sigma"};$ \\
$\quad\quad m =(a + ba \times cafrica~+ $ \\
$\quad\quad\quad\quad\quad\quad br \times ruggedness~+$ \\ 
$\quad\quad\quad\quad\quad\quad bar \times cafrica \times ruggedness);$ \\
$\quad\quad \textbf{mapM}~(\lambda x.~\textbf{observe}~\textbf{normal}(m, s)~x)~gdps$ \\
$\})$
\end{tabular}}
\end{minipage}
\hfill
\begin{minipage}[t]{.525\linewidth}
\vspace{0pt}
{\small\begin{tabular}{@{}l@{}}
$\textbf{guide} \coloneqq \lambda(\phi, cafrica, ruggedness). (\haskdo~\{$ \\
$\quad\quad (al, as, wl0, ws0, wl0, ws1, wl1, ws2, wl2, sl) = \phi;$ \\
$\quad\quad a\gets\textbf{sample}~\textbf{normal}_\text{REPARAM}(al,as)~\text{"a"};$ \\
$\quad\quad ba\gets\textbf{sample}~\textbf{normal}_\text{REPARAM}(wl0,ws0)~\text{"bA"};$ \\
$\quad\quad br\gets\textbf{sample}~\textbf{normal}_\text{REPARAM}(wl1,ws1)~\text{"bR"};$ \\
$\quad\quad bar\gets\textbf{sample}~\textbf{normal}_\text{REPARAM}(wl2,ws2)~\text{"bAR"};$ \\
$\quad\quad s\gets\textbf{sample}~\textbf{normal}(sl,0.05)~\text{"sigma"};$ \\
$\})$
\end{tabular}}
\end{minipage}
\end{mdframed}
\end{figure}

\noindent We train the guide using $\textbf{ELBO}(\textbf{model}, \textbf{guide})$. In Fig.~\ref{fig:lin_reg_table}, we show the wall clock epoch times and final ELBO values for comparison across training using \texttt{genjax.vi}, Pyro, and NumPyro.

\begin{figure}[htb!]
\centering
\vspace{0.02in}
\begin{tabular}{|l|l|l|}
\hline
System & \makecell{Average wall time per epoch} & \makecell{Average of ELBO estimates (last $100$ steps)} \\
\hline
\texttt{genjax.vi} & 0.18 (ms) & -1.47 \\
NumPyro & 0.09 (ms) & -1.47 \\
Pyro & 3.73 (ms) & -1.46 \\
\hline
\end{tabular}
\caption{Wall clock times and average ELBO over last 100 training steps for Bayesian linear regression.}
\label{fig:lin_reg_table}
\end{figure}

\noindent The estimated mean and 90\% credible interval (CI) for pushforward of the learned guide by the regression function (the function used to compute $m$) in \texttt{genjax.vi} and Pyro is shown in Fig.~\ref{fig:bayes_lin_reg}. 

\begin{figure}[htb]
\centering
\hfill
\begin{minipage}{0.49\textwidth}
\includegraphics[width=1.0\linewidth]{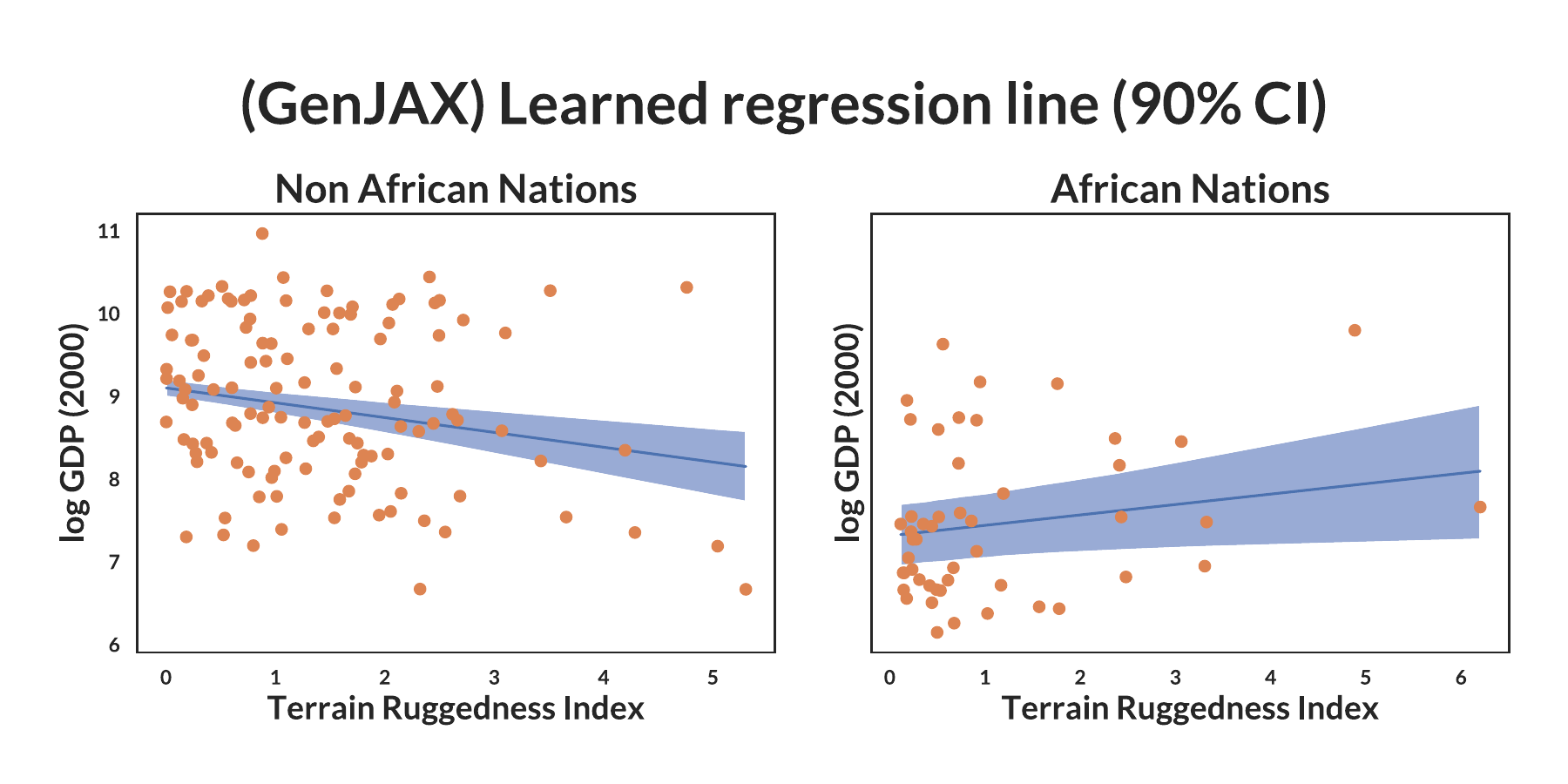}
\end{minipage}
\begin{minipage}{0.49\textwidth}
\includegraphics[width=1.0\linewidth]{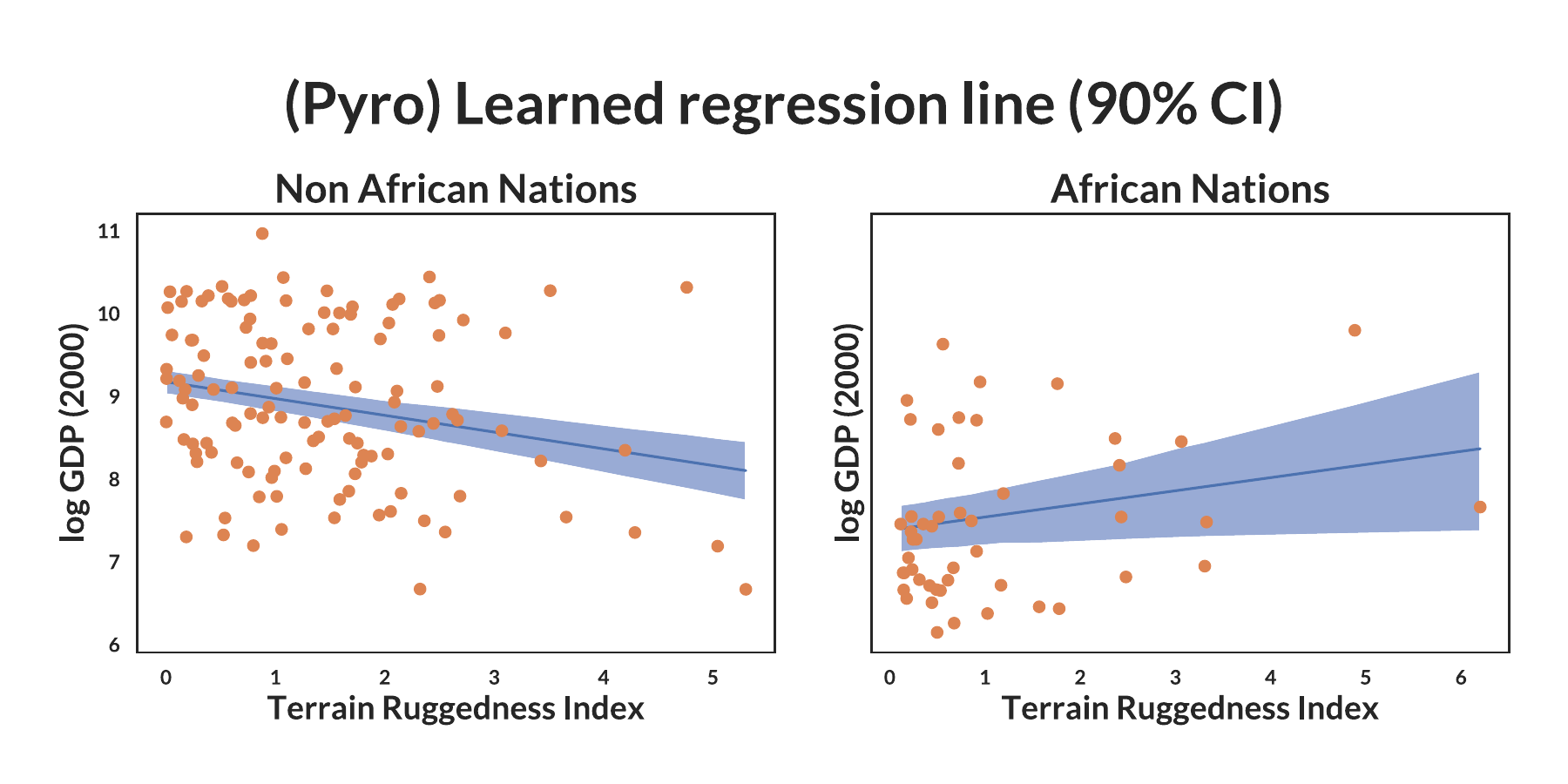}
\end{minipage}
\caption{Final regression fits from variational inference. The variable \texttt{cafrica} denotes whether a country is in Africa. The variable \texttt{ruggedness} denotes the value of the country on the Terrain Ruggedness Index. Shaded blue is the 90\% credible interval (CI) approximated by sampling (3200 samples).}
\label{fig:bayes_lin_reg}
\end{figure}

\subsection{Semi-supervised variational autoencoder}
We consider \href{https://pyro.ai/examples/ss-vae.html}{the semi-supervised variational autoencoder}~\cite{kingma_semi-supervised_2014}: a variational algorithm for settings where both unlabeled and labeled data are available. There are two models for this algorithm in $\lambda_\text{Gen}$, one for unsupervised training and one for supervised training:

\begin{figure}[htb]
\begin{mdframed}
{\small\begin{tabular}{@{}l@{}}
$\textbf{unsup\_model} \coloneqq \lambda(img,\theta).(\textbf{do}\{$ \\
$\quad\quad label\gets\textbf{sample}~\textbf{categorical}(\textbf{map}~(\lambda v.\frac{v}{10})~(\textbf{ones}~10))~\text{"label"};$ \\
$\quad\quad label_\text{oh} = \textbf{one\_hot}~label~10;$ \\
$\quad\quad latent\gets\textbf{sample}~\textbf{mv\_normal}((\textbf{zeros}~50),(\textbf{ones}~50))~\text{"latent"};$ \\
$\quad\quad probs = \textbf{decoder}~\theta~latent~label_\text{oh};$ \\
$\quad\quad \textbf{mapM}~(\lambda (p, px).~\textbf{observe}~(\textbf{bernoulli}(p)~px))~\mathbf{zip}(probs,img)$ \\
$\})$
\end{tabular}}
\end{mdframed}
\vspace{0.1in}
\begin{mdframed}
{\small\begin{tabular}{@{}l@{}}
$\textbf{sup\_model} \coloneqq \lambda(label,img,\theta).(\textbf{do}\{$ \\
$\quad\quad \textbf{observe}~\textbf{categorical}(\textbf{map}~(\lambda v.\frac{v}{10})~(\textbf{ones}~10))~label;$ \\
$\quad\quad label_\text{oh} = \textbf{one\_hot}~label~10;$ \\
$\quad\quad latent\gets\textbf{sample}~\textbf{mv\_normal}((\textbf{zeros}~50),(\textbf{ones}~50))~\text{"latent"};$ \\
$\quad\quad probs = \textbf{decoder}~\theta~latent~label_\text{oh};$ \\
$\quad\quad \textbf{mapM}~(\lambda (p, px).~\textbf{observe}~(\textbf{bernoulli}(p)~px))~\mathbf{zip}(probs,img)$ \\
$\})$
\end{tabular}}
\end{mdframed}
\label{fig:ss_models}
\caption{Models for semi-supervised learning. We make use of several helper functions: $\lambda v.\textbf{one\_hot}~v~N$ converts $\text{label}\in [0, N] \subset\mathbb{N}$ into a one-hot vector in $\mathbb{R}^N$. \textbf{decoder} represents a neural network: it accepts parameters $\theta$ and applies a network to the input $\texttt{latent}$ and $\texttt{label\_oh}$ concatenated into a single vector.}
\end{figure}

\noindent Corresponding to each model, there is also a guide:

\begin{figure}[htb]
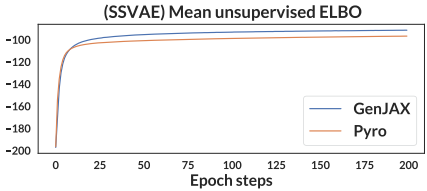

\begin{mdframed}
{\small\begin{tabular}{@{}l@{}}
$\textbf{unsup\_guide} \coloneqq \lambda(img, \phi).(\textbf{do}\{$ \\
$\quad\quad probs = \textbf{encoder}_1~(\pi_1~\phi)~img$ \\
$\quad\quad label\gets\textbf{sample}~\textbf{categorical}_\text{ENUM}(probs)~\text{"label"};$ \\
$\quad\quad label_\text{oh} = \textbf{one\_hot}~label~10;$ \\
$\quad\quad \phi_\text{latent} = \textbf{encoder}_2~(\pi_2~\phi)~label_\text{oh}~img;$ \\
$\quad\quad latent\gets\textbf{sample}~\textbf{mv\_normal}_\text{REPARAM}(\pi_1~\phi_\text{latent},\pi_2~\phi_\text{latent})~\text{"latent"};$ \\
$\})$
\end{tabular}}
\end{mdframed}
\vspace{0.1in}
\begin{mdframed}
{\small\begin{tabular}{@{}l@{}}
$\textbf{sup\_guide} \coloneqq \lambda(label, img, \phi).(\textbf{do}\{$ \\
$\quad\quad label_\text{oh} = \textbf{one\_hot}~label~10;$ \\
$\quad\quad \phi_\text{latent} = \textbf{encoder}_2~(\pi_2~\phi)~label_\text{oh}~img;$ \\
$\quad\quad latent\gets\textbf{sample}~\textbf{mv\_normal}_\text{REPARAM}(\pi_1~\phi_\text{latent},\pi_2~\phi_\text{latent})~\text{"latent"};$ \\
$\})$
\end{tabular}}
\end{mdframed}
\label{fig:ss_guides}
\caption{Guides for semi-supervised learning.}
\end{figure}

\noindent The objective for supervised training is $\textbf{ELBO}(\textbf{sup\_model}, \textbf{sup\_guide})$ and the objective for unsupervised training is
$\textbf{ELBO}(\textbf{unsup\_model},\textbf{unsup\_guide})$. In the semi-supervised setting, the assumption is that a small amount of labeled data is available: the idea is that mixing unsupervised learning with supervised learning will allow the learning process to better use the labeled data.

\vspace{0.05in}
\noindent In Fig.~\ref{fig:ssvae_loss}, we show the unsupervised loss (ELBO) curves for \texttt{genjax.vi} and Pyro. Both systems use the same hyperparameters, similar network architectures (\texttt{genjax.vi} uses SoftPlus activation instead of ReLU), and the same unsupervised / supervised training schedule (3000 supervised examples, with a round of supervised training every 16th batch). In Fig.~\ref{fig:ssvae}, we show conditional generation from the learned SSVAE networks.

\begin{figure}[htb!]
\begin{minipage}{0.43\textwidth}
\includegraphics[width=1.0\linewidth]{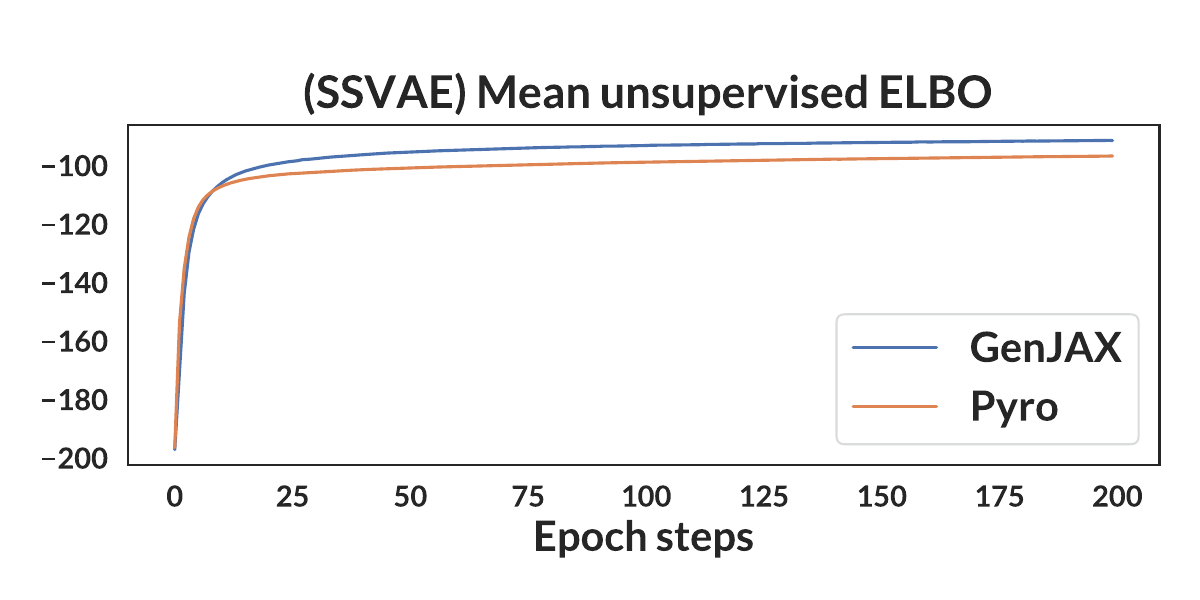}
\end{minipage}
\begin{minipage}{0.52\textwidth}
\centering
{\small\begin{tabular}{|l|l|l|}
\hline
System & \makecell{Average wall time \\ per epoch} & \makecell{Average ELBO \\ (last $100$ steps)} \\
\hline
\texttt{genjax.vi} & 0.41 (s) & -92.48 \\
Pyro & 3.98 (s) & -97.74 \\
\hline
\end{tabular}}
\end{minipage}
\caption{The mean (over batches) unsupervised loss (an ELBO objective) during training. \texttt{genjax.vi} and Pyro exhibit similar loss curves using the same gradient estimators for the ELBO.}
\label{fig:ssvae_loss}
\end{figure}

\begin{figure}[htb!]
\centering
\hfill
\begin{minipage}{0.48\textwidth}
\textbf{\texttt{genjax.vi}}
\vspace{0.02in}
\begin{mdframed}
\includegraphics[width=1.0\linewidth]{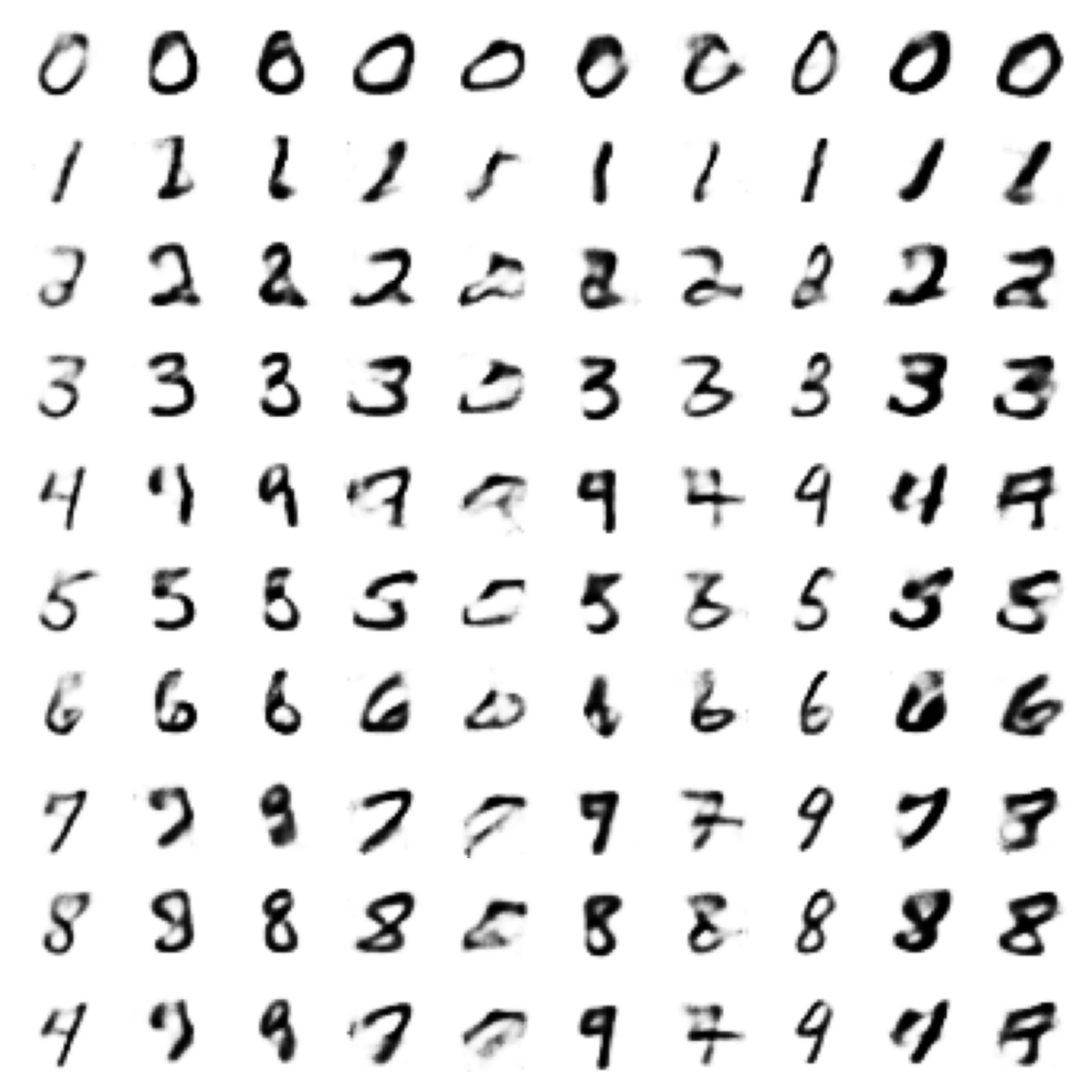}
\end{mdframed}
\end{minipage}
\begin{minipage}{0.48\textwidth}
\textbf{Pyro}
\vspace{0.02in}
\begin{mdframed}
\includegraphics[width=1.0\linewidth]{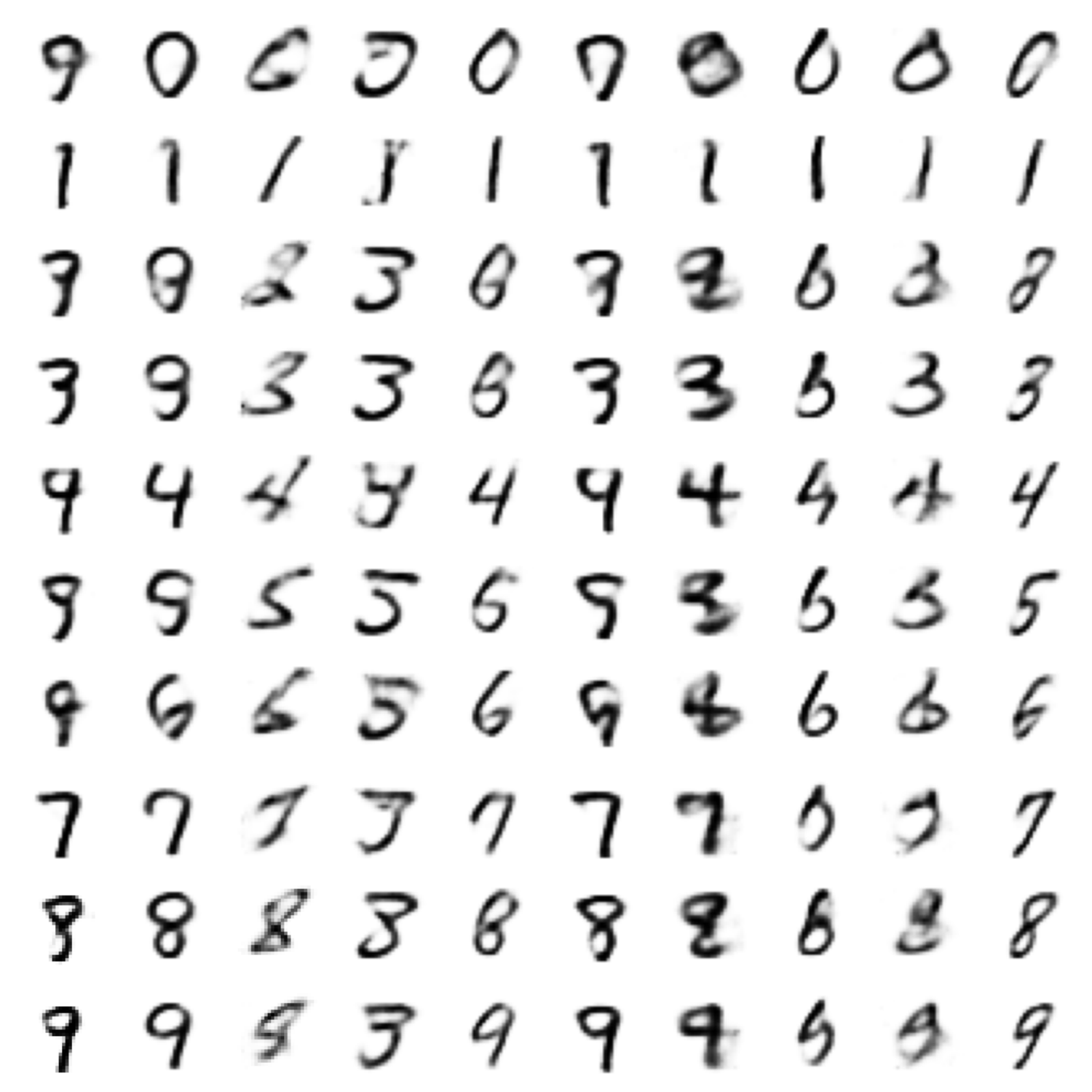}
\end{mdframed}
\end{minipage}
\caption{Conditional generation using trained SSVAE from (left) \texttt{genjax.vi}, and (right) Pyro. The resultant decoder models have similar visual fidelity. Samples were generated by fixing a random seed \textit{per column} - each sampled image in a column has the same \textit{style} (latent sample from multivariate Gaussian).}
\label{fig:ssvae}
\end{figure}

\subsection{Conditional variational autoencoder}

We consider \href{https://pyro.ai/examples/cvae.html}{the conditional variational autoencoder}~\cite{sohn_learning_2015}. The learning task: given an observation of a quadrant of an MNIST image, fill in the other quadrants. The CVAE architecture consists of two components:
\begin{itemize}
    \item A deterministic baseline network which looks at the observed quadrant and attempts to fill in the other quadrants. This network is trained using binary cross entropy loss applied pixel wise across the quadrants which are not conditioned on.
    \item A variational autoencoder, which is trained on the entire image. The architecture uses the trained baseline network as part of the model, to provide an initial guess for the rest of the image, which is then processed by the decoder networks.
\end{itemize}

\begin{figure}[htb!]
\centering
\begin{minipage}{0.48\textwidth}
\textbf{\texttt{genjax.vi}}
\begin{mdframed}
\includegraphics[width=1.0\linewidth]{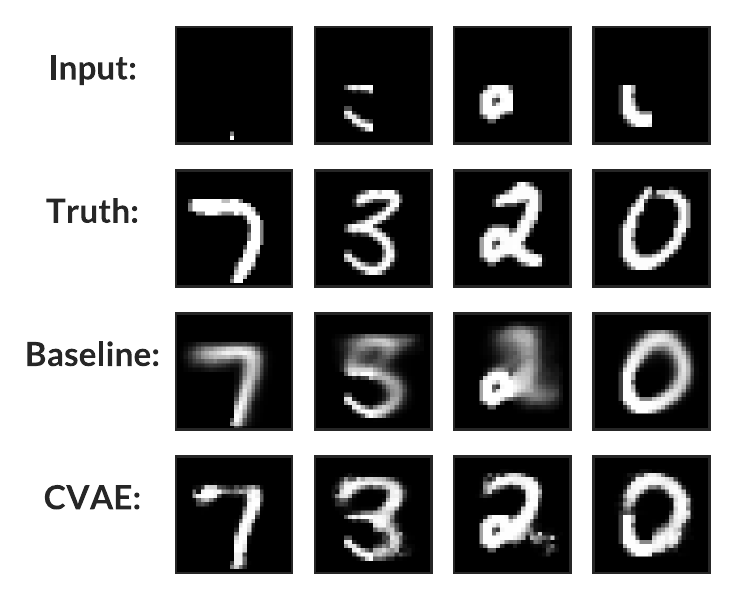}
\end{mdframed}
\end{minipage}
\begin{minipage}{0.48\textwidth}
\textbf{NumPyro}
\begin{mdframed}
\includegraphics[width=1.0\linewidth]{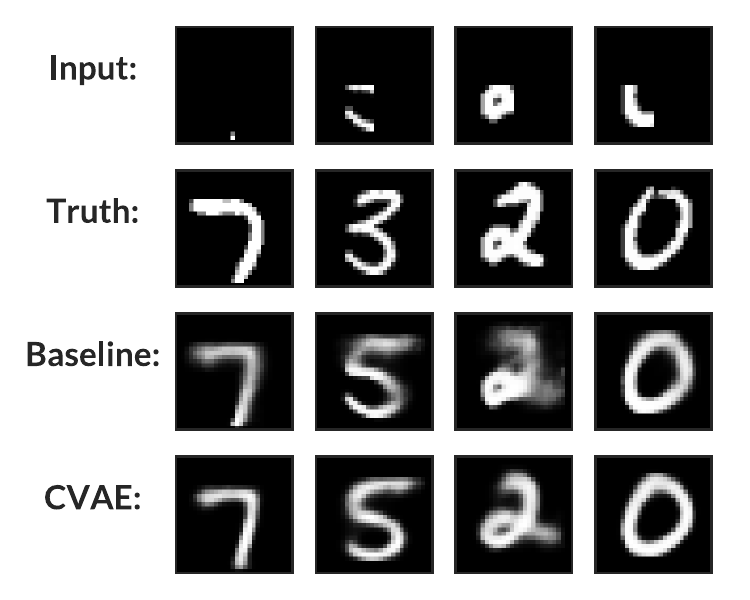}
\end{mdframed}
\end{minipage}
\caption{Conditional generation using the CVAE networks learned by \texttt{genjax.vi} and NumPyro. The \textit{input} is the lower left quadrant of an MNIST digit image: the CVAE attempts to fill in the other quadrants. Both \texttt{genjax.vi} and NumPyro learn approximations with similar reconstruction accuracy.}
\label{fig:cvae}
\end{figure}
\begin{figure}[htb!]
{\small\begin{tabular}{|l|l|}%l|}
\hline
System & \makecell{Average wall time per epoch} \\%& \makecell{Last ELBO estimate} \\
\hline
\texttt{genjax.vi} & 0.36 (s) \\%& -77.13 \\
Pyro & 5.73 (s) \\%& -72.76 \\
NumPyro & 1.36 (s) \\%& \textcolor{purple}{\textbf{$\times$}} \\
\hline
\end{tabular}}
\caption{Wall clock time comparisons for the conditional VAE.} %and final ELBO estimate comparisons.} %We omit NumPyro's final ELBO result: the value was not comparable (we were unable to understand the scaling applied by e.g. the batch dimension to report values similar to the ones reported by Pyro and \texttt{genjax.vi}).}
\label{fig:cvae_table}
\end{figure}

\newpage
\section{Proof Details}
\label{appx:proof-details}

This section provides details on the proofs of several results stated in the main paper.

\subsection{Definition of helper functions \textit{pop} and \textit{split}}

Our semantics is defined in part using helper functions, $\textit{pop}$ and $\textit{split}_\mu$, that we define here.\\
\vspace{-1mm}

\noindent\textbf{Definition of $\textit{pop}$.} For each base type $\sigma$, we define $\textit{pop}_\sigma : \mathbb{T} \times \str \to \sigma \times \RR \times \mathbb{T}$ to return, on input $(u, k)$, the tuple $(x, 1, u')$ when $u = u' \mdoubleplus \{k \mapsto x\}$ for some $x \in \sem{\sigma}$, and the tuple $(\text{default}_\sigma, 0, \{\})$ when $u$ does not map the key $k$ to a value of type $\sigma$. When clear from context, we omit the type $\sigma$.\\\vspace{-1mm}

\noindent\textbf{Definition of $\textit{split}$.} Suppose $\mu$ is a discrete-structured measure on $\mathbb{T}$, absolutely continuous with respect to $\mathcal{B}_\mathbb{T}$. Define $\mathbb{T}_\mu \subseteq \mathbb{T}$ to be the support of $\mu$. For a given trace $u \in \mathbb{T}$, define $\text{subtraces}(u) := \{u' \in \mathbb{T} \mid \exists u'' \in \mathbb{T}. u' \mdoubleplus u'' = u\}$ to be the set of $u$'s subtraces. Note that for all $u \in \mathbb{T}$, the intersection $\mathbb{T}_\mu \cap \text{subtraces}(u)$ is either empty, or a singleton set $\{u^*\}$. This is because, if there were two distinct traces $u_1^*$ and $u_2^*$ in the intersection, then by the discrete-structured condition, they would need to disagree on the value of some choice $k$ present in both $u_1^*$ and $u_2^*$. But this is not possible, since both traces are subtraces of $u$, and so agree with $u$ (and therefore with each other) for every choice $k$ that they share in common.

The function $\textit{split}_\mu(u)$ is defined to equal $(u, \{\})$ when $\mathbb{T}_\mu \cap \text{subtraces}(u)$ is empty, and $(u^*, u - u^*)$ when $\mathbb{T}_\mu \cap \text{subtraces}(u)$ is non-empty. (Here, $u - u^*$ is the subtrace of $u$ that has only keys $k$ which do not appear in $u^*$.)

\subsection{Well-definedness of the semantics}

Our semantics of $\lambda_{\text{Gen}}$ (Fig.~\ref{fig:syntax}) interprets programs of type $G~\tau$ as elements of $\text{Meas}_{\ll \mathcal{B}_\mathbb{T}}^{DS}~\mathbb{T}$, a subspace of particularly well-behaved measures on $\mathbb{T}$. We must therefore prove that our term semantics actually does assign to every term of type $G~\tau$ a well-behaved measure over traces.

\begin{lemma}
Let $\Gamma \vdash t : \tau$ be a $\lambda_\text{Gen}$ term, and let $\gamma \in \sem{\Gamma}$. Then $\sem{t}(\gamma)$, as defined inductively in Figure~\ref{fig:syntax}, lies in $\sem{\tau}$. In particular, if $\tau = G~\tau'$ for some type $\tau'$, then $\sem{t}_1(\gamma)$ is discrete-structured and absolutely continuous with respect to $\mathcal{B}_\mathbb{T}$.
\end{lemma}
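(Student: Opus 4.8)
The plan is to proceed by structural induction on the typing derivation of $\Gamma \vdash t : \tau$, proving the stronger statement that $\sem{t}(\gamma) \in \sem{\tau}$ for every $\gamma \in \sem{\Gamma}$. For the purely functional constructs inherited from the shared core ($\var$, $\lambda x.t$, application, pairing, projection, $\mathbf{if}$, constants), there is nothing measure-theoretic to check beyond well-definedness in the category of quasi-Borel spaces: the denotations are built from the cartesian-closed structure and from the quasi-Borel maps denoting primitives, so $\sem{t}(\gamma)$ lands in $\sem{\tau}$ by the usual functoriality argument. Thus the only real content is in the four generative constructs, where $\tau = G~\tau'$ and I must verify that $\sem{t}_1(\gamma)$ (i) is absolutely continuous with respect to $\mathcal{B}_\mathbb{T}$ and (ii) is discrete-structured, while $\sem{t}_2(\gamma)$ is a quasi-Borel map $\mathbb{T} \Rightarrow \sem{\tau'}$ (which follows routinely from the inductive hypothesis).

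For the base cases this is immediate. For $\return~t$ and $\observe~t_1~t_2$, the trace measure is a (possibly scaled) Dirac delta at the empty trace $\{\}$; since $\{\}$ is an atom of $\mathcal{B}_\mathbb{T}$ carrying unit mass (the empty trace shape contributes the one-point empty product measure), this measure is absolutely continuous, and being concentrated on a single trace it is trivially discrete-structured. For $\sample~t_1~t_2$, the inductive hypothesis gives $\sem{t_1}(\gamma) \in \text{Prob}_{\ll \mathcal{B}_\sigma}\sem{\sigma}$, and the denotation is its pushforward along $v \mapsto \{\sem{t_2}(\gamma) \mapsto v\}$; this lives entirely on the single-name shape $\{k \mapsto \sigma\}$ with $k = \sem{t_2}(\gamma)$, on which $\mathcal{B}_\mathbb{T}$ restricts to $\mathcal{B}_\sigma$, so absolute continuity transfers directly. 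Discrete-structure holds because any two traces in the support share the single name $k$, and hence either agree entirely or disagree at $k$.

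The crux is the sequencing case $\haskdo\{x \gets t; m\}$, whose denotation integrates $u_1 \sim \sem{t}_1(\gamma)$ and $u_2 \sim \sem{\haskdo\{m\}}_1(\gamma')$ with $\gamma' = \gamma[x \mapsto \sem{t}_2(\gamma, u_1)]$, returning $u_1 \mdoubleplus u_2$ weighted by $\textit{disj}(u_1, u_2)$. For absolute continuity, I would view the inner measure as a measurable kernel in $u_1$ (measurability of the family follows from the inductive hypothesis applied to each $\gamma'$, using that $u_1 \mapsto \gamma'$ is measurable), observe that on the disjoint pairs retained by $\textit{disj}$ the concatenation map sends the product shape $s_1 \sqcup s_2$ to base measure $\mathcal{B}_{s_1} \otimes \mathcal{B}_{s_2}$, and conclude via a Fubini/Radon--Nikodym argument that a $\mathcal{B}_\mathbb{T}$-null set pulls back to a null set under the integral, yielding $\ll \mathcal{B}_\mathbb{T}$ (with density equal to the expected product of the two sub-densities). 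For discrete-structure, I would argue on $(\mu\otimes\mu)$-almost-all pairs $(u_1 \mdoubleplus u_2,\, u_1' \mdoubleplus u_2')$ by cases: if $u_1 \neq u_1'$, the discrete-structure of $\sem{t}_1(\gamma)$ supplies (almost surely) a shared name on which they disagree, and this name survives into both concatenations because $\mdoubleplus$ only adds keys; if $u_1 = u_1'$ then $\gamma'$ coincides, so $u_2$ and $u_2'$ are drawn from the same measure $\sem{\haskdo\{m\}}_1(\gamma')$, and the discrete-structure of that measure finishes the case.

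I expect the main obstacle to be making the sequencing case measure-theoretically rigorous rather than merely plausible: specifically, (a) establishing that $u_1 \mapsto \sem{\haskdo\{m\}}_1(\gamma')$ is a genuine measurable family of measures with jointly measurable densities, so that the Fubini interchange and the Radon--Nikodym factorization are licensed in the quasi-Borel setting; and (b) discharging the ``almost-all pairs'' bookkeeping in the discrete-structure argument --- in particular verifying that the disagreement-name sets are measurable and that the null sets arising from the two inductive invocations assemble into a single $(\mu \otimes \mu)$-null set. The remaining case $\haskdo\{t\}$ is trivial, since its denotation is literally $\sem{t}$.
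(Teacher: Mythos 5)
Your proposal is correct and follows essentially the same route as the paper's proof: induction on the typing derivation, dispatching the shared-core constructs as routine, treating $\return$/$\observe$ as (scaled) Diracs at the empty trace, $\sample$ as a pushforward onto a single-name shape, and handling sequencing by a product-density argument for absolute continuity together with the same two-case analysis ($u_1 \neq u_1'$ versus $u_1 = u_1'$ with identical extended environments) for discrete structure. The measurability and null-set concerns you flag for the sequencing case are real but are glossed over at the same level of detail in the paper's own proof; the only minor imprecision is your phrase ``expected product of the two sub-densities''\,---\,discrete structure of $\sem{t}_1(\gamma)$ makes the decomposition $u = u_1 \mdoubleplus u_2$ with $u_1$ in the support essentially unique, so the density is a plain product evaluated at $\textit{split}_\mu(u)$, with no averaging over decompositions.
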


\begin{proof}
    The proof is by induction on the derivation of $\Gamma \vdash t : \tau$. The only non-trivial cases are when $t$ builds a new probabilistic program of type $\tau = G~\tau'$, and the only non-trivial conditions to prove are discrete structure and absolute continuity of trace distribution. 
    
    We consider these non-trivial cases in turn:

    \begin{itemize}[leftmargin=*]
    \item $t = \mathbf{return}~t'$: in this case, $\sem{t}_1(\gamma)$ is the Dirac measure on the empty trace, $\delta_{\{\}}$. This measure is clearly absolutely continuous with respect to $\mathcal{B}_\mathbb{T}$ (it has density $1$ at the point $\{\}$, and $0$ elsewhere), and is also clearly discrete-structured (vacuously, as there are no two distinct traces to which it assigns positive density).

    \item $t = \textbf{observe}~t_1~t_2$: in this case, $\sem{t}_1(\gamma)$ is a rescaled Dirac measure, $w \odot \delta_{\{\}}$. By the same argument as above, it is absolutely continuous with respect to $\mathcal{B}_\mathbb{T}$ (its density is $w$ at $\{\}$ and $0$ elsewhere), and discrete-structured (again, there are no two distinct traces to which it assigns positive density).

    \item $t = \textbf{sample}~t_1~t_2$: in this case, $\sem{t}_1(\gamma)$ is the pushforward of a measure $\mu = \sem{t_1}(\gamma)$ on $\tau'$ by the map $\lambda x. \{\sem{t_2}(\gamma) \mapsto x\}$. Because $\mu$ is (by the inductive hypothesis) absolutely continuous with respect to the base measure on $\tau'$, our new measure $\sem{t}_1(\gamma)$ is absolutely continuous with respect to $\mathcal{B}_\mathbb{T}$, with density $\frac{d\mu}{d\mathcal{B}_{\tau'}}(x)$ for traces of the form $\{\sem{t_2}(\gamma) \mapsto x\}$, and density 0 for all other traces. It is also discrete-structured, because any two traces with positive density must have the form $\{\sem{t_2}(\gamma) \mapsto x\}$ for distinct values of $x$\textemdash so $\sem{t_2}(\gamma)$ is an address on which they are guaranteed to disagree.

    \item $t = \haskdo \{x \gets t_1; m\}$: by induction, $\sem{t_1}_1(\gamma)$ is absolutely continuous with respect to $\mathcal{B}_\mathbb{T}$, and so is $\sem{\haskdo \{m\}}_1(\gamma[x \mapsto v])$, for all $v \in \sem{\tau_1}$ (where $G~\tau_1$ is the type of $t_1$). Thus, the joint distribution is also absolutely continuous with respect to $\mathcal{B}_\mathbb{T}$, with density equal to the product of the component densities. For discrete structure, let $u$ and $u'$ be distinct traces in the support of $\sem{t}_1(\gamma)$. Let $(u_1, u_2) = \text{split}_{\sem{t_1}_1(\gamma)}(u)$, and likewise let $(u_1', u_2') = \text{split}_{\sem{t_1}_1(\gamma)}(u')$. If $u_1 \neq u_1'$, then since they are both in the support of $\sem{t_1}_1(\gamma)$, by induction they must share an address to which they assign different values, an address on which $u$ and $u'$ would therefore also disagree. If $u_1 = u_1'$ and $u_2 \neq u_2'$, then since both $u_2$ and $u_2'$ are in the support of $\sem{\haskdo \{m\}}(\gamma[x \mapsto \sem{t_1}_2(\gamma)(u_1)])$ (which is discrete-structured by induction), there must be a shared address on which they disagree. This would again constitute a shared address on which $u$ and $u'$ disagree.
    \end{itemize}
\end{proof}

\subsection{Proof details for Lemma~\ref{lem:fundamental-xi}}
\begin{proof}
Lemma~\ref{lem:fundamental-xi} is an instance of the fundamental lemma of logical relations, for the particular logical relations $\mathcal{R}^\xi$. 

We consider a term $\Gamma \vdash t : \tau^*$ and an arbitrary pair of environments $(\gamma, \gamma') \in \mathcal{R}_\Gamma^\xi$. Note that when $t = x$, $t = (t_1, t_2)$, $t = \pi_i~t'$, $t = \lambda x. t'$, $t = t_1~t_2$, or $t = \mathbf{if}~t_1~\mathbf{then}~t_1~\mathbf{else}~t_2$, we can apply the standard logical relations arguments, because our semantics and logical relations for product types, function types, Booleans, and variables are completely standard, and the macro $\xi$ treats these constructs functorially (e.g. $\xi\{(t_1, t_2)\}=(\xi\{t_1\}, \xi\{t_2\})$). We establish the remaining cases in turn.

\begin{itemize}[leftmargin=*]
\item If $\tau^*$ is some base type $\sigma$ and $t = c$ (a constant of type $\sigma$), then $\xi\{t\} = t$ and it follows immediately that $(\sem{t}(\gamma), \sem{\xi\{t\}}(\gamma')) \in \mathcal{R}_\sigma^\xi$ (which is just the equality relation).

\item If $t = \mathbf{normal}_\text{REPARAM}$, $t = \mathbf{normal}_\text{REINFORCE}$, or $t = \mathbf{flip}_\text{MVD}$, then no matter what $\gamma$ and $\gamma'$ are, $\sem{t}(\gamma)$ is a primitive probability kernel (e.g., the normal distribution or the Bernoulli distribution), and $\sem{\xi\{t\}}(\gamma')$ is a primitive density function. We discharge this case by verifying that the \textit{primitive} density functions are correctly implemented.

\item The key cases of interest are the terms of type $\tau^* = G~\tau$. For each, we need to establish several key properties of $\xi\{t\}$, outlined in the definition of $\mathcal{R}_{G~\tau}^\xi$. In each of the cases below, let $(\mu, f) = \sem{t}(\gamma)$, let $u$ be a trace, and let $(u_1, u_2) = \textit{split}_{\mu}(u)$. Further, let $g = \sem{\xi\{t\}}(\gamma')$. We write $g_i$ (for $i$ in $\{1, 2, 3\}$) as shorthand for $\pi_i \circ g$. Our job is to show that in each case below, $(f(u), g_1(u)) \in R_{\tau}^\xi$, that $g_2(u) = \frac{d\mu}{d\mathcal{B}_\mathbb{T}}(u_1)$, and that $g_3(u) = u_2$.

\begin{itemize}[leftmargin=*]
\item $t = \mathbf{return}~t'$: From the semantics of $\lambda_\text{Gen}$, we have $\mu = \delta_{\{\}}$ and $f = \lambda \_. \sem{t'}(\gamma)$. Then for any $u$, we have that $(u_1, u_2) = (\{\}, u)$. We now establish each requirement:
\begin{enumerate}[leftmargin=*]
    \item From the definition of $\xi\{\mathbf{return}~t'\}$, we have $g_1(u) = \sem{\xi\{t'\}}(\gamma')$, so by the inductive hypothesis for $\Gamma \vdash t' : \tau'$, $(f(u), g_1(u)) \in R_{\tau}^\xi$. 
    \item Also from the definition of $\xi\{\mathbf{return}~t'\}$, we have $g_2(u) = 1$ for all $u$, which, as required, is equal to $\frac{d\mu}{d\mathcal{B}_\mathbb{T}}(u_1)$ (since $u_1 = \{\}$ and $\mu = \delta_{\{\}}$). 
    \item Finally, we can read off the definition of $\xi\{\mathbf{return}~t'\}$ that $g_3(u) = u$ for all $u$, which is equal to $u_2$, as required.
\end{enumerate}
\item $t = \mathbf{observe}~t_1~t_2$: Write $\sigma$ for the type of $t_2$, and let $m = \sem{t_1}(\gamma)$ and $v = \sem{t_2}(\gamma)$. By induction, $\sem{\xi\{t_1\}\}}(\gamma') = \frac{dm}{d\mathcal{B}_\sigma}$, and $\sem{\xi\{t_2\}}(\gamma')=v$. From the semantics, we see that $\mu = \frac{dm}{d\mathcal{B}_\sigma}(v) \odot \delta_{\{\}}$, and that $f = \lambda \_. ()$. As in the previous case, $(u_1, u_2) = (\{\}, u)$.
\begin{enumerate}[leftmargin=*]
    \item We have $g_1(u) = ()$, and so $(f(u), g_1(u)) = ((), ()) \in R_1^\xi$.
    \item We have $g_2(u) = \sem{\xi\{t_1\}}(\gamma')(\sem{\xi\{t_2\}}(\gamma')) = \frac{dm}{d\mathcal{B}_\sigma}(v)$. But this is precisely the density of $\mu$ with respect to $\mathcal{B}_\mathbb{T}$ at the point $u_1 = \{\}$, because $\mu = \frac{dm}{d\mathcal{B}_\sigma}(v) \odot \delta_{\{\}}$.
    \item Finally, we have $g_3(u) = u$, which is equal to $u_2$, as required.
\end{enumerate}

    \item $t = \mathbf{sample}~t_1~t_2$: In this case $\tau$ is some base type $\sigma$. Let $m = \sem{t_1}(\gamma)$ and $k = \sem{t_2}(\gamma)$, and note that by induction, $\sem{\xi\{t_1\}}(\gamma') = \frac{dm}{d\mathcal{B}_\sigma}$ and $\sem{\xi\{t_2\}}(\gamma') = k$. From the semantics, we have that $\mu$ is the pushforward of $m$ by $v \mapsto \{k \mapsto v\}$. We consider two cases. First, suppose the trace $u$ does not have key $k$. Then $f(u)$ is equal to a default value of type $\sigma$, and $(u_1, u_2) = (u, \{\})$. In this case, $\textit{pop}~u~k$ returns the default value of type $\sigma$, the weight $0$, and the remainder $\{\}$, and we reason as follows:
    \begin{enumerate}
        \item $g_1(u)$ is the default value of type $\sigma$, which is equal to $f(u)$, so $(f(u), g_1(u)) \in R_\sigma^\xi$. 
        \item $g_2(u)$ is 0, which is the density under $\mu$ of $u_1 = u$, because $u_1$ does not have key $k$, but $\mu$ assigns positive density only to traces that do have key $k$.
        \item $g_3(u) = \{\} = u_2$, as required.
    \end{enumerate}
    Alternatively, suppose the trace $u$ does have key $k$, assigned to value $v$. In this case, $f(u) = v$, and $(u_1, u_2) = (\{k \mapsto v\}, u - \{k \mapsto v\})$, and the result of $\textit{pop}~u~k$ is $(v, 1, u_2)$. In this case:
    \begin{enumerate}
        \item $g_1(u) = v = f(u)$, and so $(f(u), g_1(u)) \in R_\sigma^\xi$.
        \item $g_2(u) = \sem{\xi\{t_1\}}(\gamma')(v) =  \frac{dm}{d\mathcal{B}_\sigma}(v) =  \frac{d\mu}{d\mathcal{B}_\mathbb{T}}(\{k \mapsto v\}) = \frac{d\mu}{d\mathcal{B}_\mathbb{T}}(u_1)$, as required.
        \item $g_3(u) = u - \{k \mapsto v\} = u_2$, as required.
    \end{enumerate}

    \item $t = \haskdo \{x \gets t'; m\}$: Let $\tau'$ be the type of the intermediate result, i.e., the type such that $t'$ has type $G~\tau'$. Let $(\mu', f') = \sem{t'}(\gamma)$, let $k = \lambda x. \sem{\haskdo\{m\}}_1(\gamma)$, and let $f_k(v) = \sem{\haskdo\{m\}}_2(\gamma[x \mapsto v])$. Then $\mu(U) = \iint \textit{disj}(r_1, r_2) \cdot \delta_{r_1 \mdoubleplus r_2}(U) \mu'(dr_1) k(f'(r_1), dr_2)$. %, and $f(u) = f_k(f'(u))(u')$.

    We now consider several possible scenarios. 
    
    \begin{itemize}[leftmargin=*]
        \item First, suppose no subtrace of $u$ is in the support of $\mu'$, so that $(u_1, u_2) = (u, \{\})$. Then by induction, $\sem{\xi\{t'\}}(\gamma')(u)$ returns $(v, 0, \{\})$ for some value $v$ such that $(f'(u), v) \in R_{\tau'}^\xi$. Because $(f'(u), v) \in R_{\tau'}^\xi$, we have that $(\gamma[x \mapsto f'(u)], \gamma'[x \mapsto v]) \in R_{\Gamma, x : \tau'}^\xi$. We can then apply the inductive hypothesis for $\Gamma, x : \tau' \vdash \haskdo \{m\} : G~\tau$, with these two environments, to obtain that $\sem{\xi\{\haskdo\{m\}\}}(\gamma'[x \mapsto v])(\{\}) = (y, w, \{\})$, for some weight $w$ and some value $y$ such that $(f_k(f'(u))(\{\}), y) \in R_\tau^\xi$.  We can then read off the definition of $\xi\{\haskdo\{x \gets t'; m\}\}$ that $g_1(u) = y$, $g_2(u) = 0 \cdot w = 0$, and $g_3(u) = \{\}$. Now consider each requirement:
        \begin{enumerate}[leftmargin=*]
            \item From the semantics of $\haskdo$, we have that $f(u) = f_k(f'(u))(\{\})$. We also have from above that $g_1(u) = y$, which is related at $R_\tau^\xi$ to $f_k(f'(u))(\{\})$, as required.
            \item We have $g_2(u) = 0$, which is (as required) the density of $\mu$ at $u$ (since $u$ is not in $\mu$'s support).
            \item We have $g_3(u) = \{\} = u_2$, as required.
        \end{enumerate}

        \item Second, suppose there is a subtrace in the support of $\mu'$, but that, writing $(u'_1, u'_2) = \textit{split}_{\mu'}(u)$, there is no subtrace of $u'_2$ in the support of $k(f'(u))$. Then $(u_1, u_2) = (u, \{\})$ once again. By induction, $\sem{\xi\{t'\}}(\gamma')(u) = (v, w, u'_2)$, for some $v$ such that $(f'(u), v) \in R_{\tau'}^\xi$ and some $w > 0$. Now, since no subtrace of $u'_2$ is in the support of $k(f'(u))$, $\textit{split}_{k(f(u))}(u'_2) = (u'_2, \{\})$. Also, since $(f'(u), v) \in R_{\tau'}^\xi$, we can construct extended environments $(\gamma[x \mapsto f'(u)], \gamma'[x \mapsto v]) \in R_{\Gamma, x : \tau'}^\xi$. Then, by the inductive hypothesis for $\Gamma, x : \tau' \vdash \haskdo \{m\} : G~\tau$, we have that $\sem{\xi\{\haskdo\{m\}\}}(\gamma'[x \mapsto v])(u'_2) = (y, 0, \{\})$ such that $(f_k(f'(u))(u'_2), y) \in R_\tau^\xi$. Then $g_1(u) = y$, $g_2(u) = w \cdot 0 = 0$, and $g_3(u) = \{\}$. We can again consider each requirement, using the same arguments as in the first case:
        \begin{enumerate}[leftmargin=*]
            \item From the semantics of $\haskdo$, we have that $f(u) = f_k(f'(u))(u'_2)$. We also have from above that $g_1(u) = y$, which is related at $R_\tau^\xi$ to $f_k(f'(u))(\{\})$, as required.
            \item We have $g_2(u) = 0$, which is (as required) the density of $\mu$ at $u$ (since $u$ is not in $\mu$'s support).
            \item We have $g_3(u) = \{\} = u_2$, as required.
        \end{enumerate}

        \item Finally, suppose that $u$ has a subtrace in the support of $\mu$ (in which case that subtrace must be the one computed by $\textit{split}_\mu$, namely $u_1$). For this to be the case, $u_1$ must decompose as $u'_1 \mdoubleplus u'_2$, where $(u'_1, u'_2) = \textit{split}_{\mu'}(u_1)$. Furthermore, we have that $\textit{split}_{\mu'}(u) = (u'_1, u'_2 \mdoubleplus u_2)$. By induction on $\Gamma \vdash t' : G~\tau'$, we have that $\sem{\xi\{t'\}}(\gamma')(u) = (v, \frac{d\mu'}{d\mathcal{B}_\mathbb{T}}(u'_1), u'_2 \mdoubleplus u_2)$ for some $v$ satisfying $(f'(u), v) \in R_{\tau'}^\xi$.  We can construct extended environments $(\gamma[x \mapsto f'(u)], \gamma'[x \mapsto v]) \in R_{\Gamma, x : \tau'}^\xi$, and apply the inductive hypothesis for $\Gamma, x : \tau' \vdash \haskdo \{m\} : G~\tau$, to obtain that $\sem{\xi\{\haskdo\{m\}\}}(\gamma'[x \mapsto v])(u'_2 \mdoubleplus u_2) = (y, \frac{dk(f'(u))}{d\mathcal{B}_\mathbb{T}}(u'_2), u_2)$ for some $y$ such that $(f_k(f'(u))(u'_2 \mdoubleplus u_2), y) \in R_\tau^\xi$. Then $g_1(u) = y$, $g_2(u) = \frac{d\mu'}{d\mathcal{B}_\mathbb{T}}(u'_1) \cdot \frac{dk(f'(u))}{d\mathcal{B}_\mathbb{T}}(u'_2)$, and $g_3(u) = u_2$. We now verify the requirements:
        \begin{enumerate}[leftmargin=*]
            \item From the semantics of $\haskdo$, we have that $f(u) = f_k(f'(u))(u'_2 \mdoubleplus u_2)$. We also have from above that $g_1(u) = y$, which is related at $R_\tau^\xi$ to $f_k(f'(u))(u'_2 \mdoubleplus u_2)$, as required.
            \item We have $g_2(u) = \frac{d\mu'}{d\mathcal{B}_\mathbb{T}}(u'_1) \cdot \frac{dk(f'(u))}{d\mathcal{B}_\mathbb{T}}(u'_2) = \frac{d\mu}{d\mathcal{B}_\mathbb{T}}(u'_1 \mdoubleplus u'_2)$, which is (as required) the density of $\mu$ at $u_1 = u'_1 \mdoubleplus u'_2$.
            \item We have $g_3(u) = u_2$, as required.
        \end{enumerate}
    \end{itemize} 
\end{itemize}

\end{itemize}
\end{proof}

\subsection{Proof details for Lemma~\ref{lem:fundamental-chi}}
\begin{proof}
Lemma~\ref{lem:fundamental-chi} is an instance of the fundamental lemma of logical relations, for the particular logical relations $\mathcal{R}^\chi$. 

We consider a term $\Gamma \vdash t : \tau^*$ and an arbitrary pair of environments $(\gamma, \gamma') \in \mathcal{R}_\Gamma^\chi$. Note that when $t = x$, $t = (t_1, t_2)$, $t = \pi_i~t'$, $t = \lambda x. t'$, $t = t_1~t_2$, or $t = \mathbf{if}~t_1~\mathbf{then}~t_1~\mathbf{else}~t_2$, we can again apply the standard logical relations arguments, because our semantics and logical relations for product types, function types, Booleans, and variables are completely standard, and the macro $\chi$ treats these constructs functorially (e.g. $\chi\{(t_1, t_2)\}=(\chi\{t_1\}, \chi\{t_2\})$). We establish the remaining cases in turn.

\begin{itemize}[leftmargin=*]
\item If $\tau^*$ is some base type $\sigma$ and $t = c$ (a constant of type $\sigma$), then $\xi\{t\} = t$ and it follows immediately that $(\sem{t}(\gamma), \sem{\chi\{t\}}(\gamma')) \in \mathcal{R}_\sigma^\chi$ (which is just the equality relation).

\item If $t = \mathbf{normal}_\text{REPARAM}$, $t = \mathbf{normal}_\text{REINFORCE}$, or $t = \mathbf{flip}_\text{MVD}$, then no matter what $\gamma$ and $\gamma'$ are, $\sem{t}(\gamma)$ is a primitive probability kernel (e.g., the normal distribution or the Bernoulli distribution), and $\sem{\chi\{t\}}(\gamma')$ has two components: a primitive simulator for the distribution, simulating an outcome and evaluating its density, and a deterministic density function. We discharge this case by verifying that the \textit{primitive} simulators generate from the correct primitive distributions and evaluate the corresponding primitive densities.

\item The key cases of interest are the terms of type $\tau^* = G~\tau$. For each, we need to establish several key properties of $\chi\{t\}$, outlined in the definition of $\mathcal{R}_{G~\tau}^\chi$. In each of the cases below, let $(\mu, f) = \sem{t}(\gamma)$, let $\nu = \sem{\chi\{t\}}(\gamma')$, and let $u$ be a trace in the support of $\mu$. In each case below, we first establish that $\nu$ is the pushforward of $\mu$ by a map $h$; writing $h_i$ for $\pi_i \circ h$, we then show that $(f(u), h_1(u)) \in R_\tau^\chi$, $h_2(u) = \frac{d\mu}{d\mathcal{B}_\mathbb{T}}(u)$, and $h_3(u) = u$.

\begin{itemize}[leftmargin=*]
    \item $t = \mathbf{return}~t'$: We have $\mu = \delta_{\{\}}$ and $f = \lambda \_. \sem{t'}(\gamma)$. Examining the rule for $\chi$ on this expression, we see that $\nu$ is a pushforward of $\mu$ by $h(u) = (\sem{\chi\{t'\}}(\gamma'), 1, \{\})$. Note also that the only trace in the support of $\mu$ is $\{\}$, so $u$ must be $\{\}$. We establish the three required properties:
    \begin{enumerate}[leftmargin=*]
        \item We have $h_1(u) = \sem{\chi\{t'\}}(\gamma')$, which is related to $f(u) = \sem{t'}(\gamma)$ by the inductive hypothesis for $\Gamma \vdash t : \tau$.
        \item We have $h_2(u) = 1 = \frac{d\mu}{d\mathcal{B}_\mathbb{T}}(u)$ (since $u = \{\}$), as required.
        \item We have $h_3(u) = \{\} = u$, as required.
    \end{enumerate}
    \item $t = \mathbf{observe}~t_1~t_2$: Write $\sigma$ for the type of $t_2$, and let $m = \sem{t_1}(\gamma)$ and $v = \sem{t_2}(\gamma)$. By induction, $\pi_2(\sem{\chi\{t_1\}\}}(\gamma'))$ is $\frac{dm}{d\mathcal{B}_\sigma}$, and $\sem{\chi\{t_2\}}(\gamma')=v$. From the semantics, we see that $\mu = \frac{dm}{d\mathcal{B}_\sigma}(v) \odot \delta_{\{\}}$, and that $f = \lambda \_. ()$. From the definition of $\chi$, we see that the generated code first computes the deterministic constant $w = \pi_1(\sem{\chi\{t_1\}}(\gamma'))(\sem{\chi\{t_2\}}(\gamma')) = \frac{dm}{d\mathcal{B}_\sigma}(v)$. Then we can see that the code implements the pushforward of $\mu = w \odot \delta_{\{\}}$ by $h(u) = ((), w, \{\})$. Again, we know that $u$ must be $\{\}$, the only trace in the support of $\mu$. We then verify:
    \begin{enumerate}[leftmargin=*]
        \item We have $h_1(u) = ()$, which is related to $f(u) = ()$ in $R_1^\chi$.
        \item We have $h_2(u) = w = \frac{d\mu}{d\mathcal{B}_\mathbb{T}}(u)$ (since $u = \{\}$), as required.
        \item We have $h_3(u) = \{\} = u$, as required.
    \end{enumerate}
    \item $t = \mathbf{sample}~t_1~t_2$: Let $m = \sem{t_1}(\gamma)$ and $k = \sem{t_2}(\gamma) = \sem{\chi\{t_2\}}(\gamma')$ (where the last equality holds by the inductive hypothesis for $\Gamma \vdash t_2 : \str$). We have that $\mu$ is the pushforward of $m$ by the map $\lambda x. \{k \mapsto x\}$, and that $f(u) = x$ when $u$ is of the form $\{k \mapsto x\}$. We also have, by the inductive hypothesis for $\Gamma \vdash t_1 : D~\tau$, that $\pi_1(\sem{\chi\{t_1\}}(\gamma'))$ is the pushforward of $m$ by $\lambda x. (x, \frac{dm}{d\mathcal{B}_\tau}(x))$. Then, from the definition of $\chi$, we have that $\sem{\chi\{t\}}(\gamma')$ is the pushforward of $m$ by $\lambda x. (x, \frac{dm}{d\mathcal{B}_\tau}(x), \{k \mapsto x\})$, which is equivalently the pushforward of $\mu$ by $h = \lambda u. (f(u), \frac{dm}{d\mathcal{B}_\tau}(f(u)), u)$. Since $u$ is in the support of $\mu$, $u$ must have the form $\{k \mapsto x\}$ for some $x$. We verify:
    \begin{enumerate}
        \item We have $h_1(u) = f(u) = x$, and $(f(u), x) = (x, x) \in R_\tau^\chi$, because $\tau$ is a base type so $R_\tau^\chi$ is the equality relation.
        \item We have $h_2(u) = \frac{dm}{d\mathcal{B}_\tau}(x) = \frac{d\mu}{d\mathcal{B}_\mathbb{T}}(\{k \mapsto x\})$, as required.
        \item We have $h_3(u) = u$, as required.
    \end{enumerate}
    \item $t = \haskdo \{x \gets t'; m\}$: Let $(\mu', f') = \sem{t'}(\gamma)$, $k = \lambda v. \pi_1(\sem{\haskdo\{m\}}(\gamma[x \mapsto v]))$, and $f_k = \lambda v. \pi_2(\sem{\haskdo\{m\}}(\gamma[x \mapsto v]))$.
    By the inductive hypotheses for $\Gamma \vdash t' : G~\tau'$, we have that $\sem{\chi\{t'\}}(\gamma')$ is the pushforward of $\mu'$ by a map $h'(u')$, such that $(f'(u'), h'_1(u')) \in R_{\tau'}^\chi$, $h'_2(u') = \frac{d\mu'}{d\mathcal{B}_\mathbb{T}}(u')$, and $h'_3(u') = u'$. Furthermore, we have that for any $u'$ in the support of $\mu'$, $\sem{\chi\{\haskdo\{m\}\}}(\gamma'[x \mapsto h_1(u')])$ is the pushforward of $k(f'(u'))$ by a function $h^{u'}$, such that for all $u_k$ in the support of $k(f'(u'))$, $(f_k(f'(u'))(u_k), h^{u'}_1(u_k)) \in R_\tau^\chi$, $h^{u'}_2(u_k) = \frac{dk(f'(u'))}{d\mathcal{B}_\mathbb{T}}(u_k)$, and $h_3^{u'}(u_k) = u_k$. Examining $\chi$'s definition for $\haskdo$, we see that $\nu$ is the pushforward of $\mu$ by $h(u) = (h_1^{u'}(u_k), h_2^{u'}(u_k), h_3^{u'}(u_k))$, where $(u', u_k) = \textit{split}_{\mu'}(u)$. We verify:
    \begin{enumerate}
        \item We have $h_1(u) = h_1^{u'}(u_k)$, which we saw above is related by $R_\tau^\chi$ to $f(u) = f_k(f'(u'))(u_k)$.
        \item We have $h_2(u) = h_2^{u'}(u_k)$, which, from above, is equal to $\frac{d\mu'}{d\mathcal{B}_\mathbb{T}}(u') \cdot \frac{dk(f'(u'))}{d\mathcal{B}_\mathbb{T}}(u_k)$. This in turn is equal to $\frac{d\mu}{d\mathcal{B}_\mathbb{T}}(u)$, for $u$ in the support of $\mu$.
        \item We have $h_3(u) = h'_3(u') \mdoubleplus h^{u'}_3(u_k) = u' \mdoubleplus u_k = u$, as required.
    \end{enumerate}
\end{itemize}

\end{itemize}
\end{proof}

\section{Adding New Primitives with Custom Gradient Estimators}
\label{appdx:extensibility}
In this appendix, we explain the process of extending our system with a new stochastic primitive, with its own custom gradient estimation strategy. In particular, we discuss the local proof obligations that the implementer incurs, to ensure that their new primitives do not compromise the unbiasedness results in our system.

As our illustrative example, we consider $\mathbf{poisson}_{\text{REINFORCE}} : \RR_{>0} \to D~\mathbb{N}$, a version of the Poisson distribution that uses REINFORCE to estimate gradients. To add the new primitive, the user must write three pieces of code, each with its own proof obligation:
\begin{itemize}[leftmargin=*]
    \item The \textit{density function}: we must define $\xi\{\mathbf{poisson}_\text{REINFORCE}\}$, an implementation (in $\lambda_\text{ADEV}$) of the density function for the new primitive. The proof of Lemma~\ref{lem:fundamental-xi} must be extended with a new case, showing that for any argument $\textit{rate} \in \RR_{>0}$, $\sem{\mathbf{poisson}_\text{REINFORCE}}(\textit{rate})$ has density $\sem{\xi\{\mathbf{poisson}_\text{REINFORCE}\}}(\textit{rate})$ with respect to the base measure $\mathcal{B}_\mathbb{N}$. 

    \item The \textit{simulator}: we must define $\chi\{\mathbf{poisson}_\text{REINFORCE}\}$, an implementation (in $\lambda_\text{ADEV}$) of the simulator for the new primitive. The proof of Lemma~\ref{lem:fundamental-chi} must be extended with a new case, showing that for any argument $\textit{rate} \in \RR_{>0}$, $\pi_1(\sem{\chi\{\mathbf{poisson}_\text{REINFORCE}\}}(\textit{rate}))$ is the pushforward of $\sem{\mathbf{poisson}_\text{REINFORCE}}(\textit{rate})$ by $\lambda n. (n, \rho(n))$, where $\rho$ is the density of $\sem{\mathbf{poisson}_\text{REINFORCE}}(\textit{rate})$ with respect to $\mathcal{B}_\mathbb{N}$. Note that the simulator may be implemented in terms of existing $\lambda_\text{ADEV}$ primitives, or may itself use $\mathbf{poisson}_\text{REINFORCE}$. (In our actual system, every $\lambda_\text{ADEV}$ primitive distribution must also provide a more basic piece of code that actually uses a PRNG to simulate from the distribution in question, but this is not part of our formal development.)

    \item The \textit{gradient estimator}: finally, we must define $D\{\mathbf{poisson}_\text{REINFORCE}\}$, an implementation (in $\lambda_\text{ADEV}$) of a gradient estimator for the new primitive. The proof of Theorem~\ref{thm:unbiased-adev} must then be extended with a new case. In particular, we must prove the following:

    \textit{
    Let $h : \RR \to \RR$ be differentiable, and let $f : \RR \to \mathbb{N} \to \RR$ be such that $f(-, n)$ is differentiable in its first argument for all $n$. Further let $\tilde{f} : (\RR \times \RR) \times \mathbb{N} \to \text{Prob} (\RR \times \RR)$, such that $\mathbb{E}_{(x, y) \sim \tilde{f}((h(\theta), h'(\theta)), n)}[x] = f(h(\theta), n)$ and $\mathbb{E}_{(x, y) \sim \tilde{f}((h(\theta), h'(\theta)), n)}[y] = \frac{d}{d\theta} f((h(\theta)), n)$. Then, writing $\mu(\theta)$ for $\pi_1(\sem{D\{\mathbf{poisson}_\text{REINFORCE}\}}((h(\theta), h'(\theta)), \tilde{f}(h(\theta), h'(\theta))))$, we must show that $\mathbb{E}_{(x,y) \sim \mu(\theta)}[x] = \mathbb{E}_{n \sim \textbf{poisson}(h(\theta))}[f(h(\theta), n)]$, and $\mathbb{E}_{(x,y) \sim \mu(\theta)}[y] = \frac{d}{d\theta}\mathbb{E}_{n \sim \textbf{poisson}(h(\theta))}[f(h(\theta), n)]$.
    }
\end{itemize}

\end{document}